\documentclass[english,11pt,a4paper]{article}
\usepackage[english]{babel}
\usepackage{latexsym,amssymb,amsmath,amsfonts,amscd,epsfig,amsthm,color,mathrsfs,mdframed,mleftright}
\usepackage{aliascnt}
\usepackage{graphicx}
\usepackage{tikz}
\usepackage{quantikz}
\usepackage{algorithm}
\usepackage{algpseudocode}
\usepackage{float}
\usepackage[left=2.3cm,right=2.3cm,top=2.3cm,bottom=2.3cm]{geometry}
\usepackage{ifthen}
\usepackage{authblk}
\usepackage{colonequals}
\usepackage{array}
\usepackage{pgfplots}
\usepgfplotslibrary{groupplots}
\pgfplotsset{compat=1.18}
\usetikzlibrary{decorations.pathreplacing}
\usepackage{comment}

\makeatletter
\newtheorem*{rep@theorem}{\rep@title}
\newcommand{\newreptheorem}[2]{%
\newenvironment{rep#1}[1]{%
 \def\rep@title{#2 \ref*{##1}}%
 \begin{rep@theorem}}%
 {\end{rep@theorem}}}
\makeatother

\usepackage{hyperref}
\hypersetup{
    bookmarksnumbered=true, 
    unicode=false, 
    pdfstartview={FitH}, 
    pdftitle={Optimal Quantum Algorithm for Ground-State Energy Estimation with a Guiding State}, 
    pdfauthor={Stacey Jeffery and Freek Witteveen}, 
    pdfsubject={}, 
    pdfcreator={}, 
    pdfproducer={}, 
    pdfkeywords={}, 
    pdfnewwindow=true, 
    colorlinks=true, 
    linkcolor=blue, 
    citecolor=blue, 
    filecolor=blue, 
    urlcolor=blue 
}

\newcommand{\SJ}[1]{}
\newcommand{\FW}[1]{}

\usepackage[capitalize,nameinlink]{cleveref}

\newtheorem{theorem}{Theorem}[section]

\newaliascnt{definition}{theorem}
\newtheorem{definition}[definition]{Definition}
\aliascntresetthe{definition}
\Crefname{definition}{Definition}{Definitions}

\newaliascnt{lemma}{theorem}
\newtheorem{lemma}[lemma]{Lemma}
\aliascntresetthe{lemma}
\Crefname{lemma}{Lemma}{Lemmas}

\newaliascnt{proposition}{theorem}

\aliascntresetthe{proposition}
\Crefname{proposition}{Proposition}{Propositions}

\newaliascnt{corollary}{theorem}
\newtheorem{corollary}[corollary]{Corollary}
\aliascntresetthe{corollary}
\Crefname{corollary}{Corollary}{Corollaries}

\newaliascnt{claim}{theorem}

\aliascntresetthe{claim}
\Crefname{claim}{Claim}{Claims}

\newaliascnt{example}{theorem}

\aliascntresetthe{example}
\Crefname{example}{Example}{Examples}

\newaliascnt{conjecture}{theorem}

\aliascntresetthe{conjecture}
\Crefname{conjecture}{Conjecture}{Conjectures}

\newaliascnt{aside}{theorem}

\aliascntresetthe{aside}
\Crefname{aside}{Aside}{Asides}

\newaliascnt{remark}{theorem}
\newtheorem{remark}[remark]{Remark}
\aliascntresetthe{remark}
\Crefname{remark}{Remark}{Remarks}

\def\bigO#1{O\mleft(#1\mright)}

\def\ket#1{{\lvert}#1\rangle}

\def\bra#1{{\langle}#1\rvert}
\def\braket#1#2{{{\langle}#1\vert}#2\rangle}

\def\ceil#1{{\lceil}#1\rceil}

\def\norm#1{\left\| #1 \right\|}

\newcommand{\eps}{\varepsilon}
\newcommand{\R}{\mathbb{R}}
\newcommand{\CC}{\mathbb{C}}

\newcommand{\NN}{\mathbb{N}}

\newcommand{\defeq}{\colonequals}
\newcommand{\ketbra}[2]{|{#1}\rangle\!\langle{#2}|}
\newcommand{\ot}{\otimes}
\newcommand{\op}{\oplus}

\title{Optimal Quantum Algorithm for\\ Ground-State Energy Estimation with a Guiding State}
\author[1,2]{Stacey Jeffery}
\author[1]{Freek Witteveen}
\affil[1]{QuSoft \& CWI, Amsterdam}
\affil[2]{QLever \& University of Amsterdam}

\date{}

\begin{document}

\maketitle

\vspace{-35pt}

\begin{abstract}
In the problem of ground-state energy estimation, one aims to estimate the smallest eigenvalue of a Hamiltonian, often given a guiding state, with some promised overlap $\gamma$ with the ground space. The main approach to this problem is to simulate its evolution, and estimate the smallest (or equivalently, largest) eigenphase of the resulting unitary $U$. We give a quantum algorithm that estimates the largest eigenphase of a unitary $U$ in this guided setting using a factor of $\log\frac{1}{\gamma}$ fewer queries to $U$ than the previous best approach. The result matches an existing lower bound, and answers an open question from Mande and de Wolf.  The algorithm is based on transducers, which often allow composition of quantum algorithms without overhead from error reduction.

\end{abstract}

\section{Introduction}

One of the most promising applications of quantum computers is the simulation of complex quantum systems that are relevant to physics and chemistry. A central task in this domain is to estimate the ground state energy of a Hamiltonian.
There are various approaches to solving this on a quantum computer. For fault-tolerant quantum computation, the standard paradigm for computing ground state energies is based on quantum phase estimation (QPE), \cite{kitaev1996PhaseEst,abrams1999quantum}, see \cite{bauer2020quantum,mcardle2020quantum} for a discussion of the application to computational physics and chemistry. Here, we consider the problem where we would like to know the ground state energy $E_0$ of a Hamiltonian $H$ to precision $\delta$, and we assume that we are given the ability to prepare a state $\ket{\psi}$ that has overlap $\gamma$ with the ground state. The computational problem is the \emph{guided ground state energy problem}:

\begin{mdframed}
Given a Hamiltonian $H$ with eigen-decomposition
\begin{align*}
    H = \sum_{k=0}^{K-1} E_k \Pi_k, \qquad E_0 < E_1 < \dots < E_{K-1}
\end{align*}
and a state-preparation unitary $A$ that prepares a state $\ket{\psi}$ such that $\norm{\Pi_0 \ket{\psi}} \geq \gamma > 0$, estimate $E_0$ to precision $\delta > 0$.
\end{mdframed}

Here, we have not specified how we have access to the Hamiltonian. What matters is that one can implement unitary time evolution on the quantum computer (which is possible in various access models to $H$), or encode the spectrum of $H$ in a different way in a unitary operator.
If we let $U = e^{-iH}$ be the time-evolution unitary, or alternatively if we let $U$ be the qubitized walk operator associated to $H$, the guided ground state energy problem directly reduces to the following \emph{max eigenphase estimation} problem:

\begin{mdframed}
Given access to a unitary $U$ with eigen-decomposition
\begin{align*}
    U = \sum_{k=0}^{K-1} e^{ix_k} \Pi_k, \qquad x_0 > x_1 > \dots > x_{K-1} > 0
\end{align*}
and a state-preparation unitary $A$ that prepares a state $\ket{\psi}$ such that $\norm{\Pi_0 \ket{\psi}} \geq \gamma > 0$, estimate $x_0$ to precision $\delta > 0$.
\end{mdframed}

A basic method to solve this problem would be to run quantum phase estimation with $\ket{\psi}$ as the initial state to precision $\delta$. Then, \emph{ignoring discretization error} one obtains an estimate of $x_k$ with probability $\norm{\Pi_k \ket{\psi}}^2$. This uses $\bigO{\delta^{-1}}$ calls to the unitary $U$. One can then repeat this $\bigO{\gamma^{-2}}$ times to obtain an estimate of the maximum eigenphase $x_k$. Instead of repeating, one can use amplitude amplification to reduce the scaling quadratically in $\gamma$. However, there is one subtlety: one cannot quite ignore the effect of discretization, which introduces an error in QPE. Previous work needed to reduce this error to be of the order of $\gamma$, causing a logarithmic overhead in $\gamma$, given a total complexity of $\bigO{\frac{1}{\gamma\delta}\log \frac{1}{\gamma}}$ \cite{mande2023tight,gilyen2018QSingValTransfThesis,Lin2020nearoptimalground,martyn2021grand}. On the other hand, \cite{mande2023tight} shows a query lower bound $\Omega\mleft(\frac{1}{\gamma \delta}\mright)$, leaving a logarithmic gap between upper and lower bounds, and leaves the precise asymptotic query complexity as an open question. Given the fundamental nature of the task, and its importance for practical applications of quantum computing, we find it of interest to close this gap, and this is the main result of this work. Specifically, we prove the following.
\begin{mdframed}
\begin{theorem}\label{thm:main-intro}
    There exists a quantum algorithm that solves the maximum eigenphase estimation problem (and thereby the guided ground state energy problem) to precision $\delta$ and with advice state overlap $\gamma$ using 
    \begin{center}
        $\bigO{\frac{1}{\gamma} \log \frac{1}{\delta}\log\log\frac{1}{\delta}}$ controlled calls to $A$, and $A^\dagger$, $\quad\bigO{\frac{1}{\gamma \delta}}$ controlled calls to $U$, and $U^\dagger$, 
    \end{center}
    and $\bigO{\frac{1}{\gamma \delta} \log \frac{1}{\gamma \delta}+\frac{1}{\gamma}\log N \log \frac{1}{\delta}\log\log\frac{1}{\delta}}$ one- and two-qubit gates and classical operations, where $N$ is the dimension of the system.
\end{theorem}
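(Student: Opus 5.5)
The algorithm reduces estimating $x_0$ to a binary search over threshold tests. Each test decides whether a given threshold lies below the largest eigenphase that $\ket{\psi}$ has non-negligible weight on. The key is to implement each test so that its cost is $\bigO{\frac{1}{\gamma\delta'}}$ calls to $U$ at resolution $\delta'$, with no $\log\frac{1}{\gamma}$ overhead, and to arrange the search so the total is dominated by a geometric series. The phrase "based on transducers" in the abstract suggests that transducers let us compose a phase-estimation-type subroutine with amplitude amplification without first reducing the subroutine's error to $\bigO{\gamma}$. That error reduction is exactly where the $\log\frac{1}{\gamma}$ factor comes from.

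\textbf{Threshold test.} For a threshold $\theta$ and resolution $\epsilon$, we want to distinguish:
\begin{itemize}
\item[(i)] some $x_k\geq \theta$ has $\norm{\Pi_k\ket{\psi}}\geq\gamma$ (in particular, when $x_0\geq\theta$);
\item[(ii)] $x_0\leq\theta-\epsilon$.
\end{itemize}
To build it, apply a phase-checking or QPE-style unitary $W_\theta$ using $\bigO{1/\epsilon}$ calls to $U$. It marks eigenvectors with phase $\geq\theta$ with amplitude close to $1$, and marks those with phase $\leq\theta-\epsilon$ with only small amplitude. Composing $W_\theta$ with $A$ gives a state whose marked amplitude is at least of order $\gamma$ in case (i) and small in case (ii). Amplitude estimation or amplification then distinguishes the cases with $\bigO{1/\gamma}$ repetitions.

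The naive analysis requires the leakage of $W_\theta$ on phases below $\theta-\epsilon$ to be $\ll\gamma$, which costs $\log\frac1\gamma$ in the filter degree. Instead, I would express $W_\theta$ as a transducer, i.e.\ a unitary whose action on the state space has a fixed-point "transduced" action, and then use the transducer composition theorem. The composed algorithm's query cost is then governed by the product of the outer algorithm's cost ($\bigO{1/\gamma}$ for amplitude amplification) and the subroutine's average query complexity with respect to the actual input state ($\bigO{1/\epsilon}$), not its worst-case error-reduced cost. Concretely, the claim to establish is that the transducer for the $\epsilon$-resolution phase test has query cost $\bigO{1/\epsilon}$ on every eigenvector, and that its transduced action is an exact reflection-like map up to error that only affects the final constant success probability. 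This gives a test with $\bigO{\frac{1}{\gamma\epsilon}}$ calls to $U$ and $\bigO{1/\gamma}$ calls to $A$, succeeding with constant probability.

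\textbf{Search.} Run a binary (or ternary, with overlapping intervals) search on the interval $[0,2\pi]$ at resolutions $\epsilon_j=2^{-j}$ for $j=1,\dots,\log\frac1\delta$. At level $j$ the test costs $\bigO{\frac{1}{\gamma\epsilon_j}}$, so the $U$-costs form a geometric series summing to $\bigO{\frac{1}{\gamma\delta}}$.

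The soundness of search with inexact tests needs each level to succeed with probability $1-\bigO{1/j^2}$ or so. That requires $\log\log\frac1\delta$ repetitions per level, or a noisy-binary-search argument. Boosting only the calls to $A$ costs $\bigO{\frac1\gamma\log\frac1\delta\log\log\frac1\delta}$, matching the statement. For $U$, I would boost success more aggressively at coarse levels and less at fine levels, e.g.\ $\bigO{\log(1/\epsilon_j\delta)}$-fold at level $j$. The sum $\sum_j 2^j\log(2^{J-j})$ is still $\bigO{2^J}$, which keeps the total $U$-cost at $\bigO{\frac{1}{\gamma\delta}}$.

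\textbf{Gates.} The gate count follows from QFT and comparator circuits of size $\log\frac{1}{\gamma\delta}$ per call to $U$, plus the reflections about $A\ket{0}$ on $\log N$ qubits, performed once per call to $A$.

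\textbf{Main obstacle.} I expect the hard step to be the transducer analysis. Specifically, we must show that the phase-threshold transducer has $\bigO{1/\epsilon}$ query complexity in the relevant weighted sense, even though its error is only constant rather than $\gamma$. We must also show that composing it inside amplitude amplification yields correct behavior on a superposition of eigenvectors: the ones near the threshold, in the window $(\theta-\epsilon,\theta)$, must not corrupt the outcome. My first attempt would be to define the test so that eigenvectors in the window are "don't care", accept either answer there, and then bound the composed transducer's deviation via the existing error bound for transducer composition.
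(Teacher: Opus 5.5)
Your architecture matches the paper's: a decision-amplitude-amplification transducer composed with a phase-threshold transducer, turned into a bounded-error test, then a shrinking-interval search. However, the step you defer as ``the claim to establish'' is the whole technical content, and the route you sketch for it fails.

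\textbf{The phase-threshold transducer must be exact.} A QPE-style $W_\theta$ with $\bigO{1/\epsilon}$ calls to $U$ has constant leakage on eigenvectors below the threshold. Recasting it as a transducer does not remove that leakage: composition only guarantees that the composed transduced action equals the outer action with the oracle replaced by the inner transduced action. If the inner action is a reflection only up to constant error per eigenvector, the amplitude-amplification layer sees a spurious marked component in the negative case. That layer must separate zero marked amplitude from amplitude $\gamma$ (its catalyst puts weight about $1/\gamma$ on the oracle). So the inner error must be pushed down to order $\gamma$, which is exactly the $\log\frac{1}{\gamma}$ you want to avoid. The error therefore does not merely ``affect the final constant success probability''.

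What is needed, and what the paper builds, is an \emph{exact} transducer for decision phase estimation. For each eigenvector it uses an infinite weighted path with weights $\gamma_k^{\ell-1}$, where $\gamma_k=\tan\frac{\theta_k}{2}$ and $\theta_k=x_k+\frac{\pi}{2}-\tilde s$. Each of its two reflections uses one controlled $U$ and one controlled $U^\dagger$ through $R(U)=\ket{-}\bra{+}\otimes e^{i\sigma}U+\ket{+}\bra{-}\otimes e^{-i\sigma}U^\dagger$, plus an incrementer. It transduces $\ket{1}\ket{\psi_k}\rightsquigarrow\pm\ket{1}\ket{\psi_k}$ with no error, via geometric-series catalysts of squared norm $\frac{1}{2}(\frac{1}{\sin|x_k-\tilde s|}-1)$. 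This is not $\bigO{1/\epsilon}$ on every eigenvector, as you claim; it diverges at the threshold.

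\textbf{The window needs no deviation bound.} Set $\tilde s=s-\delta/2$. In the negative case every eigenvector carrying weight is at distance at least $\delta/2$ from $\tilde s$. In the positive case the catalyst is supported only on components with $x_k\geq s$, and its level-$0$ and level-$1$ part is orthogonal to $\ket{\varphi_\gamma(A)}$, so the oracle fixes it. This is also why you need the window promise (phases in $[0,\pi/2]$) rather than a search over $[0,2\pi]$: the catalysts converge only when $x_k$ lies in the half-circle on the corresponding side of $\tilde s$.

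\textbf{Calls to $A$.} $\bigO{1/\gamma}$ calls to $A$ per test does not follow from composition alone. A transducer that calls $A$ in every step, converted to an algorithm, calls $A$ as often as $U$. The paper makes the composed transducer canonical with respect to $O_\gamma(A)$, adding the boundary edge $\ket{\bar{1},0}$ so the public space is orthogonal to where the oracle acts. It then bounds the Las Vegas complexity $\norm{\Pi_{{\cal L}^\bullet}\ket{v}}^2\leq 1+\frac{1}{\gamma}$ in both cases and applies the canonical-form implementation theorem.

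\textbf{Boosting schedule.} You cannot boost the $A$-calls and the $U$-calls on different schedules, since each repetition of the test uses both. Uniform $\log\log\frac{1}{\delta}$ repetitions costs an extra $\log\log\frac{1}{\delta}$ factor on $U$. Repeating $(J-j)$-fold at level $j$ costs $\bigO{\frac{1}{\gamma}\log^2\frac{1}{\delta}}$ calls to $A$. The single schedule that gives both bounds is the paper's: error $\frac{1}{5k^2}$, where $k$ is the number of \emph{remaining} rounds. Then the $U$-cost is $\bigO{\frac{1}{\gamma\delta}\sum_k(2/3)^{k}\log(5k^2)}=\bigO{\frac{1}{\gamma\delta}}$, and the $A$-cost is $\bigO{\frac{1}{\gamma}\sum_{k\leq R}\log k}=\bigO{\frac{1}{\gamma}R\log R}$.
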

\end{mdframed}
This result is restated and proven as \cref{thm:main}.
This number of calls to $U$ is optimal, as it matches the lower bound from \cite{mande2023tight}, whereas the number of calls to $A$ is optimal up to the log factors.
The basis for the algorithm in \cref{thm:main-intro} is a quantum algorithm for a decision version of the task: given $U$ and $A$, and parameters $s$ and $\delta'$, decide between $x_0 \geq s$, or $x_0 \leq s - \delta'$. If one can solve this threshold problem, which we call the \emph{max eigenphase threshold problem} one can perform a binary search to determine $x_0$ to precision $\delta$. In most applications (e.g. when computing ground state energies of Hamiltonians), when compiling to a quantum circuit the dominant cost is the implementation of $U$. The state-preparation unitary can also be expensive to implement (but is used a factor $\delta$ less by the algorithm). The additional elementary gates and operations our algorithm uses will in most cases be a subleading contribution -- for example, $\log N$ can be assumed to be dominated by the cost of $A$, which generates a state on this many qubits. The algorithm in \cref{thm:main-intro} uses an additional $\bigO{\log \frac{1}{\gamma \delta}}$ qubits (apart from the system of dimension $N$ on which $U$ and $A$ act).

\subsection{Previous approaches}
Previous approaches required a $\log \frac{1}{\gamma}$ overhead in the complexity of computing max eigenphases.
We now describe two broad approaches to this problem, and explain why they lead to this logarithmic overhead.

First of all, there is the standard quantum phase estimation algorithm. Here, the issue is that when using $n$-bit phase estimation, there is a discretization error.
If the phase to be estimated does not have an exact representation with $n$ bits, then there is some probability of obtaining an estimate that is far from the true value. This cannot be avoided \cite{mande2023tight}. If one uses phase estimation as a subroutine to find the maximum eigenphase, one needs to suppress the error. The error needs to be smaller than the true signal of the maximum eigenphase, and this leads to a $\log \frac{1}{\gamma}$ overhead. Details can be found in \cite{mande2023tight}.

There is also a general approach to the problem based on quantum singular value transformations (QSVT)
\cite{gilyen2018QSingValTransfThesis,Lin2020nearoptimalground,martyn2021grand,dong2022ground}, which also suffers a $\log \frac{1}{\gamma}$-factor in the number of calls to the unitary. In this framework, the reason for the appearance of this factor is the following. QSVT is used to solve the max eigenphase threshold problem described above. To that end, it would like to implement an `ideal' function which behaves like a step function that projects onto all eigenspaces with $x_k \geq s$. However, it needs to approximate this by a polynomial. There are two parameters in this approximation: the width of the band in which the polynomial switches from being nearly zero to being nearly one, and the size of the ripples around the target values zero and one. To solve the energy threshold problem to precision $\delta$, the width of the transition should be at most $\delta$, but also the size of the ripples should not be larger than $\gamma$ (otherwise we cannot distinguish between a component of $\bigO{\gamma}$ in the target interval, and a $\bigO{1}$ component in its complement). To achieve such a polynomial approximation, a degree of $\Theta(\delta^{-1} \log\frac{1}{\gamma})$ is necessary and sufficient.

These challenges are illustrated in \cref{fig:qpe-log-overhead}.
Both for the standard phase estimation and for the QSVT framework, it was not known how to get rid of the factor $\log \frac{1}{\gamma}$, and we believe that prior to this work it was unclear whether one could avoid it at all.

\begin{figure}[t]
\centering
\begin{tikzpicture}
\begin{groupplot}[
    group style={group size=2 by 1, horizontal sep=1.6cm},
    width=0.47\textwidth,
    height=6.0cm,
    axis lines=left,
    clip=false,
    tick label style={font=\small},
    label style={font=\small},
    title style={font=\small},
]

\nextgroupplot[
    ymin=0, ymax=0.75,
    xmin=-0.02, xmax=1.02,
    xlabel={output estimate},
    ylabel={probability},
    xtick={0,0.25,0.5,0.75,1},
    xticklabels={$0$,$1/4$,$1/2$,$3/4$,$1$},
    ytick={0,0.2,0.4,0.6},
    title={(a) Standard QPE},
]

\addplot+[
    ybar,
    bar width=1.5pt,
    draw=black,
    fill=black!30
] coordinates {
    (0.0000,0.003906)
    (0.0625,0.005183)
    (0.1250,0.007905)
    (0.1875,0.014976)
    (0.2500,0.043735)
    (0.3125,0.684895)
    (0.3750,0.171959)
    (0.4375,0.028355)
    (0.5000,0.011719)
    (0.5625,0.006739)
    (0.6250,0.004655)
    (0.6875,0.003642)
    (0.7500,0.003140)
    (0.8125,0.002942)
    (0.8750,0.002980)
    (0.9375,0.003267)
};

\addplot[dashed, thick] coordinates {(0.3333,0) (0.3333,0.73)};
\node[anchor=south] at (axis cs:0.3333,0.73) {$x$};

\nextgroupplot[
    ymin=-0.18, ymax=1.08,
    xmin=-1, xmax=1,
    xlabel={eigenphase},
    ylabel={$p(x)$},
    ytick={0,0.08,0.92,1},
    yticklabels={$0$,$\gamma$,$1-\gamma$,$1$},
    title={(b) QSVT threshold polynomial},
    declare function={
        g=0.08;      
        d=0.46;      
        w=18;        
        t(\x)=(\x + d/2)/d;
        smoothstep(\x)=3*t(\x)^2 - 2*t(\x)^3;
        leftbranch(\x)=g/2*(1 - cos(deg(w*(\x + d/2))));
        rightbranch(\x)=1 - g/2*(1 - cos(deg(w*(\x - d/2))));
        poly(\x)= (\x < -d/2) ? leftbranch(\x)
                 : ((\x > d/2) ? rightbranch(\x)
                 : smoothstep(\x));
    },
]

\addplot[
    draw=none,
    fill=black!10
] coordinates {
    (-0.23,-0.18)
    ( 0.23,-0.18)
    ( 0.23, 1.08)
    (-0.23, 1.08)
} -- cycle;

\addplot[densely dotted, thin] coordinates {(-1,0.08) (1,0.08)};
\addplot[densely dotted, thin] coordinates {(-1,0.92) (1,0.92)};

\addplot[dashed, very thick] coordinates {
    (-1,0) (0,0)
    (0,1)
    (1,1)
};

\addplot[solid, very thick, domain=-1:1, samples=300] {poly(x)};

\draw[<->, thick]
    (axis cs:-0.23,-0.07) -- (axis cs:0.23,-0.07);
\node at (axis cs:0,-0.135) {$\delta$};

\end{groupplot}
\end{tikzpicture}
\caption{
An illustration of why previous approaches incur a $\log(1/\gamma)$ overhead.
\textbf{(a)} In $n$-bit quantum phase estimation, if the true phase $x$ does not have an exact $n$-bit binary representation, then the measurement distribution is spread over multiple outcomes and has nonzero tails. Suppressing the probability of a bad estimate below the signal level $\gamma$ leads to logarithmic overhead.
\textbf{(b)} In a QSVT-based approach, one approximates an ideal step function by a polynomial. The transition width must be at most $\delta$, while the ripple size away from the transition must be at most $\gamma$. Achieving both requires degree $\Theta(\delta^{-1}\log(1/\gamma))$. 
}
\label{fig:qpe-log-overhead}
\end{figure}

\subsection{Methods: transducers for phase estimation}

Transducers are a framework for designing and analyzing quantum algorithms, developed in \cite{belovs2024transducers} by combining and building on~\cite{belovs2023LasVegas,jeffery2022subroutines}; see also \cite{belovs2024purifier,jeffery2026quantum,apers2026elfs,chen2026time} for applications. Transducers are a model of quantum computation that, in particular, allows for the composition of quantum subroutines in a way that efficiently handles variable-complexity subroutines and error reduction for quantum subroutines. We give a brief review of transducers in \cref{sec:transducers-prelim}. For now, we just remark that transducers are useful for getting rid of (typically logarithmic) overheads that arise from error reduction. Indeed, suppose one has a quantum subroutine with bounded error, and suppose this subroutine is used $n$ times in a quantum algorithm. To ensure that the algorithm has constant probability of success, naively one needs to perform error reduction on the subroutine so it has probability of error at most $\bigO{\frac{1}{n^2}}$, which ensures that the $\bigO{\frac{1}{n}}$ operator norm errors induced by each imperfect subroutine call adds up to a constant. This error reduction comes at overhead $\bigO{\log n}$.

However, transducers provide an alternative approach to composing subroutines that avoids this overhead. Here, a bounded-error quantum algorithm is modeled by a different object (the transducer) that composes without error, and after composition one can convert back to a quantum algorithm with bounded error. In fact, given a quantum algorithm ${\cal A}$ that outputs some $z\in [q]$ such that there exists $z^*\in [q]$ whose output probability is at least $\frac{1}{2}+\delta$, there exists a transducer, called a \emph{purifier} that uses $O(\frac{1}{\delta})$ calls to ${\cal A}$ to compute $z^*$ \emph{exactly} in the model of transducers. An exact, error-free transducer, when converted to a quantum circuit, will have bounded error, so this is not a recipe for removing error. However, if ${\cal A}$ will be called $n$ times as a subroutine, as described above, we have removed the need for $O(\log\frac{1}{n})$ repetitions, thus saving a $O(\log\frac{1}{n})$ factor. A natural approach then to remove the $\log\frac{1}{\gamma}$ factor in max phase estimation is to purify phase estimation to get a (exact) transducer and compose it with a transducer for amplitude estimation. This does not quite work, because the phase estimation algorithm does not have the property that there exists one estimate $z^*$ that is output with probability $1/2+\delta$. 

However, we are nonetheless able to design an exact transducer for a decision version of phase estimation: given access to a unitary $U$, an eigenstate $\ket{\psi}$ with eigenphase $x$ (so $U\ket{\psi} = e^{ix} \ket{\psi}$), it decides whether $x < s$ or $x>s$ for some threshold value $s$. In addition, we give a simple transducer for a decision version of amplitude amplification: given a state $\ket{\psi}$ and a projection onto a marked subspace $\Pi_M$, it decides whether $\norm{\Pi_M\ket{\psi}} \geq \gamma$, or $\Pi_M \ket{\psi} = 0$. 
Composing these two transducers gives a transducer for the threshold max phase estimation problem: We again consider a unitary $U$ and a state $\ket{\psi}$ (that now need not be an eigenstate), and we let $\Pi_{> t}$, $\Pi_{< t}$ denote projections onto the eigenspaces with eigenphases above or below $t$ respectively. The max eigenphase threshold problem decides between $\norm{\Pi_{>s} \ket{\psi}} \geq \gamma$, or $\Pi_{>s - \delta} \ket{\psi} = 0$, for some parameters $s, \gamma, \delta > 0$.
This transducer gives a quantum algorithm solving the max eigenphase threshold problem using $\bigO{\frac{1}{\gamma \delta}}$ queries, which we can use to do a binary search to obtain \cref{thm:main-intro}.

The transducers we describe for (decision) amplitude amplification (in \Cref{sec:decision-aa-transducer} and (decision) phase estimation (in \Cref{sec:decision-phase-est-transducer}) may be of interest in their own right. While any quantum algorithm can be mapped to a transducer, this generic mapping has overheads; for example, the most general results require quantum random access gates in order to perform different parts of the program in superposition. Here we avoid the need for quantum random access gates by finding the transducers that exploit the problem structure.

\subsection{Related Work}

In \cite{mande2023tight}, it was shown that the number of queries to $U$ needed to solve the maximum eigenphase estimation problem with success probability $1 - \eps$ is $\Omega(\frac{1}{\gamma\delta}\log\frac{1}{\eps})$, as long as one of the three parameters, $\eps$, $\delta$ or $\gamma$, is constant. Simultaneously with our work, \cite{somma2026lowerbound} showed that this lower bound holds in all three variables, as long as $N\geq \frac{1}{\gamma^2}\log\frac{1}{\eps}$. Our \Cref{thm:main-intro} implies an upper bound of $\bigO{\frac{1}{\gamma\delta}\log\frac{1}{\eps}}$ on the number of calls to $U$, in the $\eps$-error setting: simply run the algorithm from \Cref{thm:main-intro} $O(\log\frac{1}{\eps})$ times and take the median. Thus, our work, combined with \cite{somma2026lowerbound} shows that the number of queries to $U$ needed to solve maximum eigenphase estimation with error $\eps$ is $\Theta(\frac{1}{\gamma\delta}\log\frac{1}{\eps})$, when $N$ is sufficiently large. Ref.~\cite{somma2026lowerbound} also shows that this lower bound holds in the special case where there is a unique ground state, and a spectral gap at least $\delta$. 

\subsection{Outlook}
We derive a query-optimal algorithm for the guided ground state energy problem (or the guided max eigenphase problem) in terms of queries to $U$. The improvement compared to previous work is logarithmic in one over the ground state overlap $\gamma$. An interesting question for future work is whether this leads to faster algorithms for parameter and problem settings relevant to applications in physics and chemistry. We have not attempted to optimize the constant factors in the algorithm design and analysis. However, the algorithm we describe is not particularly complicated, and does not introduce large constant overheads, so it is not unlikely to be competitive with state-of-the-art approaches and resource estimates. 

The approach in this work has a space overhead of $\log\frac{1}{\gamma \delta}$ qubits, while the standard approach only needs $\log\frac{1}{\delta}$ extra qubits for phase estimation and amplitude amplification. Can the transducer approach be modified to also further reduce the space complexity?

Since quantum phase estimation is a central subroutine in many algorithms, we expect that using transducers, it is possible to remove similar logarithmic factors in other algorithms as well. One interesting aspect of this is the question to what extent the algorithm can be made fully coherent, where it is known that there are some limitations if one does not make any assumptions on the eigenphases (see \cite{rall2021faster} and the discussion of a rounding promise for making phase estimation fully coherent). We leave a systematic study to future work.

\subsection{Acknowledgments: Colleagues, Funding and AI}

We thank Rolando D.~Somma and Ronald de Wolf for helpful comments, sharing an early draft of~\cite{somma2026lowerbound}, and coordinating arXiv submission. 

This work is co-funded by the European Union (ERC, ASC-Q, 101040624); and Divide \& Quantum  (with project number 1389.20.241) of the research programme NWA-ORC, which is (partly) financed by the Dutch Research Council (NWO). 

LLMs were used in this work for some computations, proofs, and figures. All high-level ideas came from the authors. We have checked all proof details, and have rewritten all LLM-generated proofs to meet our standard of writing.

\section{Preliminaries}

We let $\mathbb{N}_0$ be the set of non-negative integers, and $\mathbb{N}_1$ the set of positive integers.

\subsection{Max eigenphase estimation}

We consider a unitary $U$ acting on $\CC^N$ with eigen-decomposition
\begin{equation}\label{eq:eigendecomp}
    U=\sum_{k=0}^{K-1}e^{ix_k}\Pi_k, \qquad x_0>x_1>\dots>x_{K-1},
\end{equation}
and an advice-preparation unitary $A$ acting on $\CC^N\otimes \CC^{N'}$ (the second register
holds possible garbage) with
\begin{equation}\label{eq:advice}
    A\ket{0}=\ket{\psi}=\sum_{k=0}^{K-1}\alpha_k\ket{\psi_k},
    \qquad (\Pi_k\otimes I)\ket{\psi_k}=\ket{\psi_k},\ \norm{\ket{\psi_k}}=1.
\end{equation}
Note that $U\otimes I$ has the same eigenphases as $U$, with eigenprojectors $\Pi_k\otimes I$, so without loss of generality, 
we assume $N'=1$. 
For
$t\in\R$ we write $\Pi_{>t}=\sum_{k:x_k>t}\Pi_k$ and $\Pi_{\geq t}$ analogously.
 
\begin{definition}[Window promise]\label{def:window}
    We say the pair $(U,A)$ satisfies the \emph{window promise} if all eigenphases $x_k$ of $U$
    lie in $[0,\pi/2]$.
\end{definition}

We will assume this property for $U$ for the remainder of this work. It is straightforward to ensure in the application of ground state energy estimation, and it holds for the lower bound constructions in \cite{mande2023tight,somma2026lowerbound}.
 
\begin{definition}[Threshold max eigenphase problem]\label{def:threshold}
    Fix $s\in(0,\pi/2]$, $\delta'\in(0,s)$ and $\gamma\in(0,1]$. Given $(U,A)$ as above satisfying
    the window promise, decide between:
    \begin{description}
        \item[Positive case:] $\norm{\Pi_{>s}\ket{\psi}}\geq \gamma$;
        \item[Negative case:] $\norm{\Pi_{>s-\delta'}\ket{\psi}}=0$.
    \end{description}
\end{definition}
 
\begin{definition}[Guided max eigenphase estimation problem, following {\cite[Definition~2.3]{mande2023tight}}]\label{def:estimation}
    Fix $\gamma\in(0,1]$ and $\delta\in(0,1/8]$. Given $(U,A)$ satisfying the window promise and
    the advice promise
    \begin{equation}\label{eq:advice-promise}
        \norm{\Pi_{\geq \theta_{\max}}\ket{\psi}}^2 = \sum_{k:\,x_k=\theta_{\max}}|\alpha_k|^2 = |\alpha_0|^2\geq
        \gamma^2, \qquad \theta_{\max}\defeq x_0,
    \end{equation}
    output $\hat\theta$ such that $|\hat\theta-\theta_{\max}|\leq \delta$.
\end{definition}

A closely related problem is to estimate ground state energies of some Hamiltonian.
Here, we assume we are given a Hamiltonian $H$ with eigen-decomposition
\begin{align}
    H = \sum_{k=0}^{K-1} E_k \Pi_k, 
    \qquad E_0 < E_1 < \dots < E_{K-1}
\end{align}
and again an advice-preparation unitary $A$. We assume that we can implement time evolution $U(t) = e^{-itH}$ on a quantum computer. We further assume without loss of generality (since we can always shift $H$ by the identity) that $E_{K-1} \leq 0$, and we also assume we have a bound $\lambda>0$ such that $E_{0} \geq -\lambda$.

\begin{definition}[Guided ground state energy estimation problem]\label{def:ground-energy}
    Fix $\gamma\in(0,1]$ and $\delta\in(0,1/8]$. Given $(H,A)$ and
    the advice promise
    \begin{equation}\label{eq:advice-promise-gs}
        \norm{\Pi_0 \ket{\psi}}^2 \geq
        \gamma^2,
    \end{equation}
    output $\hat E_0$ such that $|\hat E_0-E_0|\leq \delta$.
\end{definition}

This is a special case of \cref{def:estimation}: choose $t = \pi/(2\lambda)$, then $U = e^{-itH}$ satisfies the window promise, and the max eigenphase translates to the ground state energy. Note that the rescaling by $\lambda$ means that \cref{thm:main-intro} implies that we can solve the guided ground state energy problem using time evolution for time $\bigO{\lambda/(\delta \gamma)}$ if we want to estimate $E_0$ to precision $\delta$.

\subsection{Transducers}\label{sec:transducers-prelim}

Let $S$ be a unitary on a direct sum of spaces, ${\cal H}\oplus {\cal L}$, which we call the \emph{public space} (or boundary) and \emph{private space}, respectively. Then there exists a unitary $U_S$ such that for any $\ket{\xi}\in {\cal H}$, there exists $\ket{v}\in {\cal L}$ such that 
$$S(\ket{\xi}+\ket{v})=U_S\ket{\xi}+\ket{v},$$
and moreover, for any $\ket{\xi}\in {\cal H}$ and $\ket{v}\in {\cal L}$, if $S(\ket{\xi}+\ket{v})=\ket{\tau}+\ket{v}$ for some $\ket{\tau}\in {\cal H}$, then $\ket{\tau}=U_S\ket{\xi}$~\cite[Theorem 3.1]{belovs2024transducers}. The vector $\ket{v}$ is called a \emph{catalyst}, and it depends on $\ket{\xi}$. It is not usually a unit vector, even when $\ket{\xi}$ is. 

The above claim is not obvious, and what is also non-obvious is that, using sufficiently many calls to $S$, it is always possible to approximate $U_S$, so we should think of $U_S$ as being something we \emph{want} to implement, and $S$ -- the \emph{transducer} -- as something we \emph{can} implement. One example of a transducer is a \emph{quantum walk operator} for a search problem (as in \cite{szegedy2004QMarkovChainSearch} or \cite{belovs2013ElectricWalks}), which turns out to be a transducer with \emph{transduction action} 
$$U_S:\ket{\sigma}\mapsto \left\{
\begin{array}{ll}
\ket{\sigma} & \mbox{if }M\neq \emptyset\\
-\ket{\sigma} & \mbox{if }M=\emptyset,
\end{array}
\right.$$
where $\ket{\sigma}$ is the initial distribution, and $M$ is the marked set (see \cite[Section 6]{belovs2024transducers} or \cite{belovs2024weldedTrees} for more details). Generally, as in this example, many applications of $S$ are needed to approximate $U_S$.
We alternatively write $U_S:\ket{\xi}\mapsto \ket{\tau}$ as $S:\ket{\xi}\rightsquigarrow \ket{\tau}$. 

Transducers are a model of quantum computation with the following basic features: 
\begin{enumerate}
    \item Any transducer can be mapped to a bounded-error quantum algorithm (see \Cref{thm:transducer-implementation-informal} or \Cref{thm:transducer-implementation}).
    \item Any bounded-error quantum algorithm can be mapped to a transducer (in multiple ways, in fact). We will not use this feature, as we will design transducers directly.
    \item These mappings preserve complexity in a certain useful sense.
\end{enumerate}
The main point of transducers is that they compose much more nicely than bounded-error quantum algorithms, as we will see in an explicit example in \Cref{sec:transducers}. A related feature is that they can be mapped to quantum algorithms that are surprisingly query-efficient, as we explain in \Cref{rem:query}.

\paragraph{Complexities of transducers.} There are multiple notions of complexity associated with a transducer $S$. The most straightforward is its \emph{iteration time} $T(S)$, which is the complexity of implementing the unitary $S$. Note that this is a model-dependent quantity, and it is precisely where any specific gate model can be plugged into the transducer model. In our case, $T(S)$ will be the number of one- and two-qubit gates, plus some additional operations we will make explicit, needed to implement the unitary $S$, but one could redefine it to count, for example, only non-Clifford gates, or depth.

We will shortly see, in \Cref{thm:transducer-implementation-informal} and \Cref{thm:transducer-implementation}, that one can design a quantum algorithm for $U_S$ that makes some number of calls to $S$ -- these being the dominating cost. Then, in addition to $T(S)$, which captures the cost of one call to $S$, the number of calls to $S$ needed by such an algorithm is critical to understanding its complexity. This number is related to a complexity called the \emph{transduction complexity}, which is defined, for a specific input $\xi$, as $W(S,\xi)=\norm{\ket{v}}^2$, where $\ket{v}$ is the smallest\footnote{Catalysts are not unique, and $W$ is defined with respect to the smallest catalyst, but any catalyst gives an upper bound, which is generally all that's needed.} catalyst such that $S(\ket{\xi}+\ket{v})=U_S\ket{\xi}+\ket{v}$. Loosely speaking, if $W(S)$ is an upper bound on $W(S,\xi)$ over all possible inputs $\xi$ (this need not range over all unit vectors, if it is known the input will be from a smaller set) then $U_S$ can be approximated in complexity $O(T(S)W(S))$. In particular, we have the following informal version of \cite[Theorem~3.2]{belovs2024transducers}:
\begin{theorem}[Informal]\label{thm:transducer-implementation-informal}
    For a transducer $S$ and error parameter $\eps$, there is a quantum algorithm that implements $U_S$ with error $\eps$ using $O(1+W(S)/\eps^2)$ applications of $S$, and other basic gates.  
\end{theorem}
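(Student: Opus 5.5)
We would obtain \Cref{thm:transducer-implementation-informal} from the spectral structure of a transducer. We apply a phase-estimation–type procedure to the unitary $S$ with initial state $\ket{\xi}$, or more precisely to a suitable combination of $S$ with reflections about the public space ${\cal H}$. The construction rests on one observation. If we extend the input by the catalyst, $\ket{\xi}+\ket{v}$, then $S$ acts as $U_S$ on the public part and leaves the private part fixed. The idea is to run $S$ for a random number of steps $t$ chosen from $\{0,\dots,K-1\}$, with $K=\Theta(W(S)/\eps^2)$, while keeping a coherent counter. Concretely, we implement the operator $\sum_t \ketbra{t}{t}\otimes S^t$ in a way that "pushes" the input $\ket{\xi}$ into the private space one step at a time. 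This is the standard construction of Belovs–Jeffery.

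The key steps are as follows. First, we define an extended unitary $\tilde S$ on $\CC^K\otimes({\cal H}\oplus{\cal L})$. It applies $S$ to the workspace, and on the public part it shifts the counter: the output of each step is handed to the counter's next slot, while the private part is retained. Second, we write down an explicit approximate fixed point. Let $\ket{\Psi}=\frac{1}{\sqrt K}\sum_t \ket{t}\otimes(\text{public part at step } t + \ket{v}/\sqrt{\cdot})$, arranged so that the catalyst is "spread" over the $K$ time slots. In this vector each slot carries catalyst weight $\norm{\ket{v}}^2/K$, while the public input $\ket{\xi}$ enters at slot $0$ and $U_S\ket{\xi}$ exits at slot $K-1$. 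Third, we verify that if we start from $\ket{0}\otimes\ket{\xi}$ and apply $\tilde S$ a total of $K$ times, the result differs from $\ket{K-1}\otimes U_S\ket{\xi}$ only through the catalyst contribution. This contribution has norm $O(\norm{\ket{v}}/\sqrt K)=O(\sqrt{W(S)/K})$, by a telescoping argument over the identity $S(\ket{\xi}+\ket{v})=U_S\ket{\xi}+\ket{v}$. Choosing $K=W(S)/\eps^2$ makes this error at most $\eps$. The additive $1$ covers the case $W(S)=0$.

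The main obstacle is the third step. We must make the catalyst bookkeeping precise: we need to show that the residual private-space component left after $K$ steps is controlled by $\norm{\ket{v}}/\sqrt K$ rather than by $\norm{\ket{v}}$. Our first approach would be to write the state after $t$ steps as the ideal trajectory plus $\frac{1}{K}$-scaled copies of $\ket{v}$, and then use unitarity of $S$ together with orthogonality of the counter registers to sum the errors in quadrature rather than linearly. The remaining care concerns uniformity in $\ket{\xi}$: by linearity of $U_S$ and the uniqueness part of \cite[Theorem 3.1]{belovs2024transducers}, the same circuit works for all inputs, and $W(S)$ bounds the catalyst norm over the allowed inputs.
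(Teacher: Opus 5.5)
There is a genuine gap: the mechanism in your three key steps does not implement $U_S$, and the claim in your third step is false. A transducer does not output $U_S\ket{\xi}$ at any single time step. Its public output is spread out over time. Suppose you feed $\ket{\xi}$ in once and keep applying $S$ only to what remains in ${\cal L}$. Then the public vector emitted at step $j$ is $\Pi_{\cal H}S(\Pi_{\cal L}S)^j\ket{\xi}$. The catalyst identity only controls the \emph{sum} of these vectors: $\sum_{j=0}^{n}\Pi_{\cal H}S(\Pi_{\cal L}S)^j\ket{\xi}=U_S\ket{\xi}-\Pi_{\cal H}S(\Pi_{\cal L}S)^{n}\ket{v}$.

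Concretely, take the decision amplitude amplification transducer of \Cref{sec:decision-aa-transducer} with $\Pi_M\ket{\psi}=0$. There $S\ket{0}\ket{0}=\frac{1-w}{1+w}\ket{0}\ket{0}+\frac{2\sqrt{w}}{1+w}\ket{1}\ket{\psi}$, so the first emitted amplitude is $\approx+1$, while $U_S\ket{0}\ket{0}=-\ket{0}\ket{0}$. The missing $-2$ leaks out in pieces of size $O(w)$ over $\Theta(1/w)$ later steps. So, however $\tilde S$ routes the outputs, the single-slot input $\ket{0}\otimes\ket{\xi}$ leaves no slot holding $U_S\ket{\xi}$, and the error is $\Theta(1)$ for every $K$.

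The operator $\sum_t\ketbra{t}{t}\otimes S^t$ is also the wrong object, because it feeds public outputs back into $S$. In the same example, $S$ acts on $\mathrm{span}\{\ket{0}\ket{0},\ket{1}\ket{\psi}\}$ as a rotation by an angle $\approx 2\sqrt{w}$. Averaging $S^t\ket{0}\ket{0}$ over $t<K$ with $K\gg 1/\sqrt{w}$ therefore gives roughly $0$, not $-\ket{0}\ket{0}$.

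The missing idea is to inject the input as a constant signal and then sum the outputs coherently. This is what the algorithm of \Cref{fig:transducer-to-algo} does; it comes from \cite{belovs2024transducers}, which the paper cites rather than reproves. The steps are:
\begin{enumerate}
\item Prepare $\frac{1}{\sqrt K}\sum_{c}\ket{c}\otimes\ket{\xi}$.
\item At step $j$, apply $S$ only in branch $c=j$.
\item Then increment the counter (mod $K$) controlled on the register being in ${\cal L}$. This carries private parts to the next branch and freezes public outputs.
\item Finish with $A_K^\dagger$.
\end{enumerate}

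By induction, the vector that $S$ meets in branch $j$ is $\frac{1}{\sqrt K}(\ket{\xi}+\ket{v}-\ket{e_j})$, where $\ket{e_j}=(\Pi_{\cal L}S)^j\ket{v}$. Branch $j$ therefore ends holding $\frac{1}{\sqrt K}(U_S\ket{\xi}-\Pi_{\cal H}S\ket{e_j})$. The private remainder, which wraps into branch $0$, is $\frac{1}{\sqrt K}(\ket{v}-\ket{e_K})$. Unitarity gives $\norm{\ket{e_{j+1}}}^2=\norm{\ket{e_j}}^2-\norm{\Pi_{\cal H}S\ket{e_j}}^2$, hence $\sum_j\norm{\Pi_{\cal H}S\ket{e_j}}^2\leq\norm{\ket{v}}^2$. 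Since the branches are orthogonal, the total error is at most $\sqrt{5\norm{\ket{v}}^2/K}$. Finally, $A_K^\dagger$ maps the main term to $\ket{0}\otimes U_S\ket{\xi}$ without increasing this error.

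So your instincts about telescoping and summing errors in quadrature are right. However, the $1/\sqrt{K}$ factor comes from injecting $\ket{\xi}$ with amplitude $1/\sqrt{K}$ into every branch, i.e.\ a uniform superposition over start times. It does not come from spreading the catalyst of a single input over time slots, and no phase estimation is involved.
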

The algorithm is sketched in \cref{fig:transducer-to-algo}. For details on why the algorithm works -- which is not at all obvious, despite the algorithm being rather simple -- refer to~\cite{belovs2024transducers}.

\paragraph{Transducers with oracles, and canonical form.} We can also define a transducer with respect to an oracle $O$, which can be an arbitrary unitary acting on the same space as $S=S(O)$ (possibly trivially on all but a small subset of the qubits of $S$). One reason to make the use of an oracle explicit (rather than just letting it be part of the implementation of $S$) is if we want to compose transducers, in which case, $O$ can provide a boundary through which to do this. Another reason is that we might hope to get an algorithm that applies $O$ \emph{fewer} times than $S$ -- note that this would not be possible by simply using \Cref{thm:transducer-implementation-informal}, since each application of $S=S(O)$ would require calling $S$ some (non-negative) integer number of times. 

To understand how the number of queries needed per application of $S$ could be less than 1, we can define the \emph{Las Vegas query complexity}, which first arose in~\cite{belovs2023LasVegas}. To do this, we first assume that $S(O)$ is in \emph{canonical form}, which means that:
\begin{enumerate}
    \item $S(O)=S^\circ O$, for some oracle-independent unitary $S^\circ$ called the \emph{work unitary};
    \item the private space can be decomposed ${\cal L}={\cal L}^\bullet\oplus {\cal L}^\circ$, and $O$ acts trivially on ${\cal H}$ and ${\cal L}^\circ$. 
\end{enumerate}
For any input $\ket{\xi}\in {\cal H}$, we can define the Las Vegas query complexity on input $\ket{\xi}$ with respect to $O$ as:
$$L(S,O,\xi)=\norm{\Pi_{{\cal L}^\bullet}\ket{v}}^2,$$
where $\ket{v}$ is the (smallest possible) catalyst for $\xi$. 
This is the size of only the part of the catalyst on which $O$ acts non-trivially. As with $W$, we can let $L(S,O)$ denote the maximum $L(S,O,\xi)$ over all possible inputs $\xi$ -- meaning all of those for which we want the algorithm to work. 

\begin{remark}\label{rem:query}
As already noted, if we apply \Cref{thm:transducer-implementation-informal} to $S(O)$, then the number of queries to $O$ will be the number of queries $S(O)$ makes to $O$, times the number of calls to $S$ -- it will never be less than $O(W(S)/\eps^2)$ unless it's 0. For a canonical transducer, we can improve this situation with the following theorem, which follows from Theorem~3.3 of \cite{belovs2024transducers}.
\end{remark}

\begin{theorem}\label{thm:transducer-implementation}
    Let $S(O)=S^\circ O$ be a canonical form transducer. 
Let $\eps$, $W$ and $L$ be parameters.
    There exists an algorithm ${\cal A}(S)$ that acts on ${\cal H}\oplus {\cal L}$, and a $O(\log W+\log\frac{1}{\eps})$-qubit ancilla, and uses:
    \begin{enumerate}
        \item $\bigO{L/\eps^2}$ controlled calls to $O$
        \item $\bigO{1+W/\eps^2}$ controlled calls to $S^\circ$ and additional one- and two-qubit gates \emph{on the ancilla};
        \end{enumerate} and has the following correctness guarantee.
    On input $\ket{\xi}\in {\cal H}$, the algorithm outputs a state $\ket{\tilde\tau}$ such that $\norm{U_S\ket{\xi}-\ket{\tilde\tau}}\leq \eps$ as long as $W(S,O,\xi)\leq W$ and $L(S,O,\xi)\leq L$. 
\end{theorem}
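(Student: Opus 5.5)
The plan is to derive this directly from \cite[Theorem~3.3]{belovs2024transducers}. I would not reprove the transducer machinery, only check that its hypotheses and resource accounting match the statement here. First, I will fix the setting so that it matches that theorem. The canonical form $S(O)=S^\circ O$, with $O$ acting trivially on ${\cal H}\oplus{\cal L}^\circ$, is exactly the structure under which Belovs--Jeffery separate the cost of the work unitary from the cost of the oracle. The catalyst $\ket{v}$ for input $\ket{\xi}$ splits as $\Pi_{{\cal L}^\bullet}\ket{v}+\Pi_{{\cal L}^\circ}\ket{v}$. Then $W(S,O,\xi)=\norm{\ket{v}}^2$ and $L(S,O,\xi)=\norm{\Pi_{{\cal L}^\bullet}\ket{v}}^2$, as defined above.

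Second, I will recall the shape of the implementing algorithm (\cref{fig:transducer-to-algo}). We take $K=\Theta(1+W/\eps^2)$ and append a counter register of $O(\log K)=O(\log W+\log\frac1\eps)$ qubits, prepared in a uniform superposition over $t\in[K]$. We then apply $S$ a number of times correlated with the counter, and finally uncompute the counter. The error bound $\eps$ comes from the standard telescoping argument. Up to a term of norm $\bigO{\norm{\ket{v}}/\sqrt{K}}$, the state $\ket{\xi}+\ket{v}$ is mapped to $U_S\ket{\xi}+\ket{v}$ in each step, and averaging over the $K$ time steps dilutes the catalyst contribution. This gives error $\bigO{\sqrt{W/K}}\le\eps$ whenever $W(S,O,\xi)\le W$. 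This accounts for item 2 and for the ancilla size, since the only gates beyond controlled $S^\circ$ act on the counter.

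Third, and this is the step I expect to be the main obstacle, is the query count $\bigO{L/\eps^2}$ rather than $\bigO{W/\eps^2}$. Here I would follow the Las Vegas argument of \cite{belovs2023LasVegas,belovs2024transducers}. The oracle $O$ acts nontrivially only on ${\cal L}^\bullet$, so we replace the controlled application of $O$ inside each iteration by a variant that is executed only on the ${\cal L}^\bullet$ component. The expected number of true oracle applications, weighted by the squared norm of the component passing through ${\cal L}^\bullet$, is then controlled by $\norm{\Pi_{{\cal L}^\bullet}\ket{v}}^2$ per unit of catalyst.

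Concretely, the cited Theorem~3.3 provides an algorithm whose query count is $\bigO{L/\eps^2}$, with the same error guarantee, under the hypothesis $L(S,O,\xi)\le L$. What remains is bookkeeping. We check that $W$ and $L$ enter only as upper bounds, so that any catalyst satisfying the bounds suffices and the minimal one is not needed. We also check that the constants in $K$ and in the oracle-count scheme can be chosen uniformly, so that both conditions $W(S,O,\xi)\le W$ and $L(S,O,\xi)\le L$ simultaneously yield $\norm{U_S\ket{\xi}-\ket{\tilde\tau}}\le\eps$. If the cited theorem is stated with $\eps$ as an error on the square norm, or with a constant-factor difference, we rescale $\eps$ by a constant, which only affects the hidden constants.
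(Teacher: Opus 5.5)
Your proposal takes the same approach as the paper: the paper gives no independent argument and obtains the statement directly from Theorem~3.3 of \cite{belovs2024transducers}, exactly as you do. One caveat on your third-step heuristic. Applying a controlled $O$ inside every iteration still costs one query per iteration, even if it acts only on ${\cal L}^\bullet$, so that mechanism would give $O(W/\eps^2)$ queries, not $O(L/\eps^2)$. The actual saving comes from applying $S^\circ$ every step but calling the oracle only once every $\Theta(W/L)$ steps. Since you ultimately rest on the cited theorem, this does not affect correctness.
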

Informally, if we choose $L$ and $W$ appropriately, we get an $\eps$-error algorithm for $U_S$ that makes $L(S,O)$ calls to $O$ and $W(S,O)$ calls to $S^{\circ}$ and additional gates.
The algorithm is similar to the one illustrated in \cref{fig:transducer-to-algo}, except that while one applies the work unitary in every step, the oracle is only called every $W/L$ steps.

\begin{figure}
    \centering
    \begin{quantikz}[
  row sep={0.72cm,between origins},
  column sep=0.43cm
]
\lstick{$\mathcal K:\ket{0}$}
  & \gate{A_K}
  & \ctrl[
      style={label=above:{\ensuremath{\lvert 0\rangle}}}
    ]{1}
  & \gate{+1}
  & \ctrl[
      style={label=above:{\ensuremath{\lvert 1\rangle}}}
    ]{1}
  & \gate{+1}
  & \push{\cdots}
  & \ctrl[
      style={label=above:{\ensuremath{\lvert K-1\rangle}}}
    ]{1}
  & \gate{+1}
  & \gate{A_K^\dagger}
  & \rstick[3]{$\ket{\tau'}$}
  \\
\lstick{$\mathcal P:\ket{0}$}
  & \qw
  & \gate[wires=2]{S}
  & \ctrl{-1}
  & \gate[wires=2]{S}
  & \ctrl{-1}
  & \push{\cdots}
  & \gate[wires=2]{S}
  & \ctrl{-1}
  & \qw
  &
  \\
\lstick{$\mathcal D:\ket{\xi}_{\mathcal H}$}
  & \qw
  &
  & \qw
  &
  & \qw
  & \push{\cdots}
  &
  & \qw
  & \qw
  &
\end{quantikz}
    \caption{The quantum algorithm in \cref{thm:transducer-implementation-informal} that approximately implements $U_S$. Here ${\cal K}$ is a register with basis $\ket{0}, \dots, \ket{K-1}$ where $K$ is the number of calls the algorithm makes to $S$, and ${\cal P}$ is a qubit register that controls whether ${\cal D}$ is interpreted as ${\cal H}$ or ${\cal L}$ (so ${\cal H} \op {\cal L} \subseteq {\cal P} \ot {\cal D}$). $A_K$ is a unitary such that $A_K\ket{0} = 1/\sqrt{K} \sum_{k=0}^{K-1} \ket{k}$, and the operation labeled $+1$ increments the register ${\cal K}$ modulo $K$ (this is done in a specific way that avoids an overhead of $O(\log K)$ gates). See \cite{belovs2024transducers} for details.}
    \label{fig:transducer-to-algo}
\end{figure}

Any transducer can be transformed into canonical form in a way that preserves its complexities, but in this paper, it will actually be rather easy to make a canonical form transducer directly, so we will not use this ability. 

\subsection{Transducers with an infinite counter}

It will be convenient to construct transducers that use an infinite-dimensional Hilbert space, whose main function is to act as some kind of infinite counter. This is related to the interpretation of transducers as being a Las Vegas model for quantum algorithms (where classical randomized Las Vegas algorithms are algorithms that have no fixed bound on their running time but rather a bound on the expected running time). On an infinite-dimensional Hilbert space ${\cal C}$ with basis $\{\ket{\ell} \, : \, \ell \in  \NN_0\}$, we can apply the increment operator, 
\begin{equation}
    {\sf Inc}:=\sum_{\ell=0}^\infty \ket{\ell+1}\bra{\ell}.\label{eq:Inc}
\end{equation}
This is an isometry, rather than a unitary, and in a standard gate model, it would require infinitely many gates to implement (although the model that arises when one allows it is in some ways reasonable~\cite{jeffery2026quantum}). We will use it as a placeholder where it makes certain constructions work exactly, and whenever we want to turn these constructions into actual algorithms, we will replace it with a truncated version:
\begin{equation}
    {\sf Inc}^{[D]}:=\sum_{\ell=0}^{D-1} \ket{\ell+1}\bra{\ell},\label{eq:Inc-D}
\end{equation}
for some $D\in \mathbb{N}_1$, where addition is modulo $D$. The unitary ${\sf Inc}^{[D]}$ can be implemented using $O(\log D)$ one- and two-qubit gates. In the remainder of this section, we prove that in the context of \Cref{thm:transducer-implementation}, we can replace ${\sf Inc}$ with ${\sf Inc}^{[D]}$ for some sufficiently large $D$, without changing the behaviour of ${\cal A}(S)$ (\Cref{lem:truncate-no-error}). This is similar to \cite[Proposition~4.3]{belovs2024purifier}, but we try to prove a somewhat general statement that might be useful for similar arguments in the future. The idea is that the algorithm will start in a state with no support on $\ell>0$, and each application of $S$ will only increase $\ell$ by at most a constant, which means that the algorithm makes $K$ calls to $S$, we can truncate $\ell$ at $\bigO{K}$ without incurring error in the algorithm.

For the remainder of this section, we consider transducers with finite-dimensional public space ${\cal H}$, and infinite-dimensional ${\cal L}=\mathrm{span}\{\ket{\ell}_{\cal C}:\ell\in \mathbb{N}_0\}\otimes {\cal V}$ for some finite-dimensional ${\cal V}$. We call $\ell$ the \emph{level}. We write ${\cal L}_{\leq D} = \mathrm{span}\{\ket{\ell}_{\cal C}:\ell\in \{0, \dots, D\}\}\otimes {\cal V}$ for the restriction of the private space to level at most $D$. We consider a condition under which we can truncate ${\cal C}$ to finite dimensions. The existence of a good truncation is captured by the following.

\begin{definition}\label{def:perfect-truncation}
    Let $D \in \NN_1$ and $m \in \NN_0$. Consider a unitary map $S$ on ${\cal H} \op {\cal L}$, and a unitary map $S^{[D]}$ on ${\cal H} \op {\cal L}_{\leq D}$.
    Then we say $S^{[D]}$ is an exact $(D,m)$-truncation of $S$, if it satisfies:
    \begin{enumerate}
        \item $S$ maps ${\cal H} \op {\cal L}_{\leq \ell}$ into ${\cal H} \op {\cal L}_{\leq \ell+m}$ for all $\ell < D - m$,
        \item $S^{[D]}$ equals $S$ when restricted to ${\cal H} \op {\cal L}_{\leq D-m}$.
    \end{enumerate}
\end{definition}

The idea is that if we do not apply $S$ too often in an algorithm, starting from ${\cal H}$, the algorithm does not see the difference between $S$ and $S^{[D]}$ and we can truncate without error.

\begin{lemma}\label{lem:truncate-no-error}
    Let $D \in \NN_1$ and $m \in \NN_0$. Consider a transducer $S$, and an exact $(D,m)$-truncation $S^{[D]}$ of $S$.
    Then, if ${\cal A}(S)$ and ${\cal A}(S^{[D]})$ are the algorithms constructed in \cref{thm:transducer-implementation-informal} using $K < D / m$ calls to the transducer, these two algorithms act identically on every state $\ket{\xi} \in {\cal H}$.
    Similarly, if we have a canonical transducer, and $S^\circ$ and $O$ have exact $(D,m_S)$- and $(D,m_O)$-truncations $S^{\circ, [D]}$ and $O^{[D]}$, then if ${\cal A}(S)$ and ${\cal A}(S^{[D]})$ are the algorithms constructed in \cref{thm:transducer-implementation} using $K_S$ calls to the work unitary and $K_O$ calls to the oracle such that $m_S K_S + m_O K_O < D$, then these two algorithms act identically on every state $\ket{\xi} \in {\cal H}$. 
\end{lemma}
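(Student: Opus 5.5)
The plan is an induction on the number of transducer calls, following the evolution of the state through the circuit of \cref{fig:transducer-to-algo}. Write the joint state after the $j$-th call to $S$ as a vector in ${\cal K}\otimes({\cal H}\oplus{\cal L})$. Every operation other than $S$ (namely $A_K$, the controlled increments $+1$, and the final $A_K^\dagger$) acts only on ${\cal K}$, possibly controlled on ${\cal P}$. These operations therefore leave the level register ${\cal C}$ untouched, and they are literally the same gates in ${\cal A}(S)$ and ${\cal A}(S^{[D]})$. The controlled application of $S$ on branch $\ket{k}_{\cal K}$ applies either $S$ or the identity to the ${\cal H}\oplus{\cal L}$ part.

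The key claim, proved by induction on $j$, is the following. After $j\le K$ steps, the two algorithms are in the same state, and that state lies in ${\cal K}\otimes({\cal H}\oplus{\cal L}_{\le jm})$. The base case $j=0$ holds because the input $\ket{\xi}\in{\cal H}$ and the ancillas start at level $0$ (indeed with no private component at all). For the inductive step, suppose the common state lies in ${\cal K}\otimes({\cal H}\oplus{\cal L}_{\le jm})$ with $j<K$. Then $jm\le (K-1)m<D-m$, since $Km<D$.
\begin{itemize}
\item Condition 2 of \cref{def:perfect-truncation} says $S$ and $S^{[D]}$ agree on ${\cal H}\oplus{\cal L}_{\le jm}$. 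The controlled versions acting on this subspace, tensored with ${\cal K}$, therefore coincide by linearity.
\item Condition 1 says the image lies in ${\cal H}\oplus{\cal L}_{\le (j+1)m}$.
\item The intermediate ${\cal K}$-only gates preserve this subspace.
\end{itemize}
After $K$ steps the final $A_K^\dagger$ is identical in both algorithms, so the outputs agree. Both algorithms are linear, so it suffices to check agreement on each $\ket{\xi}\in{\cal H}$, which is exactly what we have done.

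One point needs care. The subspace ${\cal H}\oplus{\cal L}_{\le D}$ on which $S^{[D]}$ acts has to be embedded in the same register, so that ``act identically'' makes sense. I would identify ${\cal L}_{\le D}$ with the first $D+1$ levels. The induction shows that the state of ${\cal A}(S)$ never leaves ${\cal L}_{\le Km}\subseteq{\cal L}_{\le D}$, so both evolutions live in a common space.

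The canonical case of \cref{thm:transducer-implementation} works the same way. The algorithm is an interleaving of $K_S$ controlled calls to $S^\circ$ and $K_O$ controlled calls to $O$, together with ancilla-only gates. I would run the same induction over the sequence of these calls, with the level bound increasing by $m_S$ or $m_O$ at each step accordingly. Before any call, the current bound $\ell$ satisfies $\ell+m_\ast\le m_SK_S+m_OK_O<D$, where $m_\ast$ is the increment of that call. Hence $\ell<D-m_\ast$, which is exactly the hypothesis needed to apply the two truncation conditions for that operator.

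The main obstacle is bookkeeping rather than mathematics. The point to get right is that the controlled-$S$ in the real algorithm (which is only sketched here and detailed in \cite{belovs2024transducers}) acts as $S\otimes$(control) on the full register without any other level-changing operations. The increment $+1$ in the figure acts on ${\cal K}$, not ${\cal C}$, so it does not affect the level bound. I would state this structural property of the constructions explicitly as the only input taken from \cite{belovs2024transducers}.
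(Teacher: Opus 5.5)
Your proposal is correct and follows the paper's approach. The paper writes the algorithm as $G_K S G_{K-1} S \cdots G_1 S G_0$ with level-preserving $G_i$. It notes that the truncation conditions make this agree with the $S^{[D]}$ version on inputs in ${\cal H}$ whenever $Km<D$, and then observes that the canonical algorithm of \cref{thm:transducer-implementation} has the same structure. Your explicit induction on the level bound ($jm$, or $\ell+m_\ast$ in the canonical case) spells out the step the paper calls direct, and your remark on embedding ${\cal L}_{\le D}$ in a common register clarifies a point the paper leaves implicit.
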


\begin{proof}    
    If $G_0,\dots,G_K$ are maps that do not change the level, such as those that only control on $\mathrm{span}\{\ket{\ell}_{\cal C}:\ell\in \mathbb{N}_0\}$, and act arbitrarily on other registers, then the assumption on $S$ and $S^{[D]}$ directly implies that $G_K S G_{K-1} S \dots G_1 S G_0$ equals $G_K S^{[D]} G_{K-1} S^{[D]} \dots G_1 S^{[D]} G_0$ on any input not supported on ${\cal L}$ as long as $Km < D$.
    The algorithm in \cref{thm:transducer-implementation-informal} is of this form, and the argument for \cref{thm:transducer-implementation} is identical, when one notes that the algorithm referred to in \Cref{thm:transducer-implementation} also starts in $\ket{0}\otimes \ket{\xi}$, and consists entirely of (1) maps\footnote{In fact, just as in the algorithm from \Cref{thm:transducer-implementation-informal} shown in \Cref{fig:transducer-to-algo}, these maps are simply $A_K$ and $A_K^\dagger$ at the beginning and end, and controlled increments (sometimes by more than 1) of ${\cal K}$.} that act as the identity on $\mathrm{span}\{\ket{\ell}_{\cal C}:\ell\in \NN_0\}$ or are controlled by it; (2) controlled calls to $S^\circ$; and (3) controlled calls to $O$ (see Section 7.3 of \cite{belovs2024transducers}).
\end{proof}

This applies in particular to the case where the infinite register is used by the increment operator ${\sf Inc}$. E.g. if $S$ only calls ${\sf Inc}$ once (and all other operations act diagonally on $\ell$), then we can truncate the transducer at level $D = K+1$ if we make $K$ calls to $S$, without incurring any truncation error.

\section{Transducers}\label{sec:transducers}

The main result of this section is a canonical transducer for the threshold max eigenphase problem (\Cref{def:threshold}), which is built from transducers for decision versions of amplitude amplification and phase estimation.

Before we state the main result of this section, consider the two input oracles that are relevant to the problem: $A$ and $U$. Suppose we designed a transducer $S$ that could be implemented using $O(1)$ calls to $A$ and $U$, and with $W(S)=O(\frac{1}{\gamma\delta})$. We need not treat this as a transducer with an oracle, or worry about whether it is in canonical form to get an algorithm. By simply applying \Cref{thm:transducer-implementation-informal}, we get a quantum algorithm for threshold max eigenphase that uses $O(\frac{1}{\gamma\delta})$ applications of $S$, and thus, this many queries to each of $A$ and $U$. We cannot hope to improve in the number of queries to $U$, but we can hope to reduce the number of queries to $A$. In order to achieve this, we could make a canonical transducer $S(A)$ with respect to the oracle $A$, in which $L(S,A)\leq O(\frac{1}{\gamma})$. Then by \Cref{thm:transducer-implementation}, we would get a quantum algorithm that only makes $O(\frac{1}{\gamma})$ queries to $A$. We will not do this, but we will do something just as good. Define:
\begin{equation}\label{eq:O-A}
O_\gamma(A):=I-2\ket{\varphi_\gamma(A)}\bra{\varphi_\gamma(A)},\quad\mbox{where }\ket{\varphi_\gamma(A)}:=\sqrt{\frac{\gamma}{1+\gamma}}\ket{0}\ket{0}+\frac{1}{\sqrt{1+\gamma}}\ket{1}A\ket{0}.
\end{equation}
We will exhibit a transducer that is canonical with respect to this oracle. Since this oracle can be implemented using one call to each of $A$ and $A^\dagger$ (see \Cref{lem:A-impl} for details), $O(\frac{1}{\gamma})$ calls to this oracle translates to $O(\frac{1}{\gamma})$ calls to $A$ and its adjoint. 
Formally, in this section, we prove the following.
\begin{theorem}\label{thm:main-transducer}
Fix any $s\in (0,\pi/2]$, $\delta\in (0,s)$ and $\gamma\in (0,1]$. 
There exists a canonical transducer $S(O)=S^\circ O$ that depends on an input $(U,A)$ to the threshold max eigenphase problem as follows: $O=O_\gamma(A)$, and $S^\circ$ depends on $U$ (but not $A$); and such that the following hold:
\begin{enumerate}
    \item ${\cal H}=\mathrm{span}\{\ket{\bar{1}}\ket{0}\}$, and ${\cal L}=\mathrm{span}\{\ket{\ell}:\ell\in \mathbb{N}_0\}\otimes \mathbb{C}^N$ is infinite dimensional.\label{item:spaces}
    \item If $(U,A)$ is a positive instance of the threshold max eigenphase problem, $S:\ket{\bar{1}}\ket{0}\rightsquigarrow\ket{\bar{1}}\ket{0}$.\label{item:pos}
    \item If $(U,A)$ is a negative instance of the threshold max eigenphase problem, $S:\ket{\bar{1}}\ket{0}\rightsquigarrow -\ket{\bar{1}}\ket{0}$.\label{item:neg}
    \item If $(U,A)$ is either a positive or negative instance of the threshold max eigenphase problem, $W(S,O,\ket{\bar{1},0})\leq \bigO{\frac{1}{\gamma\delta}}$ and $L(S,O,\ket{\bar{1},0})\leq 1+ \frac{1}{\gamma}$.\label{item:complexities}
    \item The oracle is an exact $(D,1)$-truncation of itself; and there exists an exact $(D,3)$-truncation of the work unitary that can be implemented using $O(1)$ controlled calls to $U$ and $U^\dagger$, and $O(\log D)$ additional one- and two-qubit gates.\label{item:more-spaces}
\end{enumerate}
\end{theorem}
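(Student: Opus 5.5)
Following the outline in the introduction, we build two transducers separately and then compose them so that the errors do not compound. The first is an exact transducer $S_{\rm PE}$ for decision phase estimation on eigenstates of $U$. The second is a transducer $S_{\rm AA}$ for decision amplitude amplification, with respect to the reflection $O_\gamma(A)$. Composition happens through the boundary of $S_{\rm PE}$: the "marking" reflection used inside $S_{\rm AA}$ is replaced by a call to $S_{\rm PE}$, which acts on each eigenvector $\ket{\psi_k}$ as $\pm 1$ according to whether $x_k>s$ or $x_k<s-\delta$. Since transduction actions compose exactly, the catalyst of the composed transducer is obtained from the outer catalyst by attaching to each inner call the inner catalyst, weighted by the squared norm of the state fed to that call. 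This gives $W\lesssim W_{\rm AA}\cdot W_{\rm PE}=O(\frac1\gamma)\cdot O(\frac1\delta)$, while $L$ with respect to $O$ is inherited from $S_{\rm AA}$ alone.

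\textbf{Step 1 (phase estimation transducer).} On $\ket{\ell}\otimes\CC^N$ we design a walk-like transducer with level $\ell$, in which each step applies (controlled) $e^{-is'}U$ once and moves the counter by ${\sf Inc}$. The idea is a Las Vegas, infinite-length version of a one-sided random walk. For an eigenvector with phase $x$, the catalyst is a geometric series $\sum_\ell c^\ell\ket{\ell}\ket{\psi_k}$ with $|c|<1$ and decay rate $\Theta(|x-s'|)$. Choosing $s'=s-\delta/2$, the two cases give boundary actions $+1$ and $-1$, and $\norm{v}^2=O(1/|x-s'|)=O(1/\delta)$. The window promise $x\in[0,\pi/2]$ keeps the phases away from the wrap-around point, so the series converges uniformly. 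Concretely, I would take $S$ to swap the boundary into level $0$ and let a two-dimensional rotation mix levels $\ell,\ell+1$ via $U$. I would then verify the fixed-point equation $S(\xi+v)=\pm\xi+v$ by matching coefficients level by level, which reduces to a linear recurrence whose stable root has modulus $1-\Theta(|x-s'|)$.

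\textbf{Step 2 (amplitude amplification transducer).} We take the standard Grover-type transducer of the kind used for decision search, with $\ket{\varphi_\gamma(A)}$ as the "initial distribution" and the marked reflection applied on the branch $\ket{1}A\ket{0}$. The $\gamma$-weighted $\ket{0}\ket{0}$ component fixes the normalisation. In the positive case the catalyst is roughly $\ket{\varphi_\gamma}$ scaled by $1/\gamma$, giving weight on the oracle part $L\le 1+\frac1\gamma$. In the negative case the marked part is zero and the action is $-1$. We then check items 2 to 4 by explicitly writing the composed catalyst over the eigen-decomposition $\ket{\psi}=\sum_k\alpha_k\ket{\psi_k}$. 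The $O(1/\delta)$ inner cost gets weighted by the total mass $O(1/\gamma)$ that the outer catalyst routes through the marking call.

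\textbf{Step 3 (truncation).} ${\sf Inc}$ is invoked a bounded number of times per application of the work unitary; we expect this bound to be at most $3$ after merging the layers. $O$ acts trivially on the level register, so item 5 follows by replacing ${\sf Inc}$ with ${\sf Inc}^{[D]}$ and checking \cref{def:perfect-truncation} directly.

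The main obstacle is Step 1 together with the positive case of the composition. A general $\ket{\psi}$ may carry weight on eigenphases inside the gap $(s-\delta,s]$, where $S_{\rm PE}$ has no guaranteed action and its catalyst norm blows up. I would handle this by noting that in the positive case we only need the component above $s$ to certify $+1$; components in the gap must be absorbed by the outer catalyst. If that fails, I would instead use a catalyst that ignores them, with an $O(1/\delta)$ bound obtained by choosing $s'$ so that the recurrence stays stable for all $x\ge s-\delta$.
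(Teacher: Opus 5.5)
Your architecture matches the paper's: a geometric-series phase-estimation layer with threshold $\tilde s=s-\delta/2$, glued at level $1$ to a Grover-type layer whose vertex reflection is $O_\gamma(A)$, then verified directly on the eigen-decomposition. Steps 1 and 3 are acceptable as sketches.

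The genuine gap is in Step 2, and it is what leaves your ``main obstacle'' open: you have swapped the two catalysts of the amplitude-amplification layer. A catalyst whose ${\cal L}^\bullet$-part is proportional to $\ket{\varphi_\gamma(A)}$ is negated by $O_\gamma(A)$ and sends \emph{all} of $\ket{\psi}$ into the phase-estimation layer. That is the negative case, where every relevant $x_k\le s-\delta$ and each tail costs $O(1/\delta)$. In the positive case the ${\cal L}^\bullet$-part must instead be orthogonal to $\ket{\varphi_\gamma(A)}$, so that $O_\gamma(A)$ fixes it, and it can be supported on the eigenphases above $s$ alone:
\[
\Pi_{{\cal L}^\bullet}\ket{v}=-\ket{0,0}+\frac{\sqrt{\gamma}}{\eps}\ket{1}\Pi_{>s}\ket{\psi},\qquad \eps=\norm{\Pi_{>s}\ket{\psi}}^2\geq\gamma^2 .
\]
This is orthogonal to $\ket{\varphi_\gamma(A)}$ precisely because $\bra{\psi}\Pi_{>s}\ket{\psi}=\eps$. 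You then attach the positive tails $\ket{v_k^+}$ only to the $k$ with $x_k>s$. There $x_k-\tilde s>\delta/2$ gives $\norm{\ket{v_k^+}}^2=O(1/\delta)$, so $W\leq 1+\frac{\gamma}{\eps}\bigl(1+O(1/\delta)\bigr)=O(\frac{1}{\gamma\delta})$ and $L=1+\frac{\gamma}{\eps}\leq 1+\frac{1}{\gamma}$.

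With this choice, components with $x_k\in(s-\delta,s]$ receive no weight at all, so nothing has to ``absorb'' them. This matters: near $\tilde s$ the phase-estimation catalyst is unbounded, and at $x_k=\tilde s$ no $\ell^2$ catalyst exists. Your instinct to use a catalyst that ignores these components is therefore right, but it contradicts your Step 2 description, and it only works once the positive catalyst is set up as above.

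Your other fallback would fail. Moving $s'$ until the recurrence is stable for all $x\geq s-\delta$ forces $s'<s-\delta$. Then in a negative instance with a component at $x_k=s-\delta>s'$, the inner layer acts as $+1$ on that component and its negative tail diverges, so item 3 breaks. The threshold $\tilde s=s-\delta/2$ never needs to move.

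A smaller omission concerns item 1 and canonical form. In the Grover-type transducer the natural boundary is $\ket{0}\ket{0}$, but $O_\gamma(A)$ acts non-trivially on it, so ${\cal H}$ would not be orthogonal to ${\cal L}^\bullet$. The paper adds a fresh boundary edge $\ket{\bar{1},0}$ and puts the vertex reflection $I-2\ket{\varphi}\bra{\varphi}$, with $\ket{\varphi}=\frac{1}{\sqrt{2}}(\ket{\bar{1},0}+\ket{0,0})$, into $S^\circ$. This is where the $-\ket{0,0}$ term in both catalysts comes from, and also the additive $1$ in $L\leq 1+\frac{1}{\gamma}$.
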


In brief: the theorem states that there exists a transducer that decides the threshold max eigenphase problem, using a constant number of controlled calls to $U$, $U^\dagger$, and with an oracle that makes a constant number of calls to $A$ and $A^\dagger$. \Cref{item:complexities} implies that it yields an algorithm that uses $\bigO{\frac{1}{\gamma\delta}}$ controlled calls to $U$ and $U^\dagger$ and $\bigO{\frac{1}{\gamma}}$ controlled calls to $A$ and $A^\dagger$. \Cref{item:more-spaces} is sufficient to allow us to later apply \cref{lem:truncate-no-error} to truncate the infinite-dimensional register.
We prove this theorem in \Cref{sec:direct-construction} by describing and analyzing the claimed transducer.
Note that the transducer implements the transformation $$S:\ket{\bar{1}}\ket{0}\rightsquigarrow \pm \ket{\bar{1}}\ket{0},$$ where the sign encodes whether we have a positive or negative instance. To turn this into a algorithm for distinguishing the two cases, we can use single-bit phase estimation, see also \cite{belovs2024purifier}.

As a warm-up, we first give transducers for a decision version of amplitude amplification (\Cref{sec:decision-aa-transducer}), and for a decision version of phase estimation  (\Cref{sec:decision-phase-est-transducer}). These could then be composed using the results of \cite{belovs2024transducers} to prove \Cref{thm:main-transducer}, but in this case, it is somewhat more instructive to simply describe and analyze a composed transducer directly, which is what we do in \Cref{sec:direct-construction}.

\subsection{Transducer for decision amplitude amplification}\label{sec:decision-aa-transducer}

This section gives a transducer for the following decision version of amplitude amplification. 
\begin{definition}[Decision amplitude amplification]\label{def:decision-aa}
Given a state preparation unitary $A$ that acts on $\mathbb{C}^N$ as $A\ket{0}=\ket{\psi}$, and a reflection $R_M=2\Pi_M-I$, decide if $\Pi_M\ket{\psi}=0$.    
\end{definition}

For unit vectors $\ket{\psi(M)}\in M$ and $\ket{\psi({\overline{M}})}\in M^\bot$, write:
$$\ket{\psi}=\sqrt{\eps}\ket{\psi(M)}+\sqrt{1-\eps}\ket{\psi({\overline{M}})},$$
so in particular, $\eps=\norm{\Pi_M\ket{\psi}}^2$.
Let $w$ be a small positive weight, to be fixed later. 
The transducer can be visualized as a quantum walk on the following graph. 

\begin{figure}[h]
\begin{center}
\begin{tikzpicture}
    \node at (0,0) {\begin{tikzpicture}
    \draw (-2,0) -- (4,0);
    \draw (1,0) -- (1,3);
    \filldraw (1,0) circle (.05);
    \filldraw (1,3) circle (.05); 

    \node at (-.5,.25) {$\sqrt{w}\ket{0}\ket{0}$};
    \node at (2.5,.25) {$\sqrt{\eps}\ket{1}\ket{\psi(M)}$};
    \node[rotate=90] at (.75,1.5) {$\sqrt{1-\eps}\ket{1}\ket{\psi({\bar{M}})}$};
\end{tikzpicture}};
\node at (3.5,-.5) {$\equiv$};
\node at (7,-.9) {\begin{tikzpicture}
    \def\angA{-70}   
    \def\angB{20}   
    \def\R{2}       

    \coordinate (c) at (1,0);

    \draw (-1,0) -- (3,0);
    \draw (c) -- (1,2);
    \draw (c) -- ++(\angA:\R) coordinate (a);
    \draw (c) -- ++(\angB:\R) coordinate (b);

    \filldraw (c) circle (.05);
    \filldraw (1,2) circle (.05);
    \filldraw (b) circle (.05);

    \node at (0,.25) {$\sqrt{w}\ket{0}$};
    \node[rotate=90] at (.75,1) {\small $\alpha_1\ket{1}\ket{\psi_1}$};
    \node[rotate=-35] at (1.5,1.25) {$\dots$};
    \node[rotate=\angB] at (1.75,.5) {\small $\alpha_r\ket{1}\ket{\psi_r}$};
    \node at (2.2,-.25) {\small $\alpha_{r+1}\ket{1}\ket{\psi_{r+1}}$};
    \node[rotate=55] at (2,-1) {$\dots$};
    \node[rotate=\angA] at (1.15,-1) {\small $\alpha_N\ket{1}\ket{\psi_N}$};
\end{tikzpicture}};
\end{tikzpicture}
\end{center}
\caption{Visualization of the transducer for decision amplitude amplification. We will use such visualizations throughout this paper, for intuition, though all transducers will be described precisely in math. Each circle represents a vertex, which corresponds to a state -- often consisting of the sum of states labeling its incident edges -- or set of states (equivalently a space). The reflection around this space will be part of the transducer. 
In particular, all graphs we consider will be bipartite, and the transducer will be a product of reflections: one around the spaces corresponding to the first part of the bipartition, and another around the spaces corresponding to the other part of the bipartition. Note that some edges have only one endpoint. We call these boundary edges.}\label{fig:aa-transducer}
\end{figure}
In the right-hand side formulation of \Cref{fig:aa-transducer}, $\ket{\psi_{r+1}},\dots,\ket{\psi_N}$ is an orthonormal basis for $M$, and $\ket{\psi_1},\dots,\ket{\psi_r}$ is an orthonormal basis for its orthogonal complement $M^\bot$, and $\alpha_j=\braket{\psi_j}{\psi}$.  The right-hand formulation will be useful for understanding \Cref{sec:direct-construction}, but for now, the left-hand formulation is the most useful for intuition. Looking at this left-hand side, the intuition should be that a walker enters from the state $\ket{0}\ket{0}$, and does a random walk. Assume $w> 0$ is set so that we are unlikely to go back out the $\ket{0}\ket{0}$ edge again. Then the walker is essentially repeatedly choosing between the two other edges, with relative weights $\eps$ and $1-\eps$. When it goes up the weight-$(1-\eps)$ edge, it hits a dead end and goes back. If $\eps>0$, it will do this some number of times before eventually choosing the less likely weight-$\eps$ edge, which is an ``exit''.  If $\eps=0$, it will never leave by the weight-$\eps$ exit, so eventually, no matter what we set $w>0$, we will exit by the edge $\ket{0}\ket{0}$. 

We formally define the transducer now. The transducer will be the walk operator on the graph in \Cref{fig:aa-transducer}, which is a product of two reflections, $R_{\cal A}$, and $R_{\cal B}$, which we now define.\footnote{To understand how a walk operator is defined from a graph, see~\cite[Section 3.2.1]{jeffery2022kDist} or \cite{jeffery2026QWlecture}. Neither source is the originator of the idea of the quantum walk operator,  
but these most closely match our notation.}
First, $R_{\cal A}$ is defined as the reflection around the following space:
\begin{equation}\label{eq:R-A}
{\cal A}=\mathrm{span}\{\sqrt{w}\ket{0}\ket{0}+\sqrt{\eps}\ket{1}\ket{\psi(M)}+\sqrt{1-\eps}\ket{1}\ket{\psi({\bar{M}})}\}=\mathrm{span}\{\sqrt{w}\ket{0}\ket{0}+\ket{1}\ket{\psi}\},
\end{equation}
which we can efficiently reflect around using one call to each of $A$ and $A^\dagger$. In fact, note that this is simply $-O_\gamma(A)$ from \Cref{eq:O-A} with $\gamma=w$.

Looking at the left-hand side of \Cref{fig:aa-transducer}, we would need $R_{\cal B}$ to reflect around the span of $\ket{1}\ket{\psi({\overline{M}})}$, but reflecting around just this one state is actually not something we can do efficiently. However, since we will always stay in 
$$\mathrm{span}\{\ket{0}\ket{0},\ket{1}\ket{\psi(M)},\ket{1}\ket{\psi({{M}^\bot})}\},$$
it's sufficient to set $R_{\cal B}=(2\ket{1}\bra{1}\otimes \Pi_{\overline{M}}-I)=-\ket{1}\bra{1}\otimes R_M-\ket{0}\bra{0}\otimes I$, where $R_M$ is the reflection around $M$. Then this is precisely implementing the quantum walk on the right-hand side of \Cref{fig:aa-transducer}. 

Define the transducer by $S=R_{\cal B}R_{\cal A}$, with public and private spaces:
\begin{align*}
    {\cal H} &=\mathrm{span}\{\ket{0}\ket{0}\}\\
    {\cal L} &=\mathrm{span}\{\ket{1}\}\otimes \mathbb{C}^N.
\end{align*}

\begin{lemma}
    The transducer $S$ has transduction action 
    $$\ket{0}\ket{0}\rightsquigarrow\left\{\begin{array}{ll}
        \ket{0}\ket{0} & \mbox{if }\eps\neq 0 \\
        -\ket{0}\ket{0} & \mbox{if }\eps=0.
    \end{array}\right.$$
    If $\eps\neq 0$, $W(S)\leq\frac{w}{\eps}$, and if $\eps=0$, $W(S)\leq \frac{1}{w}$.
\end{lemma}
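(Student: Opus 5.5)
The plan is to exhibit an explicit catalyst in each case and then check the defining identity $S(\ket{\xi}+\ket{v})=U_S\ket{\xi}+\ket{v}$ with $\ket{\xi}=\ket{0}\ket{0}$. That identity shows that $\pm\ket{0}\ket{0}$ is the transduction action, by the uniqueness statement quoted from \cite[Theorem 3.1]{belovs2024transducers}. Since $W$ is defined through the smallest catalyst, and any catalyst gives an upper bound, $\norm{\ket{v}}^2$ then bounds $W(S)$. Because $R_{\cal B}$ is a reflection, hence its own inverse, the identity $R_{\cal B}R_{\cal A}(\ket{0}\ket{0}+\ket{v})=\pm\ket{0}\ket{0}+\ket{v}$ is equivalent to
\[
R_{\cal A}(\ket{0}\ket{0}+\ket{v}) = R_{\cal B}(\pm\ket{0}\ket{0}+\ket{v}).
\]
I will use this form throughout. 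Recall that $R_{\cal A}$ acts as $+I$ on $\mathrm{span}\{\sqrt{w}\ket{0}\ket{0}+\ket{1}\ket{\psi}\}$ and as $-I$ on its orthogonal complement. Also $R_{\cal B}=-\ket{0}\bra{0}\otimes I-\ket{1}\bra{1}\otimes R_M$ acts as $-I$ on $\ket{0}\ket{0}$, as $-I$ on $\ket{1}\otimes M$, and as $+I$ on $\ket{1}\otimes M^\bot$.

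\emph{Case $\eps=0$.} Here $\ket{\psi}=\ket{\psi(\overline{M})}\in M^\bot$. I take $\ket{v}=\frac{1}{\sqrt{w}}\ket{1}\ket{\psi}\in{\cal L}$. Then $\ket{0}\ket{0}+\ket{v}$ is a multiple of the vector spanning ${\cal A}$, so the left-hand side is just $\ket{0}\ket{0}+\ket{v}$. On the right, with sign $-$, $R_{\cal B}$ maps $-\ket{0}\ket{0}$ to $\ket{0}\ket{0}$ and fixes $\ket{v}$, since $\ket{v}\in\ket{1}\otimes M^\bot$. The two sides agree, so the action is $-\ket{0}\ket{0}$ and $W(S)\le\norm{\ket{v}}^2=\frac{1}{w}$.

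\emph{Case $\eps\neq0$.} Now I look for $\ket{v}\in\ket{1}\otimes M$, so that $R_{\cal B}\ket{v}=-\ket{v}$, and I require $\ket{0}\ket{0}+\ket{v}\perp {\cal A}$, so that $R_{\cal A}$ negates it. Then the left-hand side is $-\ket{0}\ket{0}-\ket{v}$. With sign $+$, the right-hand side is $R_{\cal B}(\ket{0}\ket{0}+\ket{v})=-\ket{0}\ket{0}-\ket{v}$ as well, so the two sides match. Set $\ket{v}=c\ket{1}\ket{\psi(M)}$. Orthogonality to $\sqrt{w}\ket{0}\ket{0}+\sqrt{\eps}\ket{1}\ket{\psi(M)}+\sqrt{1-\eps}\ket{1}\ket{\psi(\overline{M})}$ reduces to $\sqrt{w}+c\sqrt{\eps}=0$, so $c=-\sqrt{w/\eps}$. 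This gives the action $+\ket{0}\ket{0}$ and $W(S)\le\frac{w}{\eps}$.

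The only delicate point is keeping the sign conventions straight. One has to use that $R_{\cal B}$ acts as $-I$ on the public vector $\ket{0}\ket{0}$ and as $-R_M$ on the private part, and that the catalyst must lie in ${\cal L}=\mathrm{span}\{\ket{1}\}\otimes\CC^N$. Both choices of $\ket{v}$ above respect this, so I expect no real obstacle beyond this bookkeeping.
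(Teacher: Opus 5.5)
Your proposal is correct and follows the paper's proof: you use the same catalysts, $\frac{1}{\sqrt{w}}\ket{1}\ket{\psi}$ when $\eps=0$ and $-\sqrt{w/\eps}\,\ket{1}\ket{\psi(M)}$ when $\eps\neq 0$, and you verify them with the same reflection computations. Rewriting the catalyst identity as $R_{\cal A}(\cdot)=R_{\cal B}(\cdot)$ via $R_{\cal B}^2=I$, and solving for the coefficient $c$ from orthogonality to ${\cal A}$, are only cosmetic differences.
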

\begin{proof}
Suppose $\eps=0$. Letting $\ket{v}=\frac{1}{\sqrt{w}}\ket{1}\ket{\psi}$, we have
$$\ket{0}\ket{0}+\ket{v}=\frac{1}{\sqrt{w}}\left(\sqrt{w}\ket{0}\ket{0}+\ket{1}\ket{\psi}\right)\overset{R_{\cal A}}{\mapsto}\frac{1}{\sqrt{w}}\left(\sqrt{w}\ket{0}\ket{0}+\ket{1}\ket{\psi}\right)\overset{R_{\cal B}}{\mapsto}-\ket{0}\ket{0}+\frac{1}{\sqrt{w}}\ket{1}\ket{\psi},$$
since $\eps=0$ implies $-R_M\ket{\psi}=\ket{\psi}$. Then $W(S)=W(S,\ket{0,0})\leq \norm{\ket{v}}^2=\frac{1}{w}$, as claimed.

    On the other hand, suppose $\eps\neq 0$. Letting $\ket{v}=-\sqrt{\frac{w}{\eps}}\ket{1}\ket{\psi(M)}$, we have:
\begin{align*}
    (\sqrt{w}\bra{0}\bra{0}+\bra{1}\bra{\psi})(\ket{0}\ket{0}+\ket{v})&=\sqrt{w}+\bra{1}\bra{\psi}\left(-\sqrt{\frac{w}{\eps}}\ket{1}\ket{\psi(M)}\right)=\sqrt{w}-\sqrt{w}\frac{\braket{\psi}{\psi(M)}}{\sqrt{\eps}}=0,
\end{align*}
using $\ket{\psi(M)}=\frac{\Pi_M\ket{\psi}}{\norm{\Pi_M\ket{\psi}}}$ and $\norm{\Pi_M\ket{\psi}}^2=\eps$. Thus $R_{\cal A}$, reflects this state, giving:
\begin{align*}
    \ket{0}\ket{0}+\ket{v}
    &\overset{R_{\cal A}}{\mapsto} 
-\ket{0}\ket{0}+\sqrt{\frac{w}{\eps}}\ket{1}\ket{\psi(M)}\\
&\overset{R_{\cal B}}{\mapsto} 
\ket{0}\ket{0}-\sqrt{\frac{w}{\eps}}\ket{1}\ket{\psi(M)}=
\ket{0}\ket{0}+\ket{v}.
\end{align*}
Then $W(S)=W(S,\ket{0,0})\leq \norm{\ket{v}}^2=\frac{w}{\eps}$, as claimed.
\end{proof}

We have thus exhibited a transducer for decision amplitude amplification. As we should expect, as $\eps$ approaches 0 from above, $W(S)$ goes to infinity, so with no lower bound on $\eps$, we also have no upper bound on $W(S)$. However, if we have a lower bound, $\sqrt{\eps}\geq \gamma$, as in the case of \Cref{def:threshold}, then we can set $w=\gamma$ to get $W(S)\leq \frac{1}{\gamma}$ in both the 0 and 1 cases.

\subsection{Transducer for decision phase estimation}\label{sec:decision-phase-est-transducer}

We now give an infinite-dimensional transducer for \emph{threshold phase estimation}, which we define as follows. 
\begin{definition}[Decision Phase Estimation]\label{def:decision-phase-est}
    Given a threshold $\tilde s\in (0,\pi/2]$, oracle access to a unitary $U$ on $\mathbb{C}^N$, and an input eigenstate $\ket{\psi}$ such that $U\ket{\psi}=e^{ix}\ket{\psi}$ for $x\in[0,\pi/2]$, decide between:
    $x>\tilde s$ (positive case) and $x<\tilde s$ (negative case).
\end{definition}
\noindent As usual, we write $U=\sum_{k=0}^{K-1}e^{ix_k}\Pi_k$, and $\Pi_k\ket{\psi}=\alpha_k\ket{\psi_k}$ for a unit vector $\ket{\psi_k}$. 
For each $k$, define
$$\sigma=\frac{\pi}{2}-\tilde s,\;\theta_k=x_k+\sigma\mbox{ and }
\gamma_k=\frac{\sin\frac{\theta_k}{2}}{\cos\frac{\theta_k}{2}}.$$

\paragraph{The transducer.}
The transducer will be based on a quantum walk on a weighted infinite-dimensional line, similar to the purifier construction in~\cite{belovs2024purifier}. It will act on an infinite-dimensional register ${\cal C}$, in addition to the space acted on by $U$:
$$\mathrm{span}\{\ket{\ell}_{\cal C}:\ell\in\mathbb{N}_0\}\otimes \mathbb{C}^N.$$
In fact (as \Cref{fig:decision-phase-graph} suggests), we will not actually need the $\ell=0$ part of this space, but we still define the transducer on this space, in order to reuse parts of it more easily in the next section. The transducer will act as the identity on states with $\ell=0$ anyway, so we will simply put them in the public space and then never use them.

\begin{figure}
    \centering
\begin{tikzpicture}
\node at (0,3) {\begin{tikzpicture}
        \draw (0,0)--(10,0);
    \filldraw (2.5,0) circle (.05);
    \filldraw (5,0) circle (.05);
    \filldraw (7,0) circle (.05);
    \filldraw (9.5,0) circle (.05);

    \node at (1.3,.25) {$\ket{1}\ket{\psi_1}$};
    \node at (3.75,.25) {$\gamma_1\ket{2}\ket{\psi_1}$};

    \node[fill=white] at (6,0) {$\dots$};
    \node at (8.25,.25) {$\gamma_1^{\ell-1}\ket{\ell}\ket{\psi_1}$};
    \node at (10.5,0) {$\dots$};
\end{tikzpicture}};
\node at (-5,2.25) {$\vdots$};
\node at (0,1.5) {\begin{tikzpicture}
        \draw (0,0)--(10,0);
    \filldraw (2.5,0) circle (.05);
    \filldraw (5,0) circle (.05);
    \filldraw (7,0) circle (.05);
    \filldraw (9.5,0) circle (.05);

    \node at (1.3,.25) {$\ket{1}\ket{\psi_k}$};
    \node at (3.75,.25) {$\gamma_k\ket{2}\ket{\psi_k}$};

    \node[fill=white] at (6,0) {$\dots$};
    \node at (8.25,.25) {$\gamma_k^{\ell-1}\ket{\ell}\ket{\psi_k}$};
    \node at (10.5,0) {$\dots$};
\end{tikzpicture}};
\node at (-5,.75) {$\vdots$};
\node at (0,0) {\begin{tikzpicture}
        \draw (0,0)--(10,0);
    \filldraw (2.5,0) circle (.05);
    \filldraw (5,0) circle (.05);
    \filldraw (7,0) circle (.05);
    \filldraw (9.5,0) circle (.05);

    \node at (1.3,.25) {$\ket{1}\ket{\psi_K}$};
    \node at (3.75,.25) {$\gamma_K\ket{2}\ket{\psi_K}$};

    \node[fill=white] at (6,0) {$\dots$};
    \node at (8.25,.25) {$\gamma_K^{\ell-1}\ket{\ell}\ket{\psi_K}$};
    \node at (10.5,0) {$\dots$};
\end{tikzpicture}};
\end{tikzpicture}
\caption{Visualization of the transducer for decision phase estimation.}\label{fig:decision-phase-graph}
\end{figure}

The transducer will be a quantum walk operator for the graph in \Cref{fig:decision-phase-graph}, which is a collection of infinite lines, with one for each $k$. 
Concretely, define the transducer as the product of two reflections, $R_{\cal A}=R_1$ and $R_{\cal B}=R_0$, as follows:
\begin{equation}
\begin{split}
    R_1 &=2\sum_{k=0}^{K-1}\sum_{\ell=1}^\infty\left(\cos\frac{\theta_k}{2}\ket{2\ell-1}-i\sin\frac{\theta_k}{2}\ket{2\ell}\right)\left(\cos\frac{\theta_k}{2}\bra{2\ell-1}+i\sin\frac{\theta_k}{2}\bra{2\ell}\right)\otimes\Pi_k - I\\
    R_0 &= 2\sum_{k=0}^{K-1}\sum_{\ell=1}^\infty\left(\cos\frac{\theta_k}{2}\ket{2\ell}-i\sin\frac{\theta_k}{2}\ket{2\ell+1}\right)\left(\cos\frac{\theta_k}{2}\bra{2\ell}+i\sin\frac{\theta_k}{2}\bra{2\ell+1}\right)\otimes\Pi_k - I\\
    S &= R_1R_0.
    \end{split}\label{eq:S-decision-phase}
\end{equation}
To complete the definition of the transducer, define its public space and private space:
\begin{align*}
    {\cal H} &=\mathrm{span}\{\ket{0},\ket{1}\}\otimes \mathbb{C}^N\\
    {\cal L} &=\mathrm{span}\{\ket{2},\ket{3},\dots\}\otimes \mathbb{C}^N.
\end{align*}
As mentioned above, the $\ket{0}\otimes \mathbb{C}^N$ part of the public space is not really needed in this section, so we should think of the ``relevant'' public space as $\ket{1}\otimes\mathbb{C}^N$.

\paragraph{Implementation of the reflections.} 
To implement the transducer, it is sufficient to implement $R_0$ and $R_1$.
Interpreting ${\cal C}$ as two registers: a qubit register ${\cal C}_{lsb}$ encoding the least significant bit of $\ell$, and an infinite-dimensional register, ${\cal C}_{rest}$ encoding the remaining bits, so that, for example, $\ket{2\ell}=\ket{\ell}_{{\cal C}_{rest}}\ket{0}_{{\cal C}_{lsb}}$ and $\ket{2\ell+1}=\ket{\ell}_{{\cal C}_{rest}}\ket{1}_{{\cal C}_{lsb}}$, we have:
\begin{align*}
    R_0 &= 2\sum_{\ell=1}^{\infty} \ket{\ell}\bra{\ell}_{{\cal C}_{rest}}\sum_{k=0}^{K-1}\underbrace{\left(\cos\frac{\theta_k}{2}\ket{0}-i\sin\frac{\theta_k}{2}\ket{1}\right)}_{=:\ket{\phi_k}}\left(\cos\frac{\theta_k}{2}\bra{0}+i\sin\frac{\theta_k}{2}\bra{1}\right)_{{\cal C}_{lsb}}  \otimes\Pi_k - I\\
    &= -\ket{0}\bra{0}_{{\cal C}} \ot I + \sum_{\ell=1}^{\infty} \ket{\ell}\bra{\ell}_{{\cal C}_{rest}}\otimes \left(2\sum_{k=0}^{K-1}\ket{\phi_k}\bra{\phi_k}_{{\cal C}_{lsb}}\otimes \Pi_k-I\right).
\end{align*}

\begin{lemma}
    Let $R(U)=\ket{-}\bra{+}\otimes e^{i\sigma}U+\ket{+}\bra{-}\otimes e^{-i\sigma}U^\dagger$.
    Then $R(U)=2\sum_{k=0}^{K-1}\ket{\phi_k}\bra{\phi_k}\otimes \Pi_k-I$.
\end{lemma}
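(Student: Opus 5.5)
Since both sides act block-diagonally with respect to the eigendecomposition of $U$, the plan is to reduce the claim to a $2\times 2$ identity on the qubit ${\cal C}_{lsb}$ for each eigenspace. Substituting $U=\sum_k e^{ix_k}\Pi_k$ and $U^\dagger=\sum_k e^{-ix_k}\Pi_k$ gives
$$R(U)=\sum_{k=0}^{K-1}\left(e^{i\theta_k}\ket{-}\bra{+}+e^{-i\theta_k}\ket{+}\bra{-}\right)\otimes\Pi_k,$$
using $\theta_k=x_k+\sigma$. Because $\sum_k\Pi_k=I$, the right-hand side can likewise be written as $\sum_k\left(2\ket{\phi_k}\bra{\phi_k}-I\right)\otimes\Pi_k$. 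It therefore suffices to prove, for each real $\theta$ with $\ket{\phi}=\cos\frac{\theta}{2}\ket{0}-i\sin\frac{\theta}{2}\ket{1}$, that
$$e^{i\theta}\ket{-}\bra{+}+e^{-i\theta}\ket{+}\bra{-}=2\ket{\phi}\bra{\phi}-I.$$

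I will verify this single-qubit identity by computing both sides in the computational basis. On the right, $2\ket{\phi}\bra{\phi}-I$ has diagonal entries $2\cos^2\frac{\theta}{2}-1=\cos\theta$ and $2\sin^2\frac{\theta}{2}-1=-\cos\theta$. Its $(1,0)$ entry is $-2i\sin\frac{\theta}{2}\cos\frac{\theta}{2}=-i\sin\theta$, and its $(0,1)$ entry is $i\sin\theta$. So the right-hand side equals $\cos\theta\,Z+\sin\theta\,Y$, where $Y=\begin{pmatrix}0&-i\\ i&0\end{pmatrix}$.

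On the left, $\ket{-}\bra{+}=\frac12\begin{pmatrix}1&1\\-1&-1\end{pmatrix}$ and $\ket{+}\bra{-}=\frac12\begin{pmatrix}1&-1\\1&-1\end{pmatrix}$. Combining them with coefficients $e^{\pm i\theta}$ gives diagonal entries $\frac12(e^{i\theta}+e^{-i\theta})=\cos\theta$ and $-\cos\theta$. The $(0,1)$ entry is $\frac12(e^{i\theta}-e^{-i\theta})=i\sin\theta$, and the $(1,0)$ entry is $-i\sin\theta$. These match the right-hand side entry by entry.

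The main obstacle is only keeping track of signs and of which ket/bra carries $e^{i\sigma}U$ versus its adjoint. A useful sanity check is that both sides are Hermitian and square to $I$, and that they agree at $\theta=0$, where each reduces to $Z$.
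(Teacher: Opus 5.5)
Your proof is correct and follows essentially the same route as the paper. Like the paper, you expand $U$ and $U^\dagger$ spectrally, reducing the claim on each eigenspace to the single-qubit identity $e^{i\theta}\ket{-}\bra{+}+e^{-i\theta}\ket{+}\bra{-}=2\ket{\phi}\bra{\phi}-I$. The only difference is how that identity is checked: the paper rewrites $\ket{\phi_k}=e^{-i\theta_k/2}\frac{1}{\sqrt{2}}(\ket{+}+e^{i\theta_k}\ket{-})$, while you compare matrix entries.

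One small slip: with your convention for $Y$, the matrix whose entries you computed is $\cos\theta\,Z-\sin\theta\,Y$, not $\cos\theta\,Z+\sin\theta\,Y$. This does not affect the argument, because your comparison is entrywise and the entries themselves are correct.
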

\begin{proof}
First, it is simple to verify that
$$\ket{\phi_k}=e^{-i\theta_k/2}\frac{1}{\sqrt{2}}\left(\ket{+}+e^{i\theta_k}\ket{-}\right),$$
from which it follows that:
\begin{equation}
    \begin{split}
        2\ket{\phi_k}\bra{\phi_k} &= \left(\ket{+}+e^{i\theta_k}\ket{-}\right)\left(\bra{+}+e^{-i\theta_k}\bra{-}\right)\\
        &= e^{i\theta_k}\ket{-}\bra{+}+e^{-i\theta_k}\ket{+}\bra{-}+I.
    \end{split}\label{eq:projector-form}
\end{equation}
We conclude by computing:
\begin{align*}
    R(U) &= \ket{-}\bra{+}\otimes\sum_{k=0}^{K-1}e^{i(\sigma+x_k)}\Pi_k+\ket{+}\bra{-}\otimes\sum_{k=0}^{K-1}e^{-i(\sigma+x_k)}\Pi_k\\
    &= \sum_{k=0}^{K-1}\left(e^{i\theta_k}\ket{-}\bra{+}+e^{-i\theta_k}\ket{+}\bra{-}\right)\otimes\Pi_k & \mbox{since }\theta_k=\sigma+x_k\\
    &=\sum_{k=0}^{K-1}(2\ket{\phi_k}\bra{\phi_k}-I)\otimes\Pi_k & \mbox{by \Cref{eq:projector-form}}
\end{align*}
from which the claimed result follows.
\end{proof}

\begin{figure}[t]
\centering
\begin{minipage}{0.98\textwidth}
\centering
\[
R(U)=\ket{-}\bra{+}\otimes e^{i\sigma}U
     +\ket{+}\bra{-}\otimes e^{-i\sigma}U^\dagger .
\]
\begin{quantikz}[row sep=0.55cm,column sep=0.43cm]
\lstick{${\cal C}_{\mathrm{lsb}}$}
  & \gate{H}
  & \gate{R_z(-2\sigma)}
  & \octrl{1}
  & \ctrl{1}
  & \gate{H}
  & \gate{Z}
  & \qw \\
\lstick{$\mathbb C^N$}
  & \qw
  & \qw
  & \gate{U}
  & \gate{U^\dagger}
  & \qw
  & \qw
  & \qw
\end{quantikz}

\vspace{0.55cm}
\begin{minipage}[t]{0.48\linewidth}
\centering
(a) Reflection $R_0$.

\vspace{0.2cm}
\begin{quantikz}[row sep=0.54cm,column sep=0.52cm]
\lstick{${\cal C}_{\mathrm{rest}}$}
  & \octrl{1}
  & \qw
  & \qw
  & \octrl{1}
  & \qw\\
\lstick{$\ket{0}_{\cal D}$}
  & \targ{}
  & \gate{Z}
  & \octrl{1}
  & \targ{}
  & \qw\\
\lstick{${\cal C}_{\mathrm{lsb}}$}
  & \qw
  & \qw
  & \gate[wires=2]{R(U)}
  & \qw
  & \qw\\
\lstick{$\mathbb C^N$}
  & \qw
  & \qw
  &
  & \qw
  & \qw
\end{quantikz}
\end{minipage}
\hfill
\begin{minipage}[t]{0.48\linewidth}
\centering
(b) Reflection $R_1$.

\vspace{0.2cm}
\begin{quantikz}[row sep=0.55cm,column sep=0.58cm]
\lstick{${\cal C}$}
  & \gate{{\sf Inc}}
  & \gate[wires=2]{R_0}
  & \gate{{\sf Inc}^\dagger}
  & \qw \\
\lstick{$\mathbb C^N$}
  & \qw
  &
  & \qw
  & \qw
\end{quantikz}
\end{minipage}
\end{minipage}
\caption{Circuit implementations of the two reflections $R_0$ and $R_1$, where $R_z(-2\sigma) = e^{i\sigma Z}$.
In (a) we decompose a binary expansion of the basis of system ${\cal C}$ in its least significant bit and the remainder, and insert an auxiliary qubit ${\cal D}$ in between. The CNOTs with open control on the infinite-dimensional ${\cal C}_{rest}$ flip the auxiliary system if the state is $\ket{0}$. These, and ${\sf Inc}$ and its adjoint in (b), are the only infinite-dimensional operations.
}
\label{fig:phase-reflection-circuits}
\end{figure}

It is clear that we can implement $R(U)$, using one controlled call to each of $U$ and $U^\dagger$, and a handful of single-qubit gates (see \Cref{fig:phase-reflection-circuits}). Then, since 
$$R_0=-\ket{0}\bra{0}_{\cal C}\otimes I+\sum_{\ell=1}^\infty \ket{\ell}\bra{\ell}_{{\cal C}_{rest}} \otimes R(U),$$
we can implement $R_0$ using an operation that checks if ${\cal C}_{rest}$ is in the state $\ket{0}$, and a call to $R(U)$, as shown in \Cref{fig:phase-reflection-circuits}, yielding:
\begin{corollary}\label{cor:R0-phase-est}
    The reflection $R_0$ can be implemented using one controlled call to each of $U$ and $U^\dagger$, two calls to $\ket{0}\bra{0}_{{\cal C}_{rest}}\otimes X+(I-\ket{0}\bra{0})_{{\cal C}_{rest}}\otimes I$, and $O(1)$ additional one- and two-qubit gates.
\end{corollary}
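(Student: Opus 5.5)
The plan is to read the implementation off the decomposition of $R_0$ given just before the corollary, and to check it term by term against the circuit in \Cref{fig:phase-reflection-circuits}(a). We have
$$R_0=-\ket{0}\bra{0}_{{\cal C}_{rest}}\otimes I+\sum_{\ell\geq 1}\ket{\ell}\bra{\ell}_{{\cal C}_{rest}}\otimes R(U).$$
The first term should also cover the $\ket{1}_{\cal C}$ component, since $\ell=1$ corresponds to ${\cal C}_{rest}=0$ with lsb $1$. From the definition of $R_0$, every state with ${\cal C}$-level $0$ or $1$ is orthogonal to all the projectors in the sum, so $R_0$ acts on it as $-I$. This is consistent with the decomposition: on ${\cal C}_{rest}=0$ the operator is $-I$ on ${\cal C}_{lsb}\otimes\mathbb{C}^N$. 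On ${\cal C}_{rest}=\ell\geq1$ the operator is $2\sum_k\ket{\phi_k}\bra{\phi_k}\otimes\Pi_k-I$, which equals $R(U)$ by the preceding lemma.

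The circuit realizes this controlled structure as follows. First, an ancilla qubit ${\cal D}$ in $\ket{0}$ is flipped by the gate $\ket{0}\bra{0}_{{\cal C}_{rest}}\otimes X+(I-\ket{0}\bra{0})\otimes I$, so that ${\cal D}$ records whether ${\cal C}_{rest}=0$. Next, we apply $Z$ on ${\cal D}$, which produces the $-1$ sign exactly in the branch ${\cal C}_{rest}=0$. Then we apply $R(U)$ controlled on ${\cal D}=0$, i.e. in the branch ${\cal C}_{rest}\geq 1$. Finally, we apply the same flip again to uncompute ${\cal D}$ back to $\ket{0}$. Checking the two branches separately gives $-I$ and $R(U)$ respectively, which is exactly $R_0$. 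This uses exactly two calls to the ${\cal C}_{rest}$-controlled flip.

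For the cost, the controlled $R(U)$ is the circuit at the top of \Cref{fig:phase-reflection-circuits}. It consists of $H$, $R_z(-2\sigma)$, $U$ controlled on lsb $0$, $U^\dagger$ controlled on lsb $1$, then $H$ and $Z$. I would verify this identity directly: conjugating by $H$ maps $\ket{\pm}$ to computational basis states, and the phase gate supplies $e^{\pm i\sigma}$. Adding the extra control on ${\cal D}$ makes these doubly controlled $U$, $U^\dagger$, i.e. one controlled call to each, together with $O(1)$ controlled single-qubit gates, which decompose into $O(1)$ one- and two-qubit gates.

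The only delicate point is matching signs and conventions in the circuit for $R(U)$, in particular the ordering of $\ket{-}\bra{+}$ versus the lsb-controlled $U$. If the literal circuit is off by a global phase or a swap of $U$ and $U^\dagger$, I would fix it by replacing $\sigma$ with $-\sigma$ or adding a Pauli gate, which does not change the count.
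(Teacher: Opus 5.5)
Your proposal is correct and follows the paper's argument. You split $R_0$ into $-I$ on the ${\cal C}_{rest}=0$ block and $R(U)$ (via the preceding lemma) on the ${\cal C}_{rest}\geq 1$ blocks, then realize this with the circuit of \Cref{fig:phase-reflection-circuits}(a): two ${\cal C}_{rest}$-controlled flips of an ancilla, a $Z$, and an ancilla-controlled $R(U)$. Your reading of the first term as $-\ket{0}\bra{0}_{{\cal C}_{rest}}\otimes I$ (covering levels $0$ and $1$) is the right one, and the $R(U)$ circuit checks out exactly as drawn, so no sign or phase correction is needed.
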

We can also implement $R_{1}$ similarly, by noticing that
\begin{equation}\label{eq:R_1-R_0}
    R_1 = ({\sf Inc}_{\cal C}^\dagger\otimes I) R_0 ({\sf Inc}_{\cal C}\otimes I),
\end{equation}
where ${\sf Inc}$ is as in~\Cref{eq:Inc}.
\begin{corollary}\label{cor:R1-phase-est}
    The reflection $R_1$ can be implemented using one controlled call to each of $U$ and $U^\dagger$, one call to each of ${\sf Inc}$ and ${\sf Inc}^\dagger$, two calls to $\ket{0}\bra{0}_{{\cal C}_{rest}}\otimes X+(I-\ket{0}\bra{0})_{{\cal C}_{rest}}\otimes I$, and $O(1)$ additional one- and two-qubit gates.
\end{corollary}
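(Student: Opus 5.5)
The plan is to first verify the operator identity \eqref{eq:R_1-R_0}, $R_1=({\sf Inc}^\dagger_{\cal C}\otimes I)R_0({\sf Inc}_{\cal C}\otimes I)$. Once that holds, the resource count is immediate. Implement $R_1$ by applying ${\sf Inc}$ to ${\cal C}$, then $R_0$ as in \Cref{cor:R0-phase-est}, then ${\sf Inc}^\dagger$. By \Cref{cor:R0-phase-est}, this uses one controlled call to each of $U$ and $U^\dagger$, two calls to the zero-check on ${\cal C}_{rest}$, and $O(1)$ further gates, plus the one call each to ${\sf Inc}$ and ${\sf Inc}^\dagger$. This is exactly the claimed cost, and it is the circuit in \Cref{fig:phase-reflection-circuits}(b).

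To verify the identity, I will work block-by-block in the basis $\{\ket{j}_{\cal C}\}$, since both $R_0$ and $R_1$ are block diagonal in ${\cal C}$, and compare the action on each basis vector $\ket{j}\otimes\ket{u}$.

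\begin{itemize}
\item \textbf{Block structure of $R_0$.} $R_0$ acts as $-I$ on $\mathrm{span}\{\ket{0},\ket{1}\}\otimes\mathbb{C}^N$. For each $\ell\geq1$ it acts on the block $\mathrm{span}\{\ket{2\ell},\ket{2\ell+1}\}\otimes\mathbb{C}^N$ as $2\sum_k\ket{\phi_k^{(2\ell,2\ell+1)}}\bra{\phi_k^{(2\ell,2\ell+1)}}\otimes\Pi_k-I$. Here $\ket{\phi_k^{(a,b)}}=\cos\frac{\theta_k}{2}\ket{a}-i\sin\frac{\theta_k}{2}\ket{b}$.
\item \textbf{Block structure of $R_1$.} $R_1$ acts as $-I$ on $\ket{0}\otimes\mathbb{C}^N$. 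For each $\ell\geq1$ it acts as the same $2\times2$ reflection on the block $\mathrm{span}\{\ket{2\ell-1},\ket{2\ell}\}\otimes\mathbb{C}^N$.
\item \textbf{Effect of conjugation on the blocks.} Since ${\sf Inc}\ket{j}=\ket{j+1}$ and ${\sf Inc}^\dagger\ket{j+1}=\ket{j}$, conjugation by ${\sf Inc}$ shifts every block down by one index. The block $\{2\ell,2\ell+1\}$ of $R_0$ becomes the block $\{2\ell-1,2\ell\}$, with the same coefficients, which is precisely the corresponding block of $R_1$.
\item \textbf{The lowest level.} The vector $\ket{0}\otimes\ket{u}$ is sent to $\ket{1}\otimes\ket{u}$, then to $-\ket{1}\otimes\ket{u}$, then to $-\ket{0}\otimes\ket{u}$. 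This matches $R_1$ acting as $-I$ there.
\end{itemize}

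The one subtle step, which I expect to be the main (though minor) obstacle, is that ${\sf Inc}$ is only an isometry: ${\sf Inc}\,{\sf Inc}^\dagger=I-\ket{0}\bra{0}$, not $I$. So one must check that ${\sf Inc}^\dagger$ loses nothing after $R_0$ is applied. It suffices to observe that $R_0$ preserves $\mathrm{span}\{\ket{j}:j\geq1\}\otimes\mathbb{C}^N$, the range of ${\sf Inc}$. This holds because $R_0$ is block diagonal and never mixes $\ket{0}$ with any $\ket{j}$, $j\geq1$: the level-$1$ block is just $-I$ on $\ket{1}$, and all other blocks lie in $j\geq2$. Hence ${\sf Inc}^\dagger$ acts as a genuine inverse on everything it receives, and the composite is the unitary $R_1$ rather than a mere contraction. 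With the identity established, the corollary follows from \Cref{cor:R0-phase-est} as described above.
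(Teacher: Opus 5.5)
Your proposal is correct and takes the same route as the paper, which simply notes the identity $R_1=({\sf Inc}^\dagger_{\cal C}\otimes I)R_0({\sf Inc}_{\cal C}\otimes I)$ and invokes \Cref{cor:R0-phase-est}, as in circuit (b) of \Cref{fig:phase-reflection-circuits}. Your block-by-block check of that identity, including the observation that $R_0$ preserves the range of the isometry ${\sf Inc}$, supplies detail the paper leaves implicit.
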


Circuit diagrams for both $R_0$ and $R_1$ are shown in \Cref{fig:phase-reflection-circuits}. We will later argue that the ${\sf Inc}$, ${\sf Inc}^\dagger$ and the infinite-dimensional controlled NOTs can be replaced with truncated versions, which can be efficiently implemented on a finite-dimensional space.

\paragraph{Transducer analysis.} Having defined the transducer, and described its implementation, we now analyze the induced action and transduction complexity in both the negative and positive case.

\begin{lemma}[Negative Catalyst]\label{lem:neg-catalyst-phase}
Assume $x_k\in [0,\pi]$ for all $k$. Let $S$ be as in \Cref{eq:S-decision-phase}, and for each $k\in\{0,\dots,K-1\}$, define:
    $$\ket{v_k^-}:=
    -\sum_{\ell=2}^{\infty}\left( -i\gamma_k \right)^{\ell-1}\ket{\ell}\ket{\psi_k}
    =-\sum_{\ell=2}^{\infty}\left( -i\frac{\sin\frac{\theta_k}{2}}{\cos\frac{\theta_k}{2}} \right)^{\ell-1}\ket{\ell}\ket{\psi_k}.$$
Then $\ket{1}\ket{\psi_k}+\ket{v_k^-}\overset{S}{\mapsto} -\ket{1}\ket{\psi_k}+\ket{v_k^-}$, 
    and for any $k$ such that $x_k < \tilde{s}$,
    $$\norm{\ket{v_k^-}}^2=\frac{1}{2}\left(\frac{1}{\sin(\tilde s-x_k)}-1\right).$$
\end{lemma}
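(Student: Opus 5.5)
The plan is to reduce everything to a single eigenspace and then check the reflections block by block. Since $\ket{\psi_k}$ lies in the range of $\Pi_k$, both $R_0$ and $R_1$ from \Cref{eq:S-decision-phase} act on $\mathrm{span}\{\ket{\ell}\}\otimes\ket{\psi_k}$ as reflections on the counter register alone, with the fixed angle $\theta_k$. Write $c=\cos\frac{\theta_k}{2}$, $s=\sin\frac{\theta_k}{2}$, $\beta=-i\gamma_k=-is/c$, and $\ket{u_{m}}=c\ket{m}-is\ket{m+1}$. Then $R_0$ is $2\sum_{\ell\geq 1}\ket{u_{2\ell}}\bra{u_{2\ell}}-I$ and $R_1$ is $2\sum_{\ell\ge1}\ket{u_{2\ell-1}}\bra{u_{2\ell-1}}-I$ on the counter. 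The vector to track is $\ket{\xi}=\ket{1}+\ket{v_k^-}$, which has amplitudes $a_1=1$ and $a_\ell=-\beta^{\ell-1}$ for $\ell\ge 2$ (tensored with $\ket{\psi_k}$).

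The key observation I will verify is that, on each two-dimensional block a reflection acts on, the relevant pair of amplitudes is parallel to $(1,\beta)=\frac1c(c,-is)$, i.e.\ to the vector defining that block's projector.

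\textbf{Step 1 ($R_0$).} For the block $\{2\ell,2\ell+1\}$ with $\ell\ge1$, the amplitudes are $-\beta^{2\ell-1}(1,\beta)$. This is proportional to $\ket{u_{2\ell}}$, so $R_0$ fixes it. The remaining basis vector is $\ket{1}$, where $R_0$ acts as $-I$ ($\ket{0}$ plays no role). Hence $R_0\ket{\xi}=-\ket{1}+\ket{v_k^-}$.

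\textbf{Step 2 ($R_1$).} For the block $\{1,2\}$, the amplitudes of $-\ket{1}+\ket{v_k^-}$ are $(-1,-\beta)=-(1,\beta)$. For the block $\{2\ell-1,2\ell\}$ with $\ell\ge2$, they are $-\beta^{2\ell-2}(1,\beta)$. All of these are proportional to $\ket{u_{2\ell-1}}$, so $R_1$ fixes the whole vector. This gives $S(\ket{1}\ket{\psi_k}+\ket{v_k^-})=-\ket{1}\ket{\psi_k}+\ket{v_k^-}$.

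The identity holds componentwise, so it needs only that $c\neq0$. That requires $\theta_k\neq\pi$, which I will note as an implicit condition for $\ket{v_k^-}$ to be defined. Genuine membership of $\ket{v_k^-}$ in the Hilbert space, needed for it to count as a catalyst, comes from Step 3 in the case $x_k<\tilde s$.

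\textbf{Step 3 (norm).} We have $\norm{\ket{v_k^-}}^2=\sum_{\ell\ge2}\gamma_k^{2(\ell-1)}=\frac{\gamma_k^2}{1-\gamma_k^2}$. This is a convergent geometric series exactly when $\gamma_k<1$, i.e.\ $\theta_k<\pi/2$, i.e.\ $x_k<\tilde s$ (using $\theta_k=x_k+\frac\pi2-\tilde s\ge0$). Rewriting, $\frac{s^2}{c^2-s^2}=\frac{1-\cos\theta_k}{2\cos\theta_k}$, and $\cos\theta_k=\cos(\frac\pi2-(\tilde s-x_k))=\sin(\tilde s-x_k)$, which yields $\frac12\bigl(\frac1{\sin(\tilde s-x_k)}-1\bigr)$.

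The only delicate point is the bookkeeping of which index pairs each reflection couples, together with the phase convention showing that $(1,\beta)$ is parallel to $\ket{u_m}$ rather than orthogonal to it. Once that is checked, the rest is routine.
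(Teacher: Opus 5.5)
Your proof is correct and follows the paper's argument. The paper likewise writes $\ket{1}\ket{\psi_k}+\ket{v_k^-}$ as $\ket{1}\ket{\psi_k}$ plus multiples of $\cos\frac{\theta_k}{2}\ket{2\ell}-i\sin\frac{\theta_k}{2}\ket{2\ell+1}$, so that $R_0$ only flips the $\ket{1}$ term. It then writes the result as multiples of $\cos\frac{\theta_k}{2}\ket{2\ell-1}-i\sin\frac{\theta_k}{2}\ket{2\ell}$, which $R_1$ fixes, and sums the same geometric series to get $\frac{1}{2\cos\theta_k}-\frac12$.

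One minor slip: the convergence condition should be $\gamma_k^2<1$, not $\gamma_k<1$. For $\theta_k\in(\pi,3\pi/2)$, $\gamma_k$ is negative with $|\gamma_k|>1$. The lemma is unaffected, since it only needs the direction $x_k<\tilde s\Rightarrow 0\le\gamma_k<1$.
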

This implies that whenever $x_k<\tilde s$ (i.e., in the negative case), $S:\ket{1}\ket{\psi_k}\rightsquigarrow -\ket{1}\ket{\psi_k}$. Outside of this case, the norm of $\ket{v_k^-}$ is unbounded, and so we cannot meaningfully make this claim. 
\begin{proof}
First note that
\begin{align*}
    \ket{1}\ket{\psi_k}+\ket{v_k^-} &= \ket{1}\ket{\psi_k}-\sum_{\ell=1}^{\infty}\left(-i\frac{\sin\frac{\theta_k}{2}}{\cos\frac{\theta_k}{2}}\right)^{2\ell-1}\frac{1}{\cos\frac{\theta_k}{2}}\left(\cos\frac{\theta_k}{2}\ket{2\ell}-i\sin\frac{\theta_k}{2}\ket{2\ell+1}\right)\ket{\psi_k},
\end{align*}
from which it is clear that the first term is reflected by $R_{0}$, whereas $R_0$ fixes the rest, so after applying $R_0$, we have:
\begin{align*}
    -\ket{1}\ket{\psi_k}+\ket{v_k^-} &= -\sum_{\ell=1}^{\infty}\left(-i\frac{\sin\frac{\theta_k}{2}}{\cos\frac{\theta_k}{2}}\right)^{2\ell-2}\frac{1}{\cos\frac{\theta_k}{2}}\left(\cos\frac{\theta_k}{2}\ket{2\ell-1}-i\sin\frac{\theta_k}{2}\ket{2\ell}\right)\ket{\psi_k},
\end{align*}
which is fixed by $R_{1}$. Thus:
$$\ket{1}\ket{\psi_k}+\ket{v_k^-}\overset{R_{0}}{\mapsto} -\ket{1}\ket{\psi_k}+\ket{v_k^-} 
\overset{R_{1}}{\mapsto} -\ket{1}\ket{\psi_k}+\ket{v_k^-}.$$

Before we compute $\norm{\ket{v_k^-}}^2$ in the case where $x_k<\tilde s$, note that 
\begin{align*}
    0 &\leq x_k < \tilde s\\
    \frac{\pi}{2}-\tilde s &\leq \underbrace{x_k+\frac{\pi}{2}-\tilde s}_{\theta_k} < \frac{\pi}{2}\\
    -\frac{\pi}{2} &<\theta_k<\frac{\pi}{2}
\end{align*}
Thus, $|\cos\frac{\theta_k}{2}|>| \sin \frac{\theta_k}{2}|$, which is necessary and sufficient for $\norm{\ket{v_k^-}}$ to converge. 
We can now compute:
\begin{align*}
    \norm{\ket{v_k^-}}^2
    &= \sum_{\ell=1}^{\infty} \left(\frac{\sin\frac{\theta_k}{2}}{\cos\frac{\theta_k}{2}}\right)^{2\ell}
    =\sum_{\ell=1}^\infty\left(\frac{1-\cos\theta_k}{1+\cos\theta_k}\right)^{\ell}
    = \frac{1}{1-\frac{1-\cos\theta_k}{1+\cos\theta_k}}-1
    = \frac{1}{2\cos\theta_k}-\frac{1}{2}.
\end{align*}
Plugging in $\theta_k=\frac{\pi}{2}-\tilde s + x_k$, we get $\cos\theta_k=\cos\left(\frac{\pi}{2}-(\tilde s - x_k)\right)=\sin\left(\tilde s - x_k\right)$, which gives the claimed result.
\end{proof}
We note that as $x_k$ gets closer to the bound $\tilde s$, the complexity blows up, which is also what we should expect. Fortunately, we will always be in a position where the complexity only matters for settings where there is a gap. Something similar happens in the positive case, except we reverse the roles of $\cos$ and $\sin$ in the catalyst:
\begin{lemma}[Positive Catalyst]\label{lem:pos-catalyst-phase}
Let $S$ be as in \Cref{eq:S-decision-phase}, and for each
$k\in\{0,\dots,K-1\}$, define: 
    $$\ket{v_k^+}:=
    \sum_{\ell=2}^{\infty}\left( i\frac{1}{\gamma_k} \right)^{\ell-1}\ket{\ell}\ket{\psi_k}
    =\sum_{\ell=2}^{\infty}\left( i\frac{\cos\frac{\theta_k}{2}}{\sin\frac{\theta_k}{2}} \right)^{\ell-1}\ket{\ell}\ket{\psi_k}.$$
    Then 
    $\ket{1}\ket{\psi_k}+\ket{v_k^+}\overset{S}{\mapsto} \ket{1}\ket{\psi_k}+\ket{v_k^+},$
    and for any $k$ such that $x_k>\tilde s$,
    $$\norm{\ket{v_k^+}}^2=\frac{1}{2}\left(\frac{1}{\sin(x_k-\tilde s)}-1\right).$$
\end{lemma}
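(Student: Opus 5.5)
The proof will mirror that of \Cref{lem:neg-catalyst-phase}, with the roles of $\cos\frac{\theta_k}{2}$ and $\sin\frac{\theta_k}{2}$ exchanged. The key difference is that now the first term $\ket{1}\ket{\psi_k}$ should be absorbed into the $R_0$-invariant pieces rather than reflected. Write $c=\cos\frac{\theta_k}{2}$ and $s_k=\sin\frac{\theta_k}{2}$.

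\emph{Step 1: action of $R_0$.} Group $\ket{1}\ket{\psi_k}+\ket{v_k^+}$ into pairs $(\ket{2\ell},\ket{2\ell+1})$ for $\ell\geq 1$. Together with the $\ket{1}$ term, the part on levels $\geq 2$ is a geometric series. The level-$1$ term is orthogonal to the $R_0$ reflection space; it is acted on by $R_0$ only through the $-I$ component, i.e.\ it is negated.

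So instead I will check that each pair $(\ket{2\ell},\ket{2\ell+1})$ of the catalyst is proportional to the vector $s_k\ket{2\ell}+ic\ket{2\ell+1}$, i.e.\ orthogonal to $c\ket{2\ell}-is_k\ket{2\ell+1}$. Concretely, I will verify $c\cdot (i c/s_k)^{2\ell-1}-i s_k\cdot(\ldots)$, which amounts to the ratio of consecutive coefficients being $i c/s_k$. Given this, $R_0$ negates the entire state:
\[
\ket{1}\ket{\psi_k}+\ket{v_k^+}\overset{R_0}{\mapsto}-\ket{1}\ket{\psi_k}-\ket{v_k^+}.
\]

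\emph{Step 2: action of $R_1$.} Next I will check that the pairs $(\ket{2\ell-1},\ket{2\ell})$ for $\ell\geq1$, now including level $1$, are proportional to $s_k\ket{2\ell-1}+ic\ket{2\ell}$. This vector is orthogonal to $c\ket{2\ell-1}-is_k\ket{2\ell}$, so $R_1$ negates again, returning the original state. The whole statement thus reduces to the single ratio identity for consecutive coefficients, $\frac{ic/s_k\cdot c}{s_k}$-type bookkeeping. That the ratio matches in both pairings is exactly what the choice $(i/\gamma_k)^{\ell-1}$ ensures.

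\emph{Step 3: the norm.} For $x_k>\tilde s$ we have $\theta_k\in(\pi/2,\pi]$ under the window promise, since $\theta_k=x_k+\pi/2-\tilde s\leq \pi$. Hence $c<s_k$ and the series converges:
\[
\norm{\ket{v_k^+}}^2=\sum_{\ell\ge1}\left(\frac{1+\cos\theta_k}{1-\cos\theta_k}\right)^{\ell}=\frac{1}{1-\frac{1+\cos\theta_k}{1-\cos\theta_k}}-1=-\frac{1}{2\cos\theta_k}-\frac12 .
\]
Substituting $-\cos\theta_k=\sin(x_k-\tilde s)$ then gives the claim.

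The main obstacle is the sign and phase bookkeeping in Steps 1 and 2, in particular confirming that the level-$1$ term fits into the $R_1$ pairing with the correct phase $i c/s_k$. I will also check that $\theta_k\le\pi$, so that the formula remains valid and $s_k\neq0$.
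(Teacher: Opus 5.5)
Your proposal is correct and follows the paper's proof essentially line for line. $R_0$ negates the vector: the level-$1$ term trivially, and each pair on levels $(2\ell,2\ell+1)$ because it is proportional to $\sin\frac{\theta_k}{2}\ket{2\ell}+i\cos\frac{\theta_k}{2}\ket{2\ell+1}$. $R_1$ then negates it again via the pairing $(2\ell-1,2\ell)$, which includes level $1$, and the norm is the same geometric series as in the paper.

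One thing to fix: your opening remark that $\ket{1}\ket{\psi_k}$ is ``absorbed into the $R_0$-invariant pieces'' contradicts your own (correct) Step 1. There no part of the vector is $R_0$-invariant, and the level-$1$ term is simply negated.
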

Similar to the negative case, this implies that whenever $x_k>\tilde s$ (i.e. in the positive case), $S:\ket{1}\ket{\psi_k}\rightsquigarrow \ket{1}\ket{\psi_k}$. Outside of the positive case, the norm of $\ket{v_k^+}$ is unbounded. 
\begin{proof}
First note that
\begin{align*}
    \ket{1}\ket{\psi_k}+\ket{v_k^+} &= \ket{1}\ket{\psi_k}+\sum_{\ell=1}^{\infty}\left(i\frac{\cos\frac{\theta}{2}}{\sin\frac{\theta}{2}}\right)^{2\ell-1}\frac{1}{\sin\frac{\theta}{2}}\left(\sin\frac{\theta}{2}\ket{2\ell}+i\cos\frac{\theta}{2}\ket{2\ell+1}\right)\ket{\psi_k}.
\end{align*}
The first term is reflected by $R_0$, since is orthogonal to all $\cos\frac{\theta_k}{2}\ket{2\ell}-i\sin\frac{\theta_k}{2}\ket{2\ell+1}$ for $\ell\geq 1$. The second term is also orthogonal to all of these, and thus is also reflected by $R_0$, so 
$$R_0(\ket{1}\ket{\psi_k}+\ket{v_k^+})=-\ket{1}\ket{\psi_k}-\ket{v_k^+}.$$
Similarly, since:
\begin{align*}
    -\ket{1}\ket{\psi_k}-\sum_{\ell=2}^{\infty}\left(i\frac{1}{\gamma_k} \right)^{\ell-1}\ket{\ell}\ket{\psi_k} &= -\sum_{\ell=1}^{\infty}\left(i\frac{\cos\frac{\theta_k}{2}}{\sin\frac{\theta_k}{2}}\right)^{2\ell-2}\frac{1}{\sin\frac{\theta_k}{2}}\left(\sin\frac{\theta_k}{2}\ket{2\ell-1}+i\cos\frac{\theta_k}{2}\ket{2\ell}\right)\ket{\psi_k}.
\end{align*}
It is clearly orthogonal to all states $\cos\frac{\theta}{2}\ket{2\ell-1}-i\sin\frac{\theta}{2}\ket{2\ell}$, so $R_{1}$ reflects the state, resulting in: 
$$R_1R_0(\ket{1}\ket{\psi_k}+\ket{v_k^+})=\ket{1}\ket{\psi_k}+\ket{v_k^+}.$$
as claimed. 

Before computing $\norm{\ket{v_k^+}}^2$ in the case where $x_k>\tilde s$, note that:
\begin{align*}
    \tilde s &< x_k \leq \pi\\ \tilde s &< x_k < \pi+\tilde s\\
    \frac{\pi}{2} &< x_k+\frac{\pi}{2}-\tilde s < \frac{3\pi}{2} 
\end{align*}
so $|\cos\frac{\theta_k}{2}|<|\sin\frac{\theta_k}{2}|$, which is necessary and sufficient for $\norm{\ket{v_k}}$ to converge. We can now compute:
$$\norm{\ket{v_k^+}}^2
=\sum_{\ell=1}^\infty\left(\frac{\cos\frac{\theta_k}{2}}{\sin\frac{\theta_k}{2}}\right)^{2\ell}
=\sum_{\ell=1}^{\infty}\left(\frac{1+\cos\theta_k}{1-\cos\theta_k}\right)^\ell=\frac{1}{1-\frac{1+\cos\theta_k}{1-\cos\theta_k}}-1=\frac{1}{-2\cos\theta_k}-\frac{1}{2}$$
Plugging in $\theta_k=\frac{\pi}{2}-\tilde s + x_k$, we get $-\cos\theta_k=\cos\left(\frac{\pi}{2}-(x_k-\tilde s)\right)=\sin(x_k-\tilde s)$, which gives the claimed result.
\end{proof}

\subsection{Direct construction of composed transducer}\label{sec:direct-construction}

In this section, we prove \Cref{thm:main-transducer} by exhibiting a transducer for threshold max eigenphase. 
It combines the simple transducer for decision amplitude amplification (\Cref{def:decision-aa}) from \Cref{sec:decision-aa-transducer} with the infinite transducer for decision phase estimation (\Cref{def:decision-phase-est}) from the previous section. We give the combined transducer explicitly. It is based on the infinite graph in \Cref{fig:composed-transducer}.
\begin{figure}
    \centering
\begin{tikzpicture}
    \draw (-4,0)--(10,0);
    \filldraw (-2,0) circle (.05);
    \filldraw (0,0) circle (.05);
    \draw (0,0)--(1.5,1.5);
    \draw (0,0)--(1.5,-1.5);
    \filldraw (2.5,0) circle (.05);
    \filldraw (5,0) circle (.05);
    \filldraw (7,0) circle (.05);
    \filldraw (9.5,0) circle (.05);

    \node at (-3,.25) {$\sqrt{\gamma}\ket{\bar{1}}\ket{0}$};
    \node at (-1,.25) {$\sqrt{\gamma}\ket{0}\ket{0}$};
    \node[rotate=45] at (.75,1.1) {$\alpha_0\ket{1}\ket{\psi_0}$};
    \node[rotate=45] at (1.75,1.75) {$\dots$};
    \node[rotate=-45] at (.75,-1.1) {$\alpha_{K-1}\ket{1}\ket{\psi_{K-1}}$};
    \node[rotate=-45] at (1.75,-1.75) {$\dots$};

    \node[rotate=-60] at (1.25,.75) {$\dots$};
    \node[rotate=60] at (1.25,-.75) {$\dots$};

    \node at (1.3,.25) {$\alpha_k\ket{1}\ket{\psi_k}$};
    \node at (3.75,.25) {$\alpha_k\gamma_k\ket{2}\ket{\psi_k}$};

    \node[fill=white] at (6,0) {$\dots$};
    \node at (8.25,.25) {$\alpha_k\gamma_k^{\ell-1}\ket{\ell}\ket{\psi_k}$};
    \node at (10.5,0) {$\dots$};
\end{tikzpicture}
\caption{Visualization of the transducer for threshold max eigenphase. It can be seen as a composition of the transducers visualized in \Cref{fig:aa-transducer} and \Cref{fig:decision-phase-graph}, with an extra edge $\ket{\bar{1},0}$ added, whose purpose is to make the transducer canonical, by ensuring that the boundary is orthogonal to the space acted on by the oracle.}
    \label{fig:composed-transducer}
\end{figure}
Setting $\tilde s = s-\delta/2$, the following is just as in \Cref{sec:decision-phase-est-transducer}:
$$\gamma_k=\frac{\sin\frac{\theta_k}{2}}{\cos\frac{\theta_k}{2}}
\mbox{ where }\theta_k=\frac{\pi}{2}-s+\delta/2+x_k.$$
The transducer acts on 
$$\mathrm{span}\{\ket{\bar{1}}_{\overline{\cal C}}\ket{0}\}\oplus \mathrm{span}\{\ket{\ell}_{\overline{\cal C}}:\ell\in \mathbb{N}_0\}\otimes \mathbb{C}^N,$$
with $\overline{\cal C}=\mathrm{span}\{\ket{\bar{1}}\}\oplus {\cal C}$.
The edge $\ket{\bar{1},0}$ is the boundary. We could have gone without this extra edge and just used $\ket{0}\ket{0}$ as the boundary, but then this would not have been canonical, because the oracle $O_\gamma(A)$ acts non-trivially on this space.

Define 
$$\ket{\varphi}=\frac{1}{\sqrt{2}}(\ket{\bar{1},0}+\ket{0,0}),$$
which is the state we associate with the first vertex.
Note that 
$$\ket{\varphi_\gamma(A)}=\sqrt{\frac{\gamma}{1+\gamma}}\ket{0,0}+\frac{1}{\sqrt{1+\gamma}}\ket{1}\ket{\psi}
=\sqrt{\frac{\gamma}{1+\gamma}}\ket{0,0}+\frac{1}{\sqrt{1+\gamma}}\sum_{k=0}^{K-1}\alpha_k\ket{1}\ket{\psi_k}$$
is the state we associate with the second vertex in \Cref{fig:composed-transducer}.
The transducer will be the product of reflections $S=R_1'R_0'$, where: 
\begin{align*}
    R_1' &=2\ket{\varphi}\bra{\varphi}+2\sum_{k=0}^{K-1}\sum_{\ell=1}^\infty\left(\cos\frac{\theta_k}{2}\ket{2\ell-1}-i\sin\frac{\theta_k}{2}\ket{2\ell}\right)\left(\cos\frac{\theta_k}{2}\bra{2\ell-1}+i\sin\frac{\theta_k}{2}\bra{2\ell}\right)\otimes\Pi_k - I\\
    R_0' &= 2\ket{\varphi_\gamma(A)}\bra{\varphi_\gamma(A)}\\
    &\qquad +2\sum_{k=0}^{K-1}\sum_{\ell=1}^\infty\left(\cos\frac{\theta_k}{2}\ket{2\ell}-i\sin\frac{\theta_k}{2}\ket{2\ell+1}\right)\left(\cos\frac{\theta_k}{2}\bra{2\ell}+i\sin\frac{\theta_k}{2}\bra{2\ell+1}\right)\otimes\Pi_k - I
\end{align*}
which are reflections around odd and even vertices respectively.  Note that $R_1'$ is related to $R_1$ from \Cref{sec:decision-phase-est-transducer} by:
\begin{equation}
    R_1'=(I-2\ket{\varphi}\bra{\varphi})R_1
\end{equation}
and $R_0'$ is related to the $R_0$ from that section by:
\begin{equation}
    R_0'=R_0{(I-2\ket{\varphi_\gamma(A)}\bra{\varphi_\gamma(A)})} = R_0 O_\gamma(A).\label{eq:R0-prime}
\end{equation}
Then, as depicted in \Cref{fig:canonical-transducer-circuit},
\begin{equation}\label{eq:S-circ-full}
    S(O_\gamma(A))= \underbrace{(I-2\ket{\varphi}\bra{\varphi}) R_1 R_0}_{=:S^\circ} O_\gamma(A).
\end{equation} 
To conclude that this is a canonical form transducer with respect to the oracle $O_\gamma(A)$, we need to exhibit ${\cal H}$, ${\cal L}^\bullet$ and ${\cal L}^\circ$. In particular, the space ${\cal L}^\bullet$ on which the oracle acts non-trivially should be orthogonal to ${\cal H}$, which will be the boundary edge -- that is the reason for the introduction of the extra edge $\ket{\bar{1},0}$. We thus define:
\begin{align*}
    {\cal H}&=\mathrm{span}\{\ket{\bar{1},0}\}\\
    {\cal L}^{\bullet} &=\mathrm{span}\{\ket{0},\ket{1}\}\otimes \mathbb{C}^N\\
    {\cal L}^{\circ} &=\mathrm{span}\{\ket{\ell}:\ell\geq 2\}\otimes \mathbb{C}^N.
\end{align*}

\paragraph{Implementation of truncated transducer.} We first formally observe that the oracle $O_\gamma(A)$ has an efficient implementation. This is not, strictly speaking, needed for the proof of \Cref{thm:main-transducer}, but we will use it later, when we make use of \Cref{thm:main-transducer} in \Cref{sec:algorithm}. 

\begin{lemma}\label{lem:A-impl}
    $O_\gamma(A)$ can be implemented using one controlled call to each of $A$ and $A^\dagger$, and $O(\log N)$ additional one- and two-qubit gates.
\end{lemma}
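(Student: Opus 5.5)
The plan is to write $O_\gamma(A)$ as a conjugation of a simple reflection about a computational basis state by a state-preparation unitary. Define $B$ on $\CC^2\otimes\CC^N$ as $B = (\ketbra{0}{0}\otimes I + \ketbra{1}{1}\otimes A)(R_y\otimes I)$. Here $R_y$ is the single-qubit rotation with $R_y\ket{0}=\sqrt{\frac{\gamma}{1+\gamma}}\ket{0}+\frac{1}{\sqrt{1+\gamma}}\ket{1}$. Then $B\ket{0}\ket{0}=\ket{\varphi_\gamma(A)}$, since the controlled-$A$ acts trivially on the $\ket{0}$ branch and maps $\ket{1}\ket{0}$ to $\ket{1}A\ket{0}$. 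Because $B$ is unitary,
\begin{equation*}
O_\gamma(A)=I-2B\ketbra{0,0}{0,0}B^\dagger = B\,(I-2\ketbra{0,0}{0,0})\,B^\dagger .
\end{equation*}

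The key steps, in order, are as follows.
\begin{enumerate}
\item Implement $B^\dagger = (R_y^\dagger\otimes I)(\ketbra{0}{0}\otimes I+\ketbra{1}{1}\otimes A^\dagger)$. This uses one controlled call to $A^\dagger$ and one single-qubit rotation.
\item Implement the reflection $I-2\ketbra{0,0}{0,0}$ on the $1+\log N$ qubits. This is a multi-controlled $Z$ (conjugated by $X$ gates) that tests whether all $O(\log N)$ qubits are zero. Using a linear number of ancilla-assisted Toffolis, it costs $O(\log N)$ one- and two-qubit gates.
\item Implement $B$ with one controlled call to $A$ and one rotation.
\end{enumerate}
The total is one controlled call to each of $A$ and $A^\dagger$, plus $O(\log N)$ gates.

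The only mild obstacle is the zero-test reflection within the $O(\log N)$ gate budget. The standard decomposition of an $n$-controlled Toffoli into $O(n)$ elementary gates with clean or borrowed ancillas handles this. The rest is the routine identity verified above.

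One point of context: $O_\gamma(A)$ is itself applied in a controlled fashion within \Cref{thm:transducer-implementation}. Controlling the conjugated form only requires controlling the middle reflection, since $B$ and $B^\dagger$ cancel when the control is off. So the controlled version needs no additional calls to $A$.
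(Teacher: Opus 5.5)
Your proof is correct and is essentially the paper's argument: the paper likewise builds a unitary $G$ (a single-qubit rotation followed by a controlled call to $A$) with $G\ket{0}=\ket{\varphi_\gamma(A)}$, and writes $O_\gamma(A)=G(I-2\ketbra{0^{\log N+1}}{0^{\log N+1}})G^\dagger$, where the zero-reflection costs $O(\log N)$ gates. Your extra remark is correct and is a useful point the paper leaves implicit: a controlled $O_\gamma(A)$ only requires controlling the middle reflection, so it still uses a single controlled call to each of $A$ and $A^\dagger$.
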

\begin{proof}
We can implement a map $G$ such that $G\ket{0}=\ket{\varphi_\gamma(A)}$ using a single qubit gate that acts as:
$$\ket{0}\mapsto \sqrt{\frac{\gamma}{1+\gamma}}\ket{0}+\frac{1}{\sqrt{1+\gamma}}\ket{1},$$
a controlled call to $A$ to map $\ket{0}\mapsto \ket{\psi}$, and a map that generates $\ket{\bar{1},0}$, which we will assume without loss of generality is trivial. Then note that:
$$I - 2\ket{\varphi_\gamma(A)}\bra{\varphi_\gamma(A)} = G(I - 2\ket{0^{\log N + 1}}\bra{0^{\log N + 1}})G^\dagger,$$
and $I - 2\ket{0^{\log N + 1}}\bra{0^{\log N + 1}}$ can be implemented using $O(\log N)$ gates. 
\end{proof}

\begin{figure}[t]
\centering
\resizebox{0.98\textwidth}{!}{%
\begin{quantikz}[row sep=0.65cm,column sep=0.55cm]
\lstick{$\overline{\cal C}=\mathrm{span}\{\ket{\bar 1},\ket0,\ket1,\ldots\}$}
  & \gate[wires=2]{O_\gamma(A)}
  & \gate[wires=2]{R_0}
    \gategroup[wires=2,steps=5,
      style={dashed,rounded corners,inner xsep=4pt,inner ysep=4pt},
      background,
      label style={label position=below,anchor=north,yshift=-0.22cm}]
      {$S^\circ$}
  & \gate{{\sf Inc}}
  & \gate[wires=2]{R_0}
  & \gate{{\sf Inc}^\dagger}
  & \gate[wires=2]{I-2\ket{\varphi}\bra{\varphi}}
  & \qw \\
\lstick{$\mathbb C^N$}
  &
  &
  & \qw
  &
  & \qw
  &
  & \qw
\end{quantikz}%
}
\caption{Circuit diagram for the canonical transducer
$S(O_\gamma(A))=(I-2\ket{\varphi}\bra{\varphi})R_1R_0O_\gamma(A)$.}
\label{fig:canonical-transducer-circuit}
\end{figure}

We have seen, in \Cref{eq:S-circ-full} that $S^\circ=(I-2\ket{\varphi}\bra{\varphi})R_1R_0$, so its implementation follows from the implementations of $R_1$ and $R_0$ in \Cref{cor:R1-phase-est} and \Cref{cor:R0-phase-est} (see \Cref{fig:phase-reflection-circuits}), but these use the infinite-dimensional operations ${\sf Inc}$ and $CNOT$ with infinite-dimensional control. In the following, we show that we can truncate these in the natural way, to get a truncation that is exact, in the sense of \Cref{def:perfect-truncation}. 

\begin{lemma}\label{lem:finite-work-unitary} For any even $D\geq 2$:
the oracle $O_\gamma(A)$ is an exact $(D,1)$-truncation of itself; and
    there is an exact $(D,3)$-truncation of $S^\circ$ that can be implemented using $O(1)$ controlled calls to $U$ and $U^\dagger$, and $O(\log D)$ additional one- and two-qubit gates. 
\end{lemma}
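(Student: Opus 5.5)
The plan is to verify the two conditions of \Cref{def:perfect-truncation} separately for the oracle and the work unitary. Both act on ${\cal H}\oplus{\cal L}$, where the level $\ell$ is the $\overline{\cal C}$-index; we put $\ket{\bar 1,0}$ at level $0$.

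\emph{The oracle.} $O_\gamma(A)=I-2\ket{\varphi_\gamma(A)}\bra{\varphi_\gamma(A)}$ acts nontrivially only on $\mathrm{span}\{\ket{0},\ket{1}\}\otimes\mathbb{C}^N$, i.e.\ on levels $0$ and $1$, and is the identity elsewhere. So it maps ${\cal H}\oplus{\cal L}_{\leq \ell}$ into ${\cal H}\oplus{\cal L}_{\leq \max(\ell,1)}\subseteq {\cal H}\oplus{\cal L}_{\leq \ell+1}$ for every $\ell\geq 0$. Taking $O^{[D]}$ to be the restriction of $O_\gamma(A)$ to ${\cal H}\oplus{\cal L}_{\leq D}$, which is invariant because $D\geq 2$, both conditions hold with $m=1$.

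\emph{The work unitary.} Take $S^\circ=(I-2\ket{\varphi}\bra{\varphi})R_1R_0$.
\begin{enumerate}
\item $R_0$ is block diagonal with respect to the pairs of levels $\{2\ell,2\ell+1\}$, $\ell\geq1$, and acts as $-I$ on levels $0,1$ and on $\ket{\bar 1}$. Hence it does not increase the level by more than $1$.
\item $R_1$ is block diagonal on the pairs $\{2\ell-1,2\ell\}$, so it also raises the level by at most $1$.
\item $I-2\ket{\varphi}\bra{\varphi}$ acts only on $\ket{\bar 1,0}$ and $\ket{0,0}$.
\end{enumerate}
Composing, $S^\circ$ maps level $\leq\ell$ into level $\leq\ell+2$, hence certainly into $\leq \ell+3$. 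The slack is used below.

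The truncation $S^{\circ,[D]}$ is defined by the circuit of \Cref{fig:canonical-transducer-circuit}, with three replacements:
\begin{itemize}
\item ${\sf Inc}$ is replaced by ${\sf Inc}^{[D]}$ on levels $\{0,\dots,D\}$, taken modulo $D+1$ (or with the register padded to a power of two; only the action on low levels matters).
\item ${\sf Inc}^\dagger$ is replaced by $({\sf Inc}^{[D]})^\dagger$.
\item The infinite-control CNOTs on ${\cal C}_{rest}$ are replaced by a finite ``is ${\cal C}_{rest}=0$'' check on $O(\log D)$ qubits.
\end{itemize}
Since $D$ is even, the split into ${\cal C}_{lsb}$ and ${\cal C}_{rest}$ is compatible with the truncated range. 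The result is manifestly unitary. Its cost is $O(1)$ controlled calls to $U,U^\dagger$ (two uses of $R(U)$), $O(\log D)$ gates for the increment and the zero tests, and $O(1)$ gates for $I-2\ket{\varphi}\bra{\varphi}$. This uses \Cref{cor:R0-phase-est} and \Cref{cor:R1-phase-est}.

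The remaining check, and the main obstacle, is that $S^{\circ,[D]}$ agrees with $S^\circ$ on ${\cal H}\oplus{\cal L}_{\leq D-3}$. I will trace a state of level at most $D-3$ through the circuit step by step.
\begin{enumerate}
\item The first $R_0$ uses only zero tests and $R(U)$. On inputs of level $\leq D-3$ it outputs level $\leq D-2$, and the finite zero test coincides with the infinite one there.
\item ${\sf Inc}^{[D]}$ then raises the level to at most $D-1<D$, so no wraparound occurs. It agrees with ${\sf Inc}$ there.
\item The second $R_0$ brings the level to at most $D$. It is still within range and acts identically, including pairing $D-1$ with $D$ correctly, which is where evenness of $D$ is used.
\item $({\sf Inc}^{[D]})^\dagger$ acts on states with no support on level $0$ after the increment. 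The exception is the $\ket{0}$-level component, where $R_0$ acts as $-I$ and returns it to level $0$ intact, so no wraparound occurs here either. So $({\sf Inc}^{[D]})^\dagger$ inverts correctly.
\item The final reflection only touches levels $0$ and $\bar 1$.
\end{enumerate}
The bookkeeping in step 4 needs care: one must confirm that the pre-image of $({\sf Inc}^{[D]})^\dagger$ never contains $\ket{0}$ coming from a wrapped $\ket{D}$. This holds because everything passed to it is the image of ${\sf Inc}^{[D]}$ applied to levels $\leq D-1$, followed by the level-preserving map $R_0$ at level $0$. With that checked, both conditions of \Cref{def:perfect-truncation} hold with $m=3$.
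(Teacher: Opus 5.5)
Your proposal is correct and follows essentially the same route as the paper: you show that $R_0$, $R_1$ and $I-2\ket{\varphi}\bra{\varphi}$ each raise the level by at most one, replace ${\sf Inc}$ and the infinite-control CNOT by finite versions, and check that no wraparound occurs on inputs of level at most $D-3$; the paper checks agreement factor by factor rather than by tracing the circuit, which amounts to the same thing. Two small slips to fix: evenness of $D$ is what lets the paper's truncated range $\{0,\dots,D-1\}$ factor as ${\cal C}_{rest}\otimes{\cal C}_{lsb}$ (your range $\{0,\dots,D\}$ has an odd number of levels, and in your trace the second $R_0$ pairs level $D-1$ with $D-2$, so level $D$ is never reached), and $I-2\ket{\varphi}\bra{\varphi}$ does not cost $O(1)$ gates, since it must test whether $\overline{\cal C}$ lies in $\mathrm{span}\{\ket{\bar 1},\ket{0}\}$ and whether the system register is in $\ket{0}$ (a cost the paper's proof also leaves unaccounted for).
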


\begin{proof}
First of all, the oracle acts nontrivially only on the states $\ket{0},\ket{1}$ of the register ${\cal C}$, so it can increase $\ell$ by at most 1, satisfying the first part of \Cref{def:perfect-truncation} for $m=1$ (and any $D$). The second part is trivially satisfied since we made no change to the oracle. 

Next, we truncate $S^\circ = R_1'R_0$ to  $S^{\circ [D]}$ for some even $D$, by replacing:
\begin{enumerate}
    \item ${\cal C}$ with ${\cal C}^{[D]}:=\mathrm{span}\{0,\dots,D-1\}$
    \item ${\cal C}_{rest}$ with ${\cal C}_{rest}^{[D]}:=\mathrm{span}\{0,\dots,D/2-1\}$
    \item ${\sf Inc}$ with ${\sf Inc}^{[D]}$ (see \Cref{eq:Inc-D})
    \item the infinite-dimensional CNOT, $\ket{0}\bra{0}_{{\cal C}_{rest}}\otimes X_{\cal D} + (I-\ket{0}\bra{0})_{{\cal C}_{rest}}\otimes I_{\cal D}$, with the same operation, ${\sf CNOT}^{[D]}$ on ${\cal C}_{rest}^{[D]}\otimes {\cal D}$.
\end{enumerate}
It is simple to observe that both ${\sf Inc}^{[D]}$ and ${\sf CNOT}^{[D]}=\sum_{\ell=1}^{D/2-1}\ket{\ell}\bra{\ell}\otimes I_{\cal D}+\ket{0}\bra{0}\otimes X_{\cal D}$ can be implemented in $O(\log D)$ one- and two-qubit gates. Thus, by \Cref{cor:R1-phase-est} and \Cref{cor:R0-phase-est}, $S^{\circ [D]}$ can be implemented using one call to each of $U$ and $U^\dagger$, and $O(\log D)$ additional gates. 

It is easy to verify from the definition that $R_0$ and $R_1'$ both map ${\cal H} \op {\cal L}_{\leq m}$ to ${\cal H} \op {\cal L}_{\leq m+1}$, so their composition maps ${\cal H} \op {\cal L}_{\leq m}$ to ${\cal H} \op {\cal L}_{\leq m+2}$, satisfying the first part of \Cref{def:perfect-truncation}.  

For the second part, consider truncating $S^{\circ}=(I-2\ket{\varphi}\bra{\varphi})R_1 R_0$, as described above. 
The operator $R_0=-\ket{0}\bra{0}\otimes I + \sum_{\ell=1}^\infty\ket{\ell}\bra{\ell}\otimes R(U)-\ket{\bar{1},0}\bra{\bar{1},0}$ has trivial truncation $-\ket{0}\bra{0}\otimes I + \sum_{\ell=1}^{D/2-1}\ket{\ell}\bra{\ell}\otimes R(U) -\ket{\bar{1},0}\bra{\bar{1},0}$, which clearly acts identically on ${\cal H}\oplus {\cal L}_{\leq D-1}$. $R_1$, as given in \cref{eq:R_1-R_0} uses ${\sf Inc}$ and ${\sf Inc}^\dagger$ once. It is easy to see that replacing these by ${\sf Inc^{[D]}}$ and ${\sf Inc^{[D],\dagger}}$ we get an operator $R_1^{[D]}$ that acts as $R_1$ on ${\cal L}_{\leq D-2}$. 
Finally, $(I-2\ketbra{\varphi}{\varphi})$ by definition only acts nontrivially below level~$1$.
\end{proof}

We have established \Cref{item:spaces} of \Cref{thm:main-transducer} by definition, and \Cref{item:more-spaces} was established in the previous lemma.

\paragraph{Analysis of the transducer.} We now analyze the complexities of the transducer, proving \Cref{item:pos}, \Cref{item:neg}, and \Cref{item:complexities} of \Cref{thm:main-transducer}, in \Cref{lem:neg-catalyst} and \Cref{lem:pos-catalyst} below.

\begin{lemma}[Negative Catalyst]\label{lem:neg-catalyst}
Suppose $\norm{\Pi_{>s-\delta}\ket{\psi}}=0$, which is the negative case.
Then there exists a catalyst $\ket{v}\in {\cal L}$ such that $S(\ket{\bar{1},0}+\ket{v})=-\ket{\bar{1},0}+\ket{v}$, $\norm{\ket{v}}^2\leq O(\frac{1}{\gamma\delta})$, and $\norm{\Pi_{{\cal L}^\bullet}\ket{v}}^2\leq 1+\frac{1}{\gamma}$.
\end{lemma}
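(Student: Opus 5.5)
We construct the catalyst explicitly by gluing the decision amplitude-amplification catalyst for the case $\eps=0$ to the negative phase-estimation catalysts of \Cref{lem:neg-catalyst-phase}. In the negative case every $k$ with $\alpha_k\neq 0$ has $x_k\leq s-\delta<\tilde s=s-\delta/2$. So the series $\ket{v_k^-}$ converges, with
$$\norm{\ket{v_k^-}}^2=\tfrac12\Bigl(\tfrac{1}{\sin(\tilde s-x_k)}-1\Bigr)\leq \tfrac{1}{2\sin(\delta/2)}=O(1/\delta).$$
The candidate is
$$\ket{v}=\ket{0,0}+\frac{1}{\sqrt{\gamma}}\sum_k \alpha_k\bigl(\ket{1}\ket{\psi_k}+\ket{v_k^-}\bigr)=\ket{0,0}+\frac{1}{\sqrt\gamma}\ket{1}\ket{\psi}+\frac{1}{\sqrt\gamma}\sum_k\alpha_k\ket{v_k^-}.$$
This is the graph-flow picture of \Cref{fig:composed-transducer}: unit flow enters on the boundary edge $\ket{\bar 1,0}$ and passes through $\ket{0,0}$. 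At the second vertex the weights $\sqrt{\gamma}$ versus $\alpha_k$ split it into the lines, and each line carries the "negative" geometric flow. The normalization $1/\sqrt\gamma$ is fixed by requiring the combination at the second vertex to be proportional to $\sqrt{\gamma/(1+\gamma)}^{-1}\ket{\varphi_\gamma(A)}$.

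The dynamics are checked reflection by reflection, using $S=R_1'R_0'$. First, $\ket{\bar1,0}+\ket{0,0}=\sqrt2\ket{\varphi}$ and $\ket{0,0}+\gamma^{-1/2}\ket{1}\ket{\psi}=\gamma^{-1/2}\sqrt{1+\gamma}\,\ket{\varphi_\gamma(A)}$. The part $\sum_k\alpha_k\ket{v_k^-}$ splits, exactly as in the proof of \Cref{lem:neg-catalyst-phase}, into the even-vertex states $\cos\frac{\theta_k}{2}\ket{2\ell}-i\sin\frac{\theta_k}{2}\ket{2\ell+1}$ together with a residual $-\ket{1}\ket{\psi_k}$ term. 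So I would write $\ket{\bar1,0}+\ket{v}$ as $\ket{\bar1,0}$, plus a multiple of $\ket{\varphi_\gamma(A)}$, plus a $+1$ eigenvector of $R_0'$; the $\ket{1}\ket{\psi_k}$ terms must be regrouped carefully to get this. $R_0'$ fixes the vertex states and negates the orthogonal $\ket{\bar1,0}$, giving $-\ket{\bar1,0}+\ket{v}$. Next, I would check that $-\ket{\bar1,0}+\ket{v}$ is fixed by $R_1'$. Write it as $\ket{0,0}-\ket{\bar1,0}$, which is orthogonal to $\ket{\varphi}$ and to all odd-line states, plus $\gamma^{-1/2}\sum_k\alpha_k(\ket{1}\ket{\psi_k}+\ket{v_k^-})$. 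The latter is a combination of odd-vertex states, by the post-$R_0$ expression in \Cref{lem:neg-catalyst-phase}, up to sign.

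The main obstacle is the sign and regrouping bookkeeping. In \Cref{lem:neg-catalyst-phase} the boundary $\ket{1}\ket{\psi_k}$ is negated by $R_0$ and then fixed by $R_1$, whereas here $\ket{1}\ket{\psi_k}$ is private. So its sign must instead be consistent with $\ket{\varphi_\gamma(A)}$ being fixed by $R_0'$, which may force the sign $\ket{1}\ket{\psi_k}-\ket{v_k^-}$, or equivalently $-\ket{0,0}$ in front. I would fix all signs by demanding that each vertex projection of $\ket{\bar1,0}+\ket{v}$ match, and adjust $\ket{v}$ accordingly; the norms are unaffected.

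Finally, the norms. We have $\Pi_{{\cal L}^\bullet}\ket{v}=\ket{0,0}+\gamma^{-1/2}\ket{1}\ket{\psi}$, so $\norm{\Pi_{{\cal L}^\bullet}\ket{v}}^2=1+1/\gamma$. The remaining part has squared norm $\frac1\gamma\sum_k|\alpha_k|^2\norm{\ket{v_k^-}}^2\leq \frac{1}{2\gamma\sin(\delta/2)}=O(\frac{1}{\gamma\delta})$, using $\sin(\delta/2)\geq\delta/\pi$ since $\delta/2\leq\pi/2$, and the fact that the $\ket{v_k^-}$ for distinct $k$ are orthogonal. Summing the two parts gives $\norm{\ket{v}}^2\leq O(\frac{1}{\gamma\delta})$.
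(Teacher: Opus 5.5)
\textbf{Verdict: there is a genuine gap.} Your strategy is the paper's: glue the $\eps=0$ amplitude-amplification catalyst, scaled by $1/\sqrt{\gamma}$, to the negative line catalysts $\ket{v_k^-}$. Your norm bounds are correct and do not depend on signs. But the lemma asserts that a specific vector is a catalyst, and the vector you write down is not one.

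The $R_0'$ step is fine for your $\ket{v}$. It equals $\sqrt{(1+\gamma)/\gamma}\,\ket{\varphi_\gamma(A)}+\gamma^{-1/2}\sum_k\alpha_k\ket{v_k^-}$, and $\ket{v_k^-}$ by itself already lies in the even-vertex span, so there is no residual $-\ket{1}\ket{\psi_k}$ term. The $R_1'$ check, however, fails in two places.

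First, you correctly note that $\ket{0,0}-\ket{\bar 1,0}$ is orthogonal to $\ket{\varphi}$ and to every odd-line state. That places it in the $-1$ eigenspace of $R_1'$, so it is negated, not fixed.

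Second, $\ket{1}\ket{\psi_k}+\ket{v_k^-}$ is not a combination of odd-vertex states, even up to sign. Its amplitudes on $\ket{1}\ket{\psi_k}$ and $\ket{2}\ket{\psi_k}$ are $1$ and $i\gamma_k$. Every odd-vertex state $\cos\frac{\theta_k}{2}\ket{2\ell-1}-i\sin\frac{\theta_k}{2}\ket{2\ell}$ has ratio $-i\gamma_k$. The odd-line vector from \Cref{lem:neg-catalyst-phase} is $-\ket{1}\ket{\psi_k}+\ket{v_k^-}=-\sum_{\ell\geq 1}(-i\gamma_k)^{\ell-1}\ket{\ell}\ket{\psi_k}$, which differs from yours by a relative sign, not an overall one. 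Hence $S(\ket{\bar 1,0}+\ket{v})\neq -\ket{\bar 1,0}+\ket{v}$ for your $\ket{v}$.

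Deferring this to ``adjust the signs'' leaves the core verification undone. Your two suggested remedies are not equivalent, and neither works on its own. Since $S=R_1'R_0'$ and $R_1'^2=I$, the required identity is $R_0'(\ket{\bar 1,0}+\ket{v})=R_1'(-\ket{\bar 1,0}+\ket{v})$. Projecting it onto $\ket{\bar 1,0}$, and using $R_1'\ket{\bar 1,0}=\ket{0,0}$, gives $\braket{0,0}{v}=-1$, so your $+\ket{0,0}$ cannot be right.

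The working choice flips the sign of the entire ${\cal L}^\bullet$-part while keeping the $\ket{v_k^-}$ terms:
$\ket{v}=-\ket{0,0}-\gamma^{-1/2}\ket{1}\ket{\psi}+\gamma^{-1/2}\sum_k\alpha_k\ket{v_k^-}$.
This is exactly the paper's catalyst $-\ket{0,0}-\gamma^{-1/2}\sum_k\alpha_k(\ket{1}\ket{\psi_k}-\ket{v_k^-})$. With it:
\begin{itemize}
\item $\ket{v}=-\sqrt{(1+\gamma)/\gamma}\,\ket{\varphi_\gamma(A)}+\gamma^{-1/2}\sum_k\alpha_k\ket{v_k^-}$ is fixed by $R_0'$, so $R_0'$ only negates $\ket{\bar 1,0}$.
\item $-\ket{\bar 1,0}+\ket{v}=-\sqrt{2}\ket{\varphi}+\gamma^{-1/2}\sum_k\alpha_k(-\ket{1}\ket{\psi_k}+\ket{v_k^-})$ is fixed by $R_1'$.
\end{itemize}
With this $\ket{v}$, your norm computation goes through unchanged and gives the claimed bounds.
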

\begin{proof}
    Let 
    $$\ket{v}=-\ket{0,0}+\frac{1}{\sqrt{\gamma}}\sum_{k=0}^{K-1}\alpha_k\ket{{\bar{v}_k^-}}
    \mbox{ where }
    \ket{{\bar{v}_k^-}}=-\sum_{\ell=1}^\infty(-i\gamma_k)^{\ell-1}\ket{\ell}\ket{\psi_k}.$$
Note that
\begin{align*}
    \ket{\bar{1},0}+\ket{v} &= \ket{\bar{1},0}-\ket{0,0}-\frac{1}{\sqrt{\gamma}}\sum_{k=0}^{K-1}\alpha_k\left(\ket{1}\ket{\psi_k}-\ket{v_k^-}\right),
\end{align*}
where $\ket{v_k^-}$ is as in \Cref{lem:neg-catalyst-phase}. Applying $O_\gamma(A)$ reflects $\ket{\varphi_\gamma(A)}\propto\ket{0,0}+\frac{1}{\sqrt{\gamma}}\sum_k\alpha_k\ket{1}\ket{\psi_k}$, and fixes everything orthogonal, giving:
\begin{equation*}\label{eq:O-gamma-A-application}
    O_\gamma(A)(\ket{\bar{1},0}+\ket{v}) = \ket{\bar{1},0}+\ket{0,0}+\frac{1}{\sqrt{\gamma}}\sum_{k=0}^{K-1}\alpha_k\left(\ket{1}\ket{\psi_k}+\ket{v_k^-}\right).
\end{equation*}
By \Cref{lem:neg-catalyst-phase}, $R_1R_0(\ket{1}\ket{\psi_k}+\ket{v_k^-})=-\ket{1}\ket{\psi_k}+\ket{v_k^-}$. It is clear from the definitions of $R_1$ and $R_0$ that each of them reflect $\ket{\bar{1},0}$ and $\ket{0,0}$, meaning their product acts as the identity, giving: 
\begin{equation*}
    R_1R_0O_\gamma(A)(\ket{\bar{1},0}+\ket{v}) = \ket{\bar{1},0}+\ket{0,0}-\frac{1}{\sqrt{\gamma}}\sum_{k=0}^{K-1}\alpha_k\left(\ket{1}\ket{\psi_k}-\ket{v_k^-}\right).
\end{equation*}
Finally, applying $(I-2\ket{\varphi}\bra{\varphi})$ reflects $\ket{\varphi}\propto \ket{\bar{1},0}+\ket{0,0}$ and fixes everything orthogonal, giving:
\begin{equation*}
    (I-2\ket{\varphi}\bra{\varphi})R_1R_0O_\gamma(A)(\ket{\bar{1},0}+\ket{v}) = -\ket{\bar{1},0}-\ket{0,0}-\frac{1}{\sqrt{\gamma}}\sum_{k=0}^{K-1}\alpha_k\left(\ket{1}\ket{\psi_k}-\ket{v_k^-}\right)=-\ket{\bar{1},0}+\ket{v}.
\end{equation*}

 To upper bound the complexity, we first have:
\begin{equation}
\begin{split}
    \norm{\ket{v}}^2 &= 1+\frac{1}{\gamma}\norm{\sum_{k}\alpha_k(\ket{1}\ket{\psi_k}+\ket{v_k^-}}^2
    = 1+\frac{1}{\gamma}\left(\sum_k|\alpha_k|^2+\sum_k|\alpha_k|^2\norm{\ket{v_k^-}}^2\right)\\
    &\leq 1+\frac{1}{\gamma}\left(1+\sum_k|\alpha_k|^2\frac{1}{2}\left(\frac{1}{\sin(\tilde s - x_k)}-1\right)\right)
\end{split}\label{eq:norm-v-2}
\end{equation}
since the states $\ket{1}\ket{\psi_k}+\ket{v_k^-}$ are pairwise orthogonal, as are the two terms $\ket{1}\ket{\psi_k}$ and $\ket{v_k^-}$, and using the upper bound on $\norm{\ket{v_k^-}}^2$ from \Cref{lem:neg-catalyst-phase}, which we can apply because $x_k\leq s-\delta$ for all $k$ in the negative case.

We can compute
\begin{align*}
    \frac{1}{\sin(\tilde s - x_k)} = \frac{1}{\sin(s-\delta/2 - x_k)}\leq \frac{\pi}{2(s-\delta/2-x_k)},
\end{align*}
as long as $s-\delta/2-x_k\in [0,\pi/2]$. Indeed, since $\Pi_{>s-\delta}\ket{\psi}=0$ by assumption, we have that for any $k$ such that $\alpha_k\neq 0$, $x_k\leq s-\delta < s-\delta/2$, so $s-\delta/2-x_k\geq 0$, and $s-\delta/2-x_k\leq \frac{\pi}{2}$ follows from $s\leq \pi/2$, $x_k\geq 0$ (by the windowing promise) and $\delta\geq 0$. Thus, continuing from \Cref{eq:norm-v-2}:
\begin{align*}
    \norm{\ket{v}}^2 
    &\leq 1+\frac{1}{\gamma}\left(1+\sum_k|\alpha_k|^2\frac{1}{2}\left(\frac{\pi}{2(s-\delta/2 - x_k)}-1\right)\right)\\
    &\leq 1+\frac{1}{\gamma}\left(1+\frac{1}{2}\left(\frac{\pi}{2(s-\delta/2 - (s-\delta))}-1\right)\right)\\
    &= 1+\frac{1}{2\gamma}\left(1+\frac{\pi}{\delta}\right) = O\left(\frac{1}{\gamma\delta}\right),
\end{align*}
where we once again used $x_k\leq s-\delta$ for any $k$ such that $\alpha_k\neq 0$, since we're in the negative case.

To upper bound the Las Vegas query complexity, we have:
\begin{equation*}
    \norm{\Pi_{{\cal L}^\bullet}\ket{v}}^2 = \norm{\ket{0,0}+\frac{1}{\sqrt\gamma}\sum_k\alpha_k\ket{1}\ket{\psi_k}}^2 = 1+\frac{1}{\gamma}. \qedhere
\end{equation*}
\end{proof}

\begin{lemma}[Positive Catalyst]\label{lem:pos-catalyst}
Suppose $\norm{\Pi_{>s}\ket{\psi}}\geq \gamma$, which is the positive case.
    Then there exists a catalyst $\ket{v}\in {\cal L}$ such that $S(\ket{\bar{1},0}+\ket{v})=\ket{\bar{1},0}+\ket{v}$, $\norm{\ket{v}}^2\leq O(\frac{1}{\gamma\delta})$, and $\norm{\Pi_{{\cal L}^\bullet}\ket{v}}^2\leq 1+\frac{1}{\gamma}$.
\end{lemma}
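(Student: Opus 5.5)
The plan is to mirror \Cref{lem:neg-catalyst}, but to use the positive-case structure of the amplitude-amplification transducer from \Cref{sec:decision-aa-transducer}. There, the catalyst was supported only on the ``marked'' part and was chosen orthogonal to the state reflected by the oracle. Here the marked part is the set of eigenspaces $P=\{k: x_k>s\}$, and on each such line we attach the positive catalyst $\ket{v_k^+}$ of \Cref{lem:pos-catalyst-phase} (with $\tilde s=s-\delta/2$). Write $p=\norm{\Pi_{>s}\ket{\psi}}^2=\sum_{k\in P}|\alpha_k|^2\geq\gamma^2$. I would take
$$\ket{v}=-\ket{0,0}+\frac{\sqrt{\gamma}}{p}\sum_{k\in P}\alpha_k\bigl(\ket{1}\ket{\psi_k}+\ket{v_k^+}\bigr),$$
which clearly lies in ${\cal L}$.

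\textbf{Verifying the action.} I would apply the four factors of $S=(I-2\ket{\varphi}\bra{\varphi})R_1R_0O_\gamma(A)$ in order.
\begin{enumerate}
\item First, $O_\gamma(A)$ acts as the identity on $\ket{\bar 1,0}+\ket{v}$. The only overlap with $\ket{\varphi_\gamma(A)}\propto\sqrt{\gamma}\ket{0,0}+\sum_k\alpha_k\ket{1}\ket{\psi_k}$ comes from the ${\cal L}^\bullet$ part of $\ket{v}$. That overlap is proportional to $-\sqrt\gamma+\frac{\sqrt\gamma}{p}\sum_{k\in P}|\alpha_k|^2=0$, since $\ket{v_k^+}$ lives on levels $\ell\geq 2$ and $\ket{\bar 1,0}$ is orthogonal to everything.
\item Next, $R_1R_0$ fixes $\ket{\bar 1,0}-\ket{0,0}$, because each of the two reflections negates both basis states.
\item $R_1R_0$ also fixes each $\ket{1}\ket{\psi_k}+\ket{v_k^+}$ by \Cref{lem:pos-catalyst-phase}.
\item Finally, $I-2\ket{\varphi}\bra{\varphi}$ fixes the whole vector, since $\ket{\bar 1,0}-\ket{0,0}\perp\ket{\varphi}$ and the remaining terms live on levels $\ell\geq1$.
\end{enumerate}
Hence $S(\ket{\bar 1,0}+\ket{v})=\ket{\bar 1,0}+\ket{v}$. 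The sign choices are the delicate bookkeeping point. The $\ket{0,0}$ coefficient must be $-1$ so that the final reflection about $\ket{\varphi}$ acts trivially, and this in turn forces the coefficient $\sqrt\gamma/p$ through the orthogonality condition in step 1.

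\textbf{Complexity bounds.} The Las Vegas part is
$$\norm{\Pi_{{\cal L}^\bullet}\ket{v}}^2=1+\frac{\gamma}{p^2}\,p=1+\frac{\gamma}{p}\leq 1+\frac1\gamma,$$
using $p\geq\gamma^2$. For the full norm, orthogonality of the different $k$-lines and of $\ket{1}\ket{\psi_k}$ versus $\ket{v_k^+}$ gives
$$\norm{\ket v}^2=1+\frac{\gamma}{p^2}\sum_{k\in P}|\alpha_k|^2\bigl(1+\norm{\ket{v_k^+}}^2\bigr).$$
For $k\in P$ we have $x_k>s$, so $x_k-\tilde s>\delta/2$. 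By the window promise and $\tilde s>0$, also $x_k-\tilde s\leq\pi/2$. Then $\sin(x_k-\tilde s)\geq\frac{2}{\pi}\cdot\frac{\delta}{2}$, so \Cref{lem:pos-catalyst-phase} gives $\norm{\ket{v_k^+}}^2\leq\frac{\pi}{2\delta}$. Therefore
$$\norm{\ket v}^2\leq 1+\frac{\gamma}{p}\Bigl(1+\frac{\pi}{2\delta}\Bigr)\leq 1+\frac1\gamma\Bigl(1+\frac{\pi}{2\delta}\Bigr)=O\Bigl(\frac1{\gamma\delta}\Bigr).$$

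The main obstacle is guessing the right ansatz for $\ket v$: the signs and the $\sqrt\gamma/p$ normalization, which keep the oracle and $\ket{\varphi}$-reflections trivial. Once that is fixed, everything reduces to \Cref{lem:pos-catalyst-phase} plus the bound $\sin y\geq 2y/\pi$ on $[0,\pi/2]$.
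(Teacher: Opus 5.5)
Your proposal is correct and follows the paper's proof essentially step for step. Your catalyst is exactly the paper's: your $\ket{1}\ket{\psi_k}+\ket{v_k^+}$ is its $\ket{\bar v_k^+}$, and your $p$ is its $\eps$. Like the paper, you verify that $O_\gamma(A)$, $R_1R_0$, and $I-2\ket{\varphi}\bra{\varphi}$ each fix $\ket{\bar 1,0}+\ket{v}$, and you obtain the same norm bounds via \Cref{lem:pos-catalyst-phase} and $\sin y\geq 2y/\pi$. Summing over $x_k>s$ rather than the paper's $x_k\geq s$ is, if anything, slightly tidier, since it matches $p=\norm{\Pi_{>s}\ket{\psi}}^2$ exactly.
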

\begin{proof}
    Let 
    $$\ket{v}=-\ket{0,0}+{\sqrt{\gamma}}\sum_{k:x_k\geq s}\frac{\alpha_k}{{\eps}}\ket{\bar{v}_k^+}
    \mbox{ where }
    \ket{\bar{v}_k^+}=\sum_{\ell=1}^\infty\left(i\frac{1}{\gamma_k}\right)^{\ell-1}\ket{\ell}\ket{\psi_k},$$
    and where $\eps = \sum_{k:x_k\geq s}|\alpha_k|^2=\norm{\Pi_{>s}\ket{\psi}}^2\geq \gamma^2$.
Note that
\begin{equation}\label{eq:orig}
    \ket{\bar{1},0}+\ket{v}=\ket{\bar{1},0}-\ket{0,0}+{\sqrt{\gamma}}\sum_{k:x_k\geq s}\frac{\alpha_k}{{\eps}}(\ket{1}\ket{\psi_k}+\ket{v_k^+})
\end{equation}
where $\ket{v_k^+}$ is as in \Cref{lem:pos-catalyst-phase}.

First, $O_\gamma(A)$ fixes anything orthogonal to $\ket{\varphi_\gamma(A)}\propto \sqrt{\gamma}\ket{0,0}+\sum_k\alpha_k\ket{1}\ket{\psi_k}$, which certainly includes $\ket{\bar{1},0}$ and $\ket{v_k^+}$ for all $k$ (which are only supported on $\ell\geq 2$ in register $\cal C$). We also have:
\begin{align*}
    \bra{\varphi_\gamma(A)}\left(\ket{0,0}-{\sqrt{\gamma}}\sum_{k:x_k\geq s}\frac{\alpha_k}{{\eps}}\ket{1}\ket{\psi_k}\right) &= \sqrt{\gamma} - \sqrt{\gamma}\sum_{k:x_k\geq s}\frac{|\alpha_k|^2}{\eps} = \sqrt{\gamma}-\sqrt{\gamma}=0,
\end{align*}
by definition of $\eps$. We have thus established that $O_\gamma(A)$ leaves $\ket{\bar{1},0}+\ket{v}$ unchanged. 

By \Cref{lem:pos-catalyst-phase}, $R_1R_0(\ket{1}\ket{\psi_k}+\ket{v_k^+})=\ket{1}\ket{\psi_k}+\ket{v_k^+}$, and it is clear from the definitions of $R_0$ and $R_1$ that each of them reflects each of $\ket{0,0}$ and $\ket{\bar{1},0}$, so the product keeps both unchanged. Thus $R_1R_0$ also leaves $\ket{\bar{1},0}+\ket{v}$ unchanged. 

Finally, as $I-2\ket{\varphi}\bra{\varphi}$ leaves everything orthogonal to $\ket{\varphi}\propto \ket{\bar{1},0}+\ket{0,0}$ fixed, it too leaves $\ket{\bar{1},0}+\ket{v}$, so we can conclude that $S=(I-2\ket{\varphi}\bra{\varphi})R_1R_0 O_\gamma(A)$ leaves $\ket{\bar{1},0}+\ket{v}$ fixed, as claimed.

To upper bound the complexity, we first have:
\begin{equation}\label{eq:norm-v-2-pos}
\begin{split}
    \norm{\ket{v}}^2 &= 1+\frac{\gamma}{\eps^2}\norm{\sum_{k:x_k\geq s}\alpha_k(\ket{1}\ket{\psi_k}+\ket{v_k^+})}^2
    = 1+\frac{\gamma}{\eps^2}\left(\sum_{k:x_k\geq s}|\alpha_k|^2+\sum_{k:x_k\geq s}|\alpha_k|^2\norm{\ket{v_k^+}}^2\right)\\
    &\leq 1+\frac{\gamma}{\eps^2}\left(\eps+\sum_{k:x_k\geq s}|\alpha_k|^2\frac{1}{2}\left(\frac{1}{\sin(x_k-\tilde s)}-1\right)\right)
\end{split}
\end{equation}
since the states $\ket{1}\ket{\psi_k}+\ket{v_k^+}$ are pairwise orthogonal, as are the two terms $\ket{1}\ket{\psi_k}$ and $\ket{v_k^+}$, and using the upper bound on $\norm{\ket{v_k^+}}^2$ from \Cref{lem:pos-catalyst-phase}, which we can apply because $x_k\geq s$ for all $k$ in the sum.

We can compute
\begin{align*}
    \frac{1}{\sin(x_k-\tilde s)} = \frac{1}{\sin(x_k-s+\delta/2)}\leq \frac{\pi}{2(x_k-s+\delta/2)}
    \leq \frac{\pi}{2(s-s+\delta/2)} = \frac{\pi}{\delta},
\end{align*}
as long as $x_k-s+\delta/2\in [0,\pi/2]$. Indeed, since $x_k\geq s > s-\delta/2$ for any $k$ in the sum, $x_k-s+\delta/2>0$, and $x_k-s+\delta/2\leq \frac{\pi}{2}$ follows from $x_k\leq \pi/2$ and $s\geq \delta$. Thus, continuing from \Cref{eq:norm-v-2-pos}:
\begin{align*}
    \norm{\ket{v}}^2 
    &\leq 1+\frac{\gamma}{\eps^2}\left(\eps+\sum_{k:x_k\geq s}|\alpha_k|^2\frac{1}{2}\left(\frac{\pi}{\delta}-1\right)\right)
    =1+\frac{\gamma}{\eps^2}\left(\eps+\eps\frac{1}{2}\left(\frac{\pi}{\delta}-1\right)\right)
    =O\left(\frac{\gamma}{\eps\delta}\right).
\end{align*}
Since $\eps\geq \gamma^2$ in the positive case, this is $O(\frac{1}{\gamma\delta})$, as claimed.

Finally, to upper bound the Las Vegas complexity, we have:
\begin{equation*}
    \norm{\Pi_{{\cal L}^\bullet}\ket{v}}^2 = \norm{-\ket{0,0}+\frac{\sqrt\gamma}{\eps}\sum_{k:x_k\geq s}\alpha_k\ket{1}\ket{\psi_k}}^2 = 1+\frac{\gamma}{\eps^2}\eps=1+\frac{\gamma}{\eps}.
\end{equation*}
Since $\eps\geq \gamma^2$, this is at most $1+\frac{1}{\gamma}$.
\end{proof}

\begin{remark}
    The transducer we designed in this section is the walk operator of an infinite-dimensional quantum walk, and although we have analyzed its complexity as a transducer, if we had instead analyzed it as a standard quantum walk, like that in~\cite{belovs2013ElectricWalks}, we would likely find that it approximately solves the threshold max eigenphase problem using a number of calls to the walk operator that scales like $O(W(S))$. However, implementing the full walk operator requires calling $A$ and its adjoint, and so the number of queries to $A$ would scale like $O(W(S))$ as well. 
\end{remark}

\section{Quantum algorithm for max eigenphase estimation}\label{sec:algorithm}

The following corollary of \Cref{thm:main-transducer} uses the transducer from the previous section to get a quantum algorithm for the threshold max eigenphase problem (\Cref{def:threshold}), which is the main subroutine in \Cref{alg:max-eigenphase-search} below for max eigenphase estimation, which we analyze in \Cref{thm:main}, thus proving \Cref{thm:main-intro}. 

\begin{corollary}\label{cor:threshold-eigenphase-alg}
    For any $\epsilon\in (0,1/2)$, $s\in (0,\pi/2]$, $\delta'\in (0,s)$, and $\gamma\in (0,1]$, there exists an $\epsilon$-error quantum algorithm, ${\cal D}_{\epsilon,s,\delta',\gamma}(A,U)$, for the threshold max eigenphase problem, that makes $O(\frac{1}{\gamma}\log\frac{1}{\epsilon})$ calls to $A$, $O(\frac{1}{\gamma\delta'}\log\frac{1}{\epsilon})$ calls to $U$, and $O((\frac{1}{\gamma\delta'}\log\frac{1}{\gamma\delta'}+\frac{1}{\gamma}\log N)\log\frac{1}{\epsilon})$ additional one- and two-qubit gates.
\end{corollary}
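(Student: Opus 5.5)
We take the canonical transducer $S(O)=S^\circ O$ of \Cref{thm:main-transducer} with parameters $s,\delta=\delta',\gamma$, and convert it into a decision procedure in three stages: implement it as a finite algorithm, read out the $\pm1$ phase, then boost the success probability. We first apply \Cref{thm:transducer-implementation} with $W=c/(\gamma\delta')$ for a suitable constant $c$ (justified by \Cref{item:complexities}), $L=1+1/\gamma$, and a small constant error $\eps_0$, say $\eps_0=1/10$. This yields an algorithm ${\cal A}(S)$ that maps $\ket{\bar 1}\ket 0$ to a state within $\eps_0$ of $\pm\ket{\bar1}\ket0$, with sign $+$ in the positive case and $-$ in the negative case (\Cref{item:pos,item:neg}). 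Its costs are $O(L)=O(1/\gamma)$ controlled calls to $O_\gamma(A)$ and $O(W)=O(1/(\gamma\delta'))$ controlled calls to $S^\circ$.

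To make this finite, we note that the total number of calls is $K_S+K_O=O(1/(\gamma\delta'))$. We therefore choose an even $D=\Theta(1/(\gamma\delta'))$ with $3K_S+K_O<D$. By \Cref{item:more-spaces} and \Cref{lem:truncate-no-error}, replacing $S^\circ$ with its exact $(D,3)$-truncation (and keeping the oracle as is) leaves ${\cal A}(S)$ unchanged on the input $\ket{\bar1}\ket0$. Each truncated $S^\circ$ then costs $O(1)$ controlled calls to $U,U^\dagger$ and $O(\log D)=O(\log\frac{1}{\gamma\delta'})$ gates. Each $O_\gamma(A)$ costs one controlled call to each of $A$ and $A^\dagger$ plus $O(\log N)$ gates, by \Cref{lem:A-impl}. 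The ancilla of \Cref{thm:transducer-implementation} contributes $O(\log W)$ gates per step, which fits within the same $\frac{1}{\gamma\delta'}\log\frac{1}{\gamma\delta'}$ budget. The totals for one run are therefore $O(1/\gamma)$ calls to $A$, $O(1/(\gamma\delta'))$ calls to $U$, and $O(\frac{1}{\gamma\delta'}\log\frac{1}{\gamma\delta'}+\frac1\gamma\log N)$ gates.

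To extract the sign, we run single-bit phase estimation (a Hadamard test). We prepare $\ket{+}$ on a control qubit, apply ${\cal A}(S)$ controlled on it to $\ket{\bar1}\ket0$ together with the algorithm's ancilla, then apply $H$ and measure. This is the main technical point. Controlling the algorithm needs controlled versions of $S^\circ$ and $O$; the theorem already allows these, and the gates on the ancilla are controlled at constant overhead. The ideal output on the control qubit is $\ket0$ in the positive case and $\ket1$ in the negative case. Because ${\cal A}(S)$ may leave the ancilla register not exactly returned, we bound the error carefully: the controlled branch's state is within $\eps_0$ of $\pm\ket{\bar1}\ket0\ket{0}_{\rm anc}$. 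The failure amplitude is therefore at most $\eps_0$, so the failure probability is at most $\eps_0^2$, or at most $O(\eps_0)$ in the worst case. Either way, for constant $\eps_0$ the error is a constant below $1/3$.

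Finally, we repeat this constant-error test $O(\log\frac1\epsilon)$ times and output the majority answer. By a Chernoff bound the error becomes at most $\epsilon$. Multiplying the per-run costs by $O(\log\frac1\epsilon)$ gives exactly the stated bounds on calls to $A$, calls to $U$, and gates.
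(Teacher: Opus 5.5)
Your proposal is correct and follows the same route as the paper's proof. You convert the canonical transducer of Theorem~\ref{thm:main-transducer} via Theorem~\ref{thm:transducer-implementation} into a constant-error algorithm with $O(1/\gamma)$ oracle calls and $O(1/(\gamma\delta'))$ work-unitary calls, replace $S^\circ$ by its exact $(D,3)$-truncation with $D=\Theta(1/(\gamma\delta'))$ using Lemma~\ref{lem:truncate-no-error}, count costs via Lemma~\ref{lem:A-impl}, and boost by $O(\log\frac{1}{\epsilon})$-fold majority voting. Your explicit Hadamard-test readout of the sign, with failure probability $O(\eps_0^2)$, fills in a step the paper only alludes to. Your truncation condition $3K_S+K_O<D$ is slightly more conservative than the paper's $2K_S+K_O$, and either way $D=\Theta(1/(\gamma\delta'))$ suffices.
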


\begin{proof}
By \Cref{thm:transducer-implementation}, we can turn the transducer in \Cref{thm:main-transducer} into a bounded-error quantum algorithm  for the threshold eigenphase problem, that makes $K_O=O(1/\gamma)$ calls to $O_{\gamma}(A)$ and $K_S=O(1/(\gamma\delta'))$ calls to $S^\circ$ and additional gates.
By \Cref{lem:finite-work-unitary}, we can replace each call to $S^\circ$ with an exact $(D,3)$-truncation, $S^{\circ [D]}$, and, since $O$ is an exact $(D,1)$-truncation of itself, as long as $D>2K_S+K_O = \bigO{\frac{1}{\gamma\delta'}}$, the algorithm's behaviour will not change, and thus, it will still solve the threshold eigenphase problem with bounded error. Use the $(D,3)$-truncation referred to in \Cref{item:more-spaces} of \Cref{thm:main-transducer}, and choose a sufficiently large $D=\Theta(\frac{1}{\gamma\delta'})$. Then since each of the $O(1/\gamma)$ calls to $O_\gamma(A)$ can be implemented using $O(1)$ controlled calls to $A$ and its adjoint, and $O(\log N)$ additional one- and two-qubit gates (\Cref{lem:A-impl}); and each of the $O(1/(\gamma\delta'))$ calls to $S^{\circ [D]}$ can be implemented using $O(1)$ controlled calls to $U$ and its adjoint and $O(\log D)=O(\log\frac{1}{\gamma\delta'})$ additional one- and two-qubit gates (\Cref{item:more-spaces}), the total resources needed for this bounded-error quantum algorithm are:
\begin{itemize}
    \item $O(\frac{1}{\gamma})$ controlled calls to $A$ and $A^\dagger$,
    \item $O(\frac{1}{\gamma\delta'})$ controlled calls to $U$ and $U^\dagger$,
    \item $O(\frac{1}{\gamma}\log N+\frac{1}{\gamma\delta'}\log \frac{1}{\gamma\delta'})$ additional gates.
\end{itemize}
Repeating $O(\log\frac{1}{\epsilon})$ times and taking the majority yields the desired algorithm ${\cal D}_{\epsilon,s,\delta',\gamma}(A,U)$. 
\end{proof}

We now describe an algorithm for max eigenphase estimation. It uses the threshold eigenphase
algorithm from \Cref{cor:threshold-eigenphase-alg} as a subroutine in an interval search. Apart from the threshold eigenphase subroutine, the algorithm is classical. At the start of round $r$, the algorithm has an interval $[\ell_r,u_r]$ of possible values for the
max eigenphase. Let $L_r=u_r-\ell_r$ be its length. The algorithm then uses the threshold subroutine at
\begin{align*}
    s_r=\ell_r+\frac{2L_r}{3},
    \qquad
    \delta'_r=\frac{L_r}{3}.
\end{align*}
Note that the threshold problem has a region $[s_r-\delta'_r,s_r]$ (in this case the middle third of the interval) where the algorithm has no guarantees: when the max eigenphase lies in this region, neither promise in \Cref{def:threshold} need hold, and either
outcome may occur. Therefore, outcome $1$ gives the updated interval
$[s_r-\delta'_r,u_r]$, whereas outcome $0$ gives $[\ell_r,s_r]$. These updated intervals have lengths $2L_r/3$, so each round shrinks the interval by a factor $2/3$, and the total number of rounds is $R=\ceil{\log_{3/2}(\pi/(4\delta))}=\bigO{\log\frac{1}{\delta}}$.

We say we have an error in round $r$ if the positive or negative case holds for the threshold eigenphase problem, but the outcome is incorrect (note that if neither promise holds, it does not matter what the outcome is, and the updated intervall will contain the max eigenphase). It is then clear that if no round has an error, the algorithm returns an accurate estimate, since the updated interval contains the maximal eigenphase in each round.
It remains to choose the per-round error budgets $\eps_r$, which must satisfy $\sum_{r=1}^R\eps_r\leq \frac{1}{3}$ so that, by a union bound, with constant probability no round errs. Since a call to ${\cal D}_{\eps,s,\delta',\gamma}$ costs a factor $\log\frac{1}{\eps}$ in all resources, and since the calls to $A$ -- unlike the calls to $U$ -- cost the same $\Theta(\frac{1}{\gamma}\log\frac{1}{\eps})$ in \emph{every} round, the error budget should be spread as evenly as possible over the rounds, subject to summability. We therefore set
\begin{align*}
    \eps_r=\frac{1}{5k_r^2},\qquad\mbox{where } k_r \defeq R-r+1
\end{align*}
is the number of rounds remaining after (and including) round $r$. In this way the final rounds -- which dominate the cost in calls to $U$ -- run at constant error, while $\sum_r\eps_r\leq\frac{\pi^2}{30}<\frac{1}{3}$, since $\sum_{r=1}^{\infty}\frac{1}{r^2}=\pi^2/6$.
The algorithm is stated in \Cref{alg:max-eigenphase-search}, and one interval update is illustrated in \Cref{fig:interval-update}.

\begin{algorithm}[H]
\caption{Interval search for maximum-eigenphase estimation.}
\label{alg:max-eigenphase-search}
\begin{algorithmic}[1]
    \Require $U,A,\gamma,\delta$ satisfying the assumptions of \Cref{def:estimation}
    \Ensure An estimate $\hat\theta$ of $\theta_{\max}$
    \State $(\ell_1,u_1) \gets (0,\pi/2)$
    \State $R \gets \ceil{\log_{3/2}(\pi/(4\delta))}$
    \For{$r=1,2,\ldots,R$}
        \State $L_r \gets u_r-\ell_r$
        \State $s_r \gets \ell_r+2L_r/3$;\quad $\delta'_r \gets L_r/3$
        \State $k_r \gets R-r+1$;\quad $\epsilon_r \gets 1/(5k_r^2)$
        \State $b \gets \mathcal{D}_{\epsilon_r,s_r,\delta'_r,\gamma}(A,U)$
        \If{$b=1$}
            \State $(\ell_{r+1},u_{r+1}) \gets (s_r-\delta'_r,u_r)$
        \Else
            \State $(\ell_{r+1},u_{r+1}) \gets (\ell_r,s_r)$
        \EndIf
    \EndFor
    \State \Return $\hat\theta \gets (\ell_{R+1}+u_{R+1})/2$
\end{algorithmic}
\end{algorithm}

\begin{figure}[H]
\centering
\begin{tikzpicture}[
    x=0.9cm,
    y=1cm,
    current/.style={line width=1.1pt},
    kept/.style={line width=2.2pt},
    discarded/.style={line width=1pt,densely dashed,black!35},
    guide/.style={densely dotted,black!40},
    endpoint/.style={circle,fill=black,inner sep=1.5pt},
    intervalbrace/.style={
        decorate,
        decoration={brace,amplitude=4pt,mirror}
    },
    every node/.style={font=\small}
]

\def\xell{0}
\def\xupper{14}

\pgfmathsetmacro{\xcut}{\xell+(\xupper-\xell)/3}
\pgfmathsetmacro{\xs}{\xell+2*(\xupper-\xell)/3}

\pgfmathsetmacro{\xnegative}{(\xell+\xcut)/2}
\pgfmathsetmacro{\xeither}{(\xcut+\xs)/2}
\pgfmathsetmacro{\xpositive}{(\xs+\xupper)/2}

\foreach \x in {\xell,\xcut,\xs,\xupper}
    \draw[guide] (\x,-0.72) -- (\x,3.65);

\fill[black!14] (\xcut,2.82) rectangle (\xs,3.18);
\draw[current] (\xell,3) -- (\xupper,3);

\foreach \x in {\xell,\xcut,\xs,\xupper}
    \node[endpoint] at (\x,3) {};

\node[anchor=east] at (-0.4,3) {round $r$};

\node[anchor=south west] at (\xell,3.22)
    {$\ell_r$};

\node[anchor=south,align=center] at (\xcut,3.22)
    {$s_r-\delta'_r$\\[-1pt]
     \scriptsize $=\ell_r+L_r/3$};

\node[anchor=south,align=center] at (\xs,3.22)
    {$s_r$\\[-1pt]
     \scriptsize $=\ell_r+2L_r/3$};

\node[anchor=south east] at (\xupper,3.22)
    {$u_r$};

\node[anchor=north,align=center,font=\scriptsize]
    at (\xnegative,2.77)
    {negative case};

\node[anchor=north,align=center,font=\scriptsize]
    at (\xeither,2.77)
    {either output};

\node[anchor=north,align=center,font=\scriptsize]
    at (\xpositive,2.77)
    {positive case};

\draw[discarded] (\xell,1.45) -- (\xcut,1.45);
\draw[kept]      (\xcut,1.45) -- (\xupper,1.45);

\node[endpoint] at (\xcut,1.45) {};
\node[endpoint] at (\xupper,1.45) {};

\node[anchor=east] at (-0.4,1.45)
    {if $b=1$};

\node[anchor=west,xshift=2pt] at (\xcut,1.72)
    {$[\ell_{r+1},u_{r+1}]
      =[s_r-\delta'_r,u_r]$};


\draw[kept]      (\xell,-0.05) -- (\xs,-0.05);
\draw[discarded] (\xs,-0.05) -- (\xupper,-0.05);

\node[endpoint] at (\xell,-0.05) {};
\node[endpoint] at (\xs,-0.05) {};

\node[anchor=east] at (-0.4,-0.05)
    {if $b=0$};

\node[anchor=west,xshift=2pt] at (\xell,0.22)
    {$[\ell_{r+1},u_{r+1}]
      =[\ell_r,s_r]$};


\end{tikzpicture}
\caption{Updating the interval in one search round.}
\label{fig:interval-update}
\end{figure}

\begin{theorem}\label{thm:main}
    The algorithm in \Cref{alg:max-eigenphase-search} solves the max eigenphase estimation problem (\Cref{def:estimation}) with bounded error using:
    \begin{center}
        $\bigO{\frac{1}{\gamma} \log \frac{1}{\delta} \log\log\frac{1}{\delta}}$ controlled calls to $A$, and $A^\dagger$, $\qquad\bigO{\frac{1}{\gamma \delta}}$ controlled calls to $U$, and $U^\dagger$, 
    \end{center}
    and $\bigO{\frac{1}{\gamma \delta} \log \frac{1}{\gamma \delta}+\frac{1}{\gamma}\log N \log \frac{1}{\delta}\log\log\frac{1}{\delta}}$ one- and two-qubit gates and classical operations.
\end{theorem}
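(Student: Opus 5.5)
The proof has two parts: correctness of the interval search, and summing the per-round resource costs from \Cref{cor:threshold-eigenphase-alg}.

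\textbf{Correctness.} We prove by induction on $r$ that, if no round up to $r-1$ has an error, then $\theta_{\max}\in[\ell_r,u_r]$ and $L_r=(2/3)^{r-1}\pi/2$. The base case holds by the window promise. For the inductive step we check that each update keeps $\theta_{\max}$:
\begin{itemize}
\item If $\theta_{\max}>s_r$, then $\norm{\Pi_{>s_r}\ket{\psi}}\geq|\alpha_0|\geq\gamma$ by the advice promise, so we are in the positive case. An error-free round outputs $1$, and $[s_r-\delta'_r,u_r]$ contains $\theta_{\max}$.
\item If $\theta_{\max}\leq s_r-\delta'_r$, then every $x_k$ is at most $s_r-\delta'_r$, so $\Pi_{>s_r-\delta'_r}\ket\psi=0$. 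This is the negative case, an error-free round outputs $0$, and the update is correct.
\item If $\theta_{\max}$ lies in the middle third, both possible updated intervals contain it.
\end{itemize}
After $R$ rounds, $L_{R+1}=(2/3)^R\pi/2\leq 2\delta$ by the choice of $R$, so the midpoint is within $\delta$ of $\theta_{\max}$. By a union bound, the failure probability is at most $\sum_r\eps_r\leq\pi^2/30<1/3$.

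There is one technical point to check. \Cref{cor:threshold-eigenphase-alg} requires $\delta'_r\in(0,s_r)$ and $s_r\in(0,\pi/2]$. We have $s_r-\delta'_r=\ell_r+L_r/3>0$ whenever $\ell_r\geq0$, and $s_r\leq u_r\leq\pi/2$.

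\textbf{Calls to $U$.} Round $r$ costs $O(\frac{1}{\gamma\delta'_r}\log\frac1{\eps_r})$ calls, with $\delta'_r=L_r/3=\Theta((2/3)^{r})$ and $\log(1/\eps_r)=O(1+\log k_r)$. Index by $k=k_r=R-r+1$. Since $(3/2)^{R}=\Theta(1/\delta)$, we have $1/\delta'_r=\Theta(\delta^{-1}(2/3)^{k})$. The total is therefore
\[
O\Big(\frac{1}{\gamma\delta}\sum_{k\ge1}(2/3)^{k}(1+\log k)\Big)=O\Big(\frac1{\gamma\delta}\Big),
\]
because the series converges. This reindexing is the main step: it is why the error budgets are allocated by remaining rounds rather than uniformly.

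\textbf{Calls to $A$.} Each round costs $O(\frac1\gamma\log\frac1{\eps_r})=O(\frac1\gamma\log k_r)$. Summing over $k\le R$ gives $O(\frac1\gamma R\log R)=O(\frac1\gamma\log\frac1\delta\log\log\frac1\delta)$.

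\textbf{Gates.} The per-round gate count is $O((\frac1{\gamma\delta'_r}\log\frac1{\gamma\delta'_r}+\frac1\gamma\log N)\log\frac1{\eps_r})$.
\begin{itemize}
\item For the first term, bound $\log\frac1{\gamma\delta'_r}\le\log\frac{1}{\gamma\delta}+O(1)$. The same geometric sum as for $U$ then gives $O(\frac1{\gamma\delta}\log\frac1{\gamma\delta})$.
\item The second term sums, as for the $A$ calls, to $O(\frac1\gamma\log N\log\frac1\delta\log\log\frac1\delta)$.
\end{itemize}
The classical bookkeeping is $O(R)$ operations per run and is dominated by these terms. Together these give the bounds stated in the theorem.
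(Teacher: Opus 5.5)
Your proposal is correct and follows essentially the same route as the paper. Both arguments keep the invariant $\theta_{\max}\in[\ell_r,u_r]$, show $L_{R+1}\le 2\delta$, take a union bound with $\sum_r\eps_r\le\pi^2/30<1/3$, and bound the $U$, $A$ and gate counts by reindexing by the number of remaining rounds so that the $U$-cost becomes a convergent geometric series and the $A$-cost becomes $O(R\log R)$. Your write-up also makes explicit two checks the paper leaves implicit: that each of the three cases for $\theta_{\max}$ falls under the right promise of the threshold problem, and that the parameters $s_r,\delta'_r$ lie in the ranges required by \Cref{cor:threshold-eigenphase-alg}.
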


\begin{proof}
    The algorithm maintains a decreasing interval $[\ell_r,u_r]\subseteq[0,\pi/2]$ with the invariant $\theta_{\max}\in [\ell_r,u_r]$ holding except with small probability.

At each round, the size of the interval decreases by a factor $2/3$, so $L_r=\frac{\pi}{2}(\frac{2}{3})^{r-1}$. In particular, 
    $$L_{R+1}=\frac{\pi}{2}\left(\frac{2}{3}\right)^{R}\leq \frac{\pi}{2}\left(\frac{3}{2}\right)^{-\log_{3/2}(\pi/(4\delta))}=\frac{\pi}{2}\frac{4\delta}{\pi}=2\delta.$$
    Thus the center point, which we output, is within $\delta$ of $\theta_{\max}$, as long as $\theta_{\max}$ remains in the interval, which it does if every call to ${\cal D}$ is correct. As already argued the probability that any call to ${\cal D}$ errs is upper bounded by $1/3$. This establishes that the algorithm is correct with bounded error $1/3$.

    We now analyze the complexity of the algorithm. 
    The algorithm terminates after $R = \bigO{\log\frac{1}{\delta}}$ rounds.
    Each round of the computation uses one call to ${\cal D}$, which, by \Cref{cor:threshold-eigenphase-alg}, uses $O(\frac{1}{\delta_r'\gamma}\log\frac{1}{\epsilon_r})$ controlled calls to $U$ and $U^\dagger$, and $O(\frac{1}{\gamma}\log\frac{1}{\epsilon_r})$ controlled calls to $A$ and $A^\dagger$.
    Using $L_r=\frac{\pi}{2}(2/3)^{r-1}=L_R(3/2)^{R-r}$ and $L_R\geq \delta/3$, note that:
    \begin{equation}\label{eq:calls-to-U}\begin{split}
        \sum_{r=1}^{R}\frac{1}{\delta_r'}\log\frac{1}{\epsilon_r}
    &=\sum_{r=1}^{R}\frac{3}{L_r}\log(5(R-r+1)^2)
    =\bigO{\sum_{r=1}^R\frac{1}{L_R} (2/3)^{R-r}\log(R-r+1)}\\
    &=\bigO{\frac{1}{L_R}\sum_{r=1}^R(2/3)^r\log r}=\bigO{\frac{1}{L_R}}=\bigO{\frac{1}{\delta}}.
    \end{split}
    \end{equation}
    The total number of controlled calls to $U$ and $U^\dagger$ is then $\bigO{\frac{1}{\gamma\delta}}$, as claimed.

    Next, note that:
   \begin{equation}\label{eq:calls-to-A}
   \begin{split}
        \sum_{r=1}^R \log\frac{1}{\epsilon_r} &= \bigO{\sum_{r=1}^R \log(R-r+1)} = \bigO{ \sum_{r=1}^R \log(r)}\\
        &= \bigO{R \log R } = \bigO{ \log \frac{1}{\delta}\log\log\frac{1}{\delta}}.
    \end{split}
    \end{equation}
    The number of calls to $A$ and $A^\dagger$ is thus 
$\bigO{\frac{1}{\gamma} \log \frac{1}{\delta}\log\log\frac{1}{\delta}}$.
    By \cref{cor:threshold-eigenphase-alg}, in the $r$-th round, ${\cal D}$ uses up to $\bigO{(\frac{1}{\gamma \delta_r'}\log \frac{1}{\delta_r'\gamma} +\frac{1}{\gamma}\log N) \log \frac{1}{\eps_r}}$ elementary quantum gates, and this also upper bounds the classical operations in each loop iteration. Summing these costs gives:
    \begin{align*}
        \sum_{r=1}^R \bigO{\Big(\frac{1}{\gamma \delta_r'}\log \frac{1}{\delta_r'\gamma} +\frac{1}{\gamma}\log N\Big) \log \frac{1}{\eps_r}} &=  \bigO{\frac{1}{\gamma}\log \frac{1}{\delta \gamma} \sum_r \frac{1}{ \delta_r'}\log \frac{1}{\eps_r} } + \bigO{\frac{1}{\gamma}\log N \sum_{r=1}^R \log \frac{1}{\eps_r}}.
    \end{align*}
    The first term is $\bigO{\frac{1}{\delta\gamma}\log\frac{1}{\delta\gamma}}$ by \Cref{eq:calls-to-U}. The second term is $ \bigO{\frac{1}{\gamma}\log N \log \frac{1}{\delta}\log\log\frac{1}{\delta}}$, by \Cref{eq:calls-to-A}, which gives the claimed complexity.
\end{proof}

\bibliographystyle{alpha}
\bibliography{refs}

\end{document}